\documentclass[aps,prl,superscriptaddress,twocolumn,preprintnumbers,floatfix]{revtex4-2}
\pdfoutput=1
\usepackage{mathtools,microtype}
\usepackage{amsthm}
\usepackage{amsfonts,amssymb }
\usepackage{graphicx} 
\usepackage{physics}
\usepackage{xcolor}
\usepackage{dsfont}
\usepackage{bm}
\usepackage{comment}
\usepackage[hidelinks]{hyperref}

\begin{document}
\preprint{MIT-CTP/6114}
\newcommand{\E}[0]{\mathop{{}\mathbb{E}}}
\newcommand{\Pro}[0]{\mathop{{}\mathbb{P}}}

\title{Emergent classicality and wavefunction branching \\ in an  isolated quantum
many-body system}
\author{Sa\'ul Pilatowsky-Cameo}
\email{saulpila@mit.edu}
\affiliation{Center for Theoretical Physics --- a Leinweber Institute, Massachusetts Institute of Technology, Cambridge, MA 02139, USA}

\author{Jordan Cotler}
\email{jcotler@fas.harvard.edu}
\affiliation{Department of Physics, Harvard University, Cambridge, MA 02138, USA}

\author{Daniel Ranard}
\email{dranard@caltech.edu}
\affiliation{Department of Physics, California Institute of Technology, Pasadena, CA 91125, USA}

\author{C. Jess Riedel}
\email{jessriedel@gmail.com}
\affiliation{NTT Research, Inc., Physics \& Informatics Laboratories, Sunnyvale, CA 94085, USA}

\begin{abstract}

Decoherence in quantum systems is conventionally modeled as the effect of interactions with an external environment.  However, such a prescription excludes isolated many-body systems, which are also expected to display classical behavior at macroscopic scales. In isolated systems, decoherence must emerge internally from microscopic degrees of freedom that are invisible to the macroscopic description. Here we explicitly show that classicality can  emerge in such a fashion. We consider a weakly disordered, $3$-local chaotic kicked top of $N$ qubits, where the collective spin sector serves as the macroscopic description, while the microscopic permutation sector acts as an internal bath, decohering the collective spin sector.  Starting from closed unitary dynamics, we derive and numerically confirm an effective Lindblad equation for the collective spin variables.  In the thermodynamic limit these reduced dynamics converge to a classical chaotic Fokker--Planck equation with  vanishingly small diffusion on the spherical phase space, producing a quantum-classical correspondence beyond the Ehrenfest time. The chaotic dynamics evolve the pure many-body wavefunction into continuously branching components associated with distinct classical trajectories. These branches acquire nearly orthogonal microscopic records in the permutation sector, preventing quantum interferences and ensuring the corresponding histories remain consistent.
\end{abstract}

\newtheorem{theorem}{Theorem}

    \newtheorem{corollary}{Corollary}
    \newtheorem{lemma}{Lemma}
    \newtheorem{prop}{Proposition}
    \newtheorem{conjecture}{Conjecture}
\theoremstyle{definition}
  \newtheorem{definition}{Definition}

\maketitle

\begin{figure*}
    \centering \includegraphics[width=0.9\textwidth]{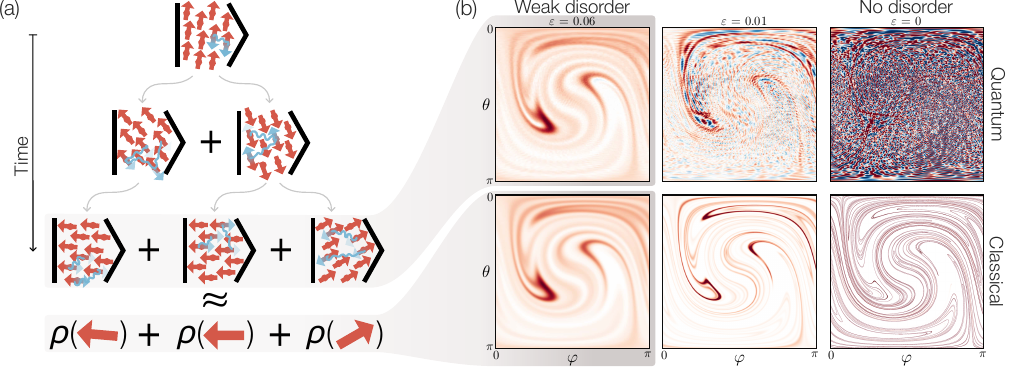}
    \caption{(a) \textit{Emergence of decoherence in an isolated many-body system of spins}. The system is initialized with all spins pointing in the same direction, with the exception of a few particles coupled in singlets (blue). The disorder in the Hamiltonian scrambles the singlets throughout, leading to microscopically distinct branches that cannot interfere in collective observables, so their superposition is indistinguishable from a classical mixture. (b)  \textit{Emergence of classicality}. Wigner function of the quantum evolution in the collective spin sector (top) and classical evolution according to the classical dissipative equation \eqref{eq:FokkerPlanck} (bottom). In the presence of a single realization of a small amount of disorder $\varepsilon$, the quantum and classical evolutions are identical, with no regions of negative probability (blue) ($k=4$, $N=200$, $\kappa=4.2$, $p=\pi/2$, $\theta_0=0.5$, $\varphi_0=1.1$, $t=8$). }
    \label{fig:1}
\end{figure*}

The theory of decoherence~\cite{Zurek2003, JoosZeh1985, Schlosshauer2005, Schlosshauer2019} explains why quantum effects do not manifest at macroscopic scales. A large quantum system inevitably entangles with its environment, which delocalizes phase information, suppresses interferences, and enables effectively classical behavior. Decoherence has been studied from many complementary viewpoints, including idealized bath models in Lindblad dynamics~\cite{Lindblad1976, GKS1976, BreuerPetruccione2002} with analyses of the threshold system-bath coupling necessary for quantum-classical correspondence \cite{toscano2005decoherence,wisniacki2009scaling,hernandez2025ehrenfests,hernandez2024classical,hernandez2025threshold}, a purified closed-system perspective where the system and environment  branch into superpositions of macroscopically distinct states~\cite{Everett1957,Zeh1970,BlumeKohout2006,Wallace2012,Saunders2010}, and the proliferation of redundant information in the environment via quantum Darwinism~\cite{Zurek2009, Ollivier2004, Brandao2015}.
Despite their conceptual differences, these approaches share the same basic premise: decoherence arises because the system couples to a distinct, external environment.

This premise is not fully satisfactory, since it excludes isolated systems, which exist not only as a matter of principle (e.g.,~a closed universe), but also in practice as enabled by modern experimental capabilities~\cite{Trotzky2012,Kaufman2016,Bernien2017}. A complete account of the emergence of classicality should explain how an isolated system can decohere absent an external environment.  
While one may treat an explicit system and bath as a jointly  ``isolated system'', nature does not always equip us with explicit system-bath decompositions; 
instead, one hopes to identify bath variables that emerge from the local interactions within the system.  A natural parallel exists in quantum thermalization~\cite{Deutsch1991,Srednicki1994,Rigol2008,Polkovnikov2011,Gogolin2016,Borgonovi2016}, where an isolated system can reach thermal equilibrium by acting as {\it its own bath}: even though the global state remains pure, subsystems effectively thermalize by becoming highly entangled with their complements. 
We ask if classicality may arise through a similarly intrinsic mechanism, where internal, perhaps hidden degrees of freedom play the role of a decohering environment.

In this work, we present a simple many-body model that exhibits decoherence from its internal degrees of freedom, giving rise to classical behavior without coupling to any external environment. Going beyond previous works in few-body systems~\cite{hillery2005quantum, brun2016decoherence}, we study a system of $N$ qubits evolving under a chaotic (kicked-top) Hamiltonian~\cite{Haake1987,FoxElston1994,Chaudhury2009,Haake2010,Anand2024,Seguralanda2025,Duque2026},  perturbed by weak disorder. The permutation degrees of freedom act as an intrinsic bath, decohering the collective spin sector, and 
 we analytically obtain, and numerically confirm, an effective Lindblad equation for the spin sector in the thermodynamic limit.  In contrast with previous work deriving reduced dynamics in isolated discrete systems~\cite{Ates2012,Ji2013,Nation2020,Mazza2021,Carnazza2022,Cemin2024,odonovan2025quantum}, here the classical variables have extensive support (many-body), are continuous, and possess a classical limit governed by a chaotic Hamiltonian.
This enables us to leverage recent techniques on the classical-quantum correspondence in open systems~\cite{hernandez2025ehrenfests, hernandez2024classical, hernandez2025threshold} to derive the emergence of classical behavior in the isolated model. We remark that, while there is an extensive body of work studying the quantum-classical correspondence in isolated quantum systems~\cite{Haake2010} where classicality is maintained up to the {\it Ehrenfest time}~\cite{Chirikov1988}, the  Ehrenfest time is prohibitively short (scaling only logarithmically with $\hbar_{\rm eff}^{-1}\sim N$). The emergent classicality we describe here is maintained beyond the Ehrenfest time.

Having established quasiclassical dynamics for the collective spin variable, we describe the branching of the overall wavefunction with the consistent histories formalism~\cite{Griffiths1984,Omnes1992,GellMannHartle1990,Halliwell1995,strasberg2023classicality,strasberg2024first,wang2025decoherence}. \emph{Histories} are sequences of projections, forming discrete-time trajectories of the collective variable, to which one would like to assign well-behaved classical probabilities. Histories are \emph{consistent} when their associated conditional states, called  branches~\cite{riedel2017classical,weingarten2022macroscopic,weingarten2025vacuum,taylor2025wavefunction,riedel2025wavefunction},  are orthogonal, and therefore admit probabilities obeying the Kolmogorov axioms.  We explicitly construct the branches in this disordered kicked-top model and numerically verify that the associated histories are indeed consistent. This is enabled by the strong distinguishability of the microscopic arrangement of spins in each branch, which is invisible to collective measurements (see Fig.~\ref{fig:1}). 

{\it Decoherence from microscopic degrees of freedom.}--- 
The Hilbert space of $N$ qubits has dimension $2^N$, but a macroscopic description typically involves only coarse-grained observables. In spin systems, natural examples are collective spin observables, such as the total magnetization in different directions.
 By Schur--Weyl duality~\cite{GoodmanWallach2009,BaconChuangHarrow2006}, the total Hilbert space splits into a direct sum of subspaces $\mathcal{H}_S=\mathcal{S}\otimes\mathcal{P}$, where $\mathcal{S}=\mathcal{S}(S)$ is the collective spin sector (an irreducible representation of $\mathrm{SU}(2)$ of dimension $d_{\mathcal{S}}=2S+1$) on which collective spin operators  $(\hat S_x,\hat S_y,\hat S_z)=\frac{1}{2}\sum_{i=1}^N (\hat X_i,\hat Y_i,\hat Z_i)$ \footnote{$\hat X_i$, $\hat Y_i$, and $\hat Z_i$ are the Pauli operators acting on spin $i$.}   act, and $\mathcal{P}=\mathcal{P}(S)$ is the $\mathrm{SU}(2)$-invariant permutation space of dimension $d_{\mathcal{P}}=\binom{N}{N/2-S}-\binom{N}{N/2-S-1}$. Crucially, outside the fully symmetric sector $S_{\max}=N/2$, the multiplicity space can become large. For example, if $S=N/2 -j$, then $d_{\mathcal P}\sim N^{j}$ while $d_{\mathcal S}\sim N$, so the overwhelming majority of the degrees of freedom are invisible to collective measurements. We will show that these internal degrees of freedom can serve as an effective environment for the collective sector, leading to decoherence. This system-bath decomposition is emergent in the sense that it is not a partition of locally interacting subsystems, but rather arises due to symmetry and hence is delocalized with respect to the microscopic interactions.

For simplicity, we consider an initial state contained entirely in one of the spin subspaces $\mathcal{H}_S$: $N-k$ qubits pointing in the same direction $\ket{\theta_0, \varphi_0}=\cos(\frac{\theta_0}{2})\ket{0}+e^{i\varphi_0}\sin(\frac{\theta_0}{2})\ket{1}$, while the remaining $k$ spins are paired in singlets (for even $k\geq 2$): \begin{equation}
\label{eq:initial_state}
    \ket{\psi(0)}= \ket{\theta_0, \varphi_0}^{\otimes (N-k)}\otimes \Big[\tfrac{1}{\sqrt{2}}(\ket{01}-\ket{10})\Big]^{\otimes k/2}.
\end{equation} If we choose $k$ to be small compared to $N$, then $S=(N-k)/2$ is close to the total spin of the top sector $S_{\max}=N/2$, and the  permutation sector has dimension $d_{\mathcal{P}}\sim N^{k/2}/(k/2)!$, which is polynomially larger than $d_{\mathcal{S}}\sim N$.

{\it Weakly disordered kicked top.}--- We consider a chaotic Hamiltonian describing the dynamics of the spins. For simplicity, we consider the paradigmatic kicked top  
\begin{equation}\label{eq:quantum-kicked-top}
\hat H_{\mathrm{KT}}(t)=\frac{\kappa}{2S+1} \hat S_x^2  + \sum_{n=1}^\infty p\,\delta(t- n)\hat S_z\,,
\end{equation}
but we expect our results to apply to a broader class of chaotic systems. 
Since $\hat H_{\mathrm{KT}}$ is permutation symmetric, it only acts on the spin sector $\mathcal{S}$. To observe decoherence, we will break the permutation symmetry.

A simple way to break this symmetry would be to introduce weak on-site disorder of the form $\varepsilon \sum_i c_i \hat Z_i$, with random coefficients $c_i$. This form of disorder couples $\mathcal{S}$ and $\mathcal{P}$, as it is neither permutation symmetric nor $\mathrm{SU}(2)$ invariant,  but it also couples different total spin subspaces, since it does not commute with the total spin.

For simplicity of the analysis and to greatly decrease the computational cost of simulation, we instead consider the following form of disorder that commutes with the total spin:
\begin{align} 
\label{eq:kickedtop}
    \hat H_{\textrm{dis}}&= \hat K_x \hat S_x + \hat K_y \hat S_y +\hat K_z \hat S_z\,,
    \end{align} 
from which we define the weakly disordered kicked top:
\begin{equation}
\label{equation:model}
    \hat H(t)=\hat H_{\mathrm{KT}}(t) + \varepsilon \hat H_{\mathrm{dis}}.
\end{equation} 
For each $\mu = x,y,z$, we take $\hat K_\mu$ to be an independent random all-to-all Heisenberg coupling, $\hat K_\mu = \sum_{i<j} c_{ij}^{(\mu)} \hat h_{ij}$, where $\hat h_{ij} \coloneqq \hat X_i \hat X_j+\hat Y_i \hat Y_j+\hat Z_i \hat Z_j$. The coefficients $c_{ij}^{(\mu)}$ are Gaussian with variance $\sim (4Nk)^{-1}$, with their mean component subtracted so that $\sum_{i<j} c_{ij}^{(\mu)} = 0$ for each $\mu$. This normalization keeps the spectral width of each $\hat K_\mu$ bounded~\cite{SM}.
 Because the Heisenberg couplings are $\mathrm{SU}(2)$ invariant, the operators $\hat{K}_\mu$ act only in $\mathcal{P}$, while the spin operators $\hat{S}_\mu$ act only in $\mathcal{S}$, so $\hat{H}_{\mathrm{dis}}$ naturally couples $\mathcal{S}$ and $\mathcal{P}$ within the subspace of total spin $S$. The ensemble of random disorder terms $\hat H_{\mathrm{dis}}$ is invariant under physical rotations (isotropic).

\textit{Effective Lindblad dynamics.}--- We begin by showing that $\mathcal{P}$  acts as an effective Markovian bath for the dynamics in $\mathcal{S}$. We first show numerically that the evolution of the reduced density matrix $\psi_{\mathcal{S}}(t)\coloneqq \tr_{\mathcal{P}} \dyad{\psi(t)}$ is well described by a Lindblad equation
\begin{equation}
\label{eq:Lindblad}
            \frac{\partial \rho}{\partial t} = -i [\hat H_{\mathrm{KT}}(t),\rho] +  \gamma \sum_{{\mathclap{\mu=x,y,z}}}\hat{\mathcal{D}}_\mu[\rho]\,,
\end{equation}
with dissipator $\hat{\mathcal{D}}_\mu[\rho]= 
-\frac{1}{2}[\hat S_\mu,[\hat S_\mu,\rho]]$ \footnote{Using that the $\hat S_\mu$ are Hermitian, the dissipator can be brought to the more familiar form $\hat{\mathcal{D}}_\mu[\rho]=\hat S_\mu \rho \hat S_\mu^\dagger -\frac{1}{2}( \hat S_\mu^\dagger \hat S_\mu \rho  + \rho \hat S_\mu^\dagger \hat S_\mu)$.}.
\begin{figure}[]
    \centering
    \includegraphics[width=1\linewidth]{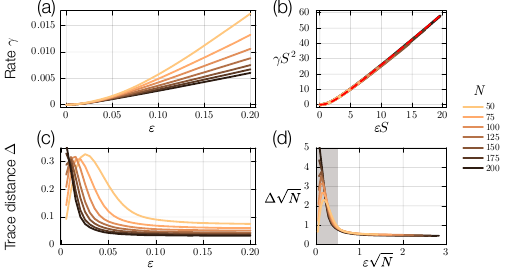}
    \caption{Effective Lindblad dynamics. (a) Optimized rate $\gamma$, obtained by minimizing the average trace distance $\Delta$ to the exact quantum dynamics. (b)   $S^2\gamma$ against $\varepsilon S$:  empirical rates $\gamma$ collapse to the predicted  $S^2\gamma=\varepsilon SJ_*(\varepsilon S)$ (dashed). (c) Time-averaged trace distance for optimal $\gamma$, for time $t\in [2,8]$. (d) Collapse of the trace-distance curves obtained by multiplying $\varepsilon$ and $\Delta$ by $\sqrt{N}$. Gray: Non-Lindblad regime (large $\Delta$). All plots are averaged over 20 disorder realizations; the fitting procedure is described in the SM~\cite{SM} (Fig.~S3) ($k=4$, $\kappa=4.2$, $p=\pi/2$, $\theta_0=0.5$, $\varphi_0=1.1$).} 
    \label{fig:2}
\end{figure}

In Fig.~\ref{fig:2} we show an empirical value of $\gamma$ obtained numerically, by minimizing the finite-time-averaged trace distance $\Delta=\frac{1}{2}\E_t\norm{\psi_\mathcal{S}(t) - \rho(t)}_1$ between the actual evolution $\psi_\mathcal{S}(t)$ and the Lindblad approximation $\rho(t)$. For disorder strength $\varepsilon \gtrsim 1/\sqrt{N}$, the reduced dynamics agree well with the Lindblad approximation: the error $\Delta$ decays as $\sim 1/\sqrt{N}$, and the fitted rate satisfies the empirical scaling $\gamma \propto \varepsilon/N$. Via a heuristic derivation of $\gamma$ and obtain~\cite{SM}
\begin{equation}
\label{eq:formofgamma}
    \gamma=\frac{\varepsilon}{S}J_*(\varepsilon S)\sim \frac{\varepsilon}{S}\,{O\!}\left( \log(\varepsilon S)^\alpha\right),
\end{equation}
where $J_*$ is the finite-time integral of a disorder-averaged effective bath correlation function at infinite temperature, evaluated up to a time proportional to $\varepsilon S$. Fig.~\ref{fig:2}(b) shows excellent agreement between the empirical rate $\gamma$ and our derived Eq.~\eqref{eq:formofgamma}.

For large $\tau \coloneqq \varepsilon S$, we numerically see that the correlator has a logarithmic-type divergence  $J_*(\tau)=O(\log(\tau)^\alpha)$  for a certain  $0<\alpha\leq1$. This logarithmic factor is characteristic of choosing the bath operators $\hat K_\mu$ to be random $2$-local Heisenberg couplings; in this construction, $\hat H_{\rm dis}$ is then $3$-local in the microscopic spins. The logarithm is absent for analogous disorder ensembles built from higher-locality operators, such as $4$-local disorder or GOE random matrices~\cite{SM}. Nevertheless, this logarithmic growth does not obstruct Markovian behavior, since $\tau$ is independent of the physical evolution time $t$.

{\it Emergent classicality.}--- The effective Lindblad dynamics lead to the emergence of classical equations of motion for the collective spin observables. Specifically, consider the classical diffusion equation obtained by treating the collective spin operators as if they were continuous commuting variables on the sphere,  $\hat{\bm S}=(\hat S_x,\hat S_y,\hat S_z) \to {\bm S}=( S_x, S_y, S_z)\in \mathbb{R}^3$, and replacing the commutator with a Poisson bracket, $-i[\,\cdot\,,\,\cdot\,] \;\longrightarrow\; \{\,\cdot\,,\,\cdot\,\}$ \footnote{ For spherical variables, the Poisson bracket is defined via the triple product $\{f,\,g\} = \mathbf{S}\cdot\!\left(\frac{\partial f}{\partial \mathbf{S}}\times\frac{\partial g}{\partial \mathbf{S}}\right)$.}, namely
\begin{equation}
\label{eq:FokkerPlanck}
    \frac{\partial f}{\partial t} = \{H_{\mathrm{KT}}(t),f\} + \gamma \sum_{\mathclap{\mu=x,y,z}}\,\mathcal{D}_\mu[f]\,.
\end{equation}
Here, $f(t,\bm S)\geq 0$ represents a classical distribution on the sphere, while the classical Hamiltonian  is given by~\footnote{The replacement $(2S+1)\to 2S$ provides more accurate classical correspondence. The underlying technical reason is the non-multiplicativity of the Weyl transform for spin variables: $\mathcal{W}(\hat S_x^2)/(2S+1)\approx \mathcal{W}(\hat S_x)^2/2S$~\cite{Varilly1989}.}
\begin{equation}
    H_{\mathrm{KT}}(t) = \frac{\kappa }{2S}S_x^2 \;+\; \smash{\sum_{n=1}^{\infty}}\, p\, S_z\delta(t-n)
\end{equation}
 and the classical dissipator by $
    {\mathcal{D}}_\mu[ f]= 
\frac{1}{2}\{ S_\mu,\{ S_\mu, f\}\}$.

In Fig.~\ref{fig:1}(b), we compare the classical dynamics of Eq.~\eqref{eq:FokkerPlanck} with the Wigner function of $\ket{\psi}_{\mathcal{S}}$~\cite{Varilly1989}. In the absence of disorder the quantum dynamics develop regions of negative probability, a signature of non-classicality~\cite{Kenfack2004}. However, introducing only a small amount of disorder [$\varepsilon=0.06$ in Fig.~\ref{fig:1}(b)]  in Eq.~\eqref{equation:model} leads to a positive Wigner function, with perfect quantum-classical correspondence. See the Supplemental Material (SM)~\cite{SM} for additional figures. 

This emergence of classicality can be asymptotically established rigorously, starting from the Lindblad dynamics. Under  sufficiently strong diffusion (but still vanishing in the thermodynamic limit), the dynamics remain in a mixture of coherent states in $\mathcal{S}$, associated with an asymptotically positive  phase-space distribution.

\begin{theorem}
\label{th:01}
If $\gamma\geq \gamma_*\coloneqq \frac{\abs{\kappa}}{2S+1}$, under the Lindblad equation~\eqref{eq:Lindblad}, starting in a spin coherent state, the dynamics remain a positive mixture of coherent states, 
\begin{equation}
    \rho(t)=\int \dd\Omega \,P(\Omega,t) \dyad{\Omega}, \qquad P(\Omega,t)\geq 0,
\end{equation}
where the integral is taken over the sphere.
Furthermore, the differential equation governing the evolution of $P$ matches Eq.~\eqref{eq:FokkerPlanck} up to a relative error $O(1/S)$ in the Hamiltonian terms and $O(\gamma_*/\gamma)$ in the dissipator.
\end{theorem}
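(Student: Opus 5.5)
The plan is to work in the Glauber--Sudarshan $P$-representation of spin coherent states and translate each term of the Lindblad generator \eqref{eq:Lindblad} into a differential operator acting on $P(\Omega,t)$. For spin coherent states $\dyad{\Omega}$ there are standard correspondence rules: $\hat S_\mu\dyad{\Omega}$ and $\dyad{\Omega}\hat S_\mu$ map to first-order differential operators on $P$ plus multiplication by $S\,n_\mu(\Omega)$, where $\bm n$ is the unit vector. Using these rules I will compute three operators. The rotation commutators $-i[\hat S_\mu,\cdot]$ act on $P$ exactly as the classical Liouville flow generated by $S_\mu$. The double commutator $-\tfrac12[\hat S_\mu,[\hat S_\mu,\cdot]]$ therefore becomes exactly $\tfrac12\{S_\mu,\{S_\mu,P\}\}$, with no correction, because it is the square of a rotation generator. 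So $\sum_\mu \hat{\mathcal D}_\mu$ maps to $\tfrac12$ times the Laplacian on the sphere, which is rotation-invariant. The kicks $e^{-ip\hat S_z}$ are rotations and map $P$ to a rigidly rotated $P$, which preserves positivity exactly.

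The only nontrivial term is the nonlinear twist $\frac{\kappa}{2S+1}[\hat S_x^2,\cdot]$. I will write $-i[\hat S_x^2,\rho]=-i(\hat S_x[\hat S_x,\rho]+[\hat S_x,\rho]\hat S_x)$ and apply the correspondence rules. The result should be a first-order drift equal to the classical flow of $\frac{\kappa}{2S}S_x^2$, up to relative $O(1/S)$ corrections; this is also where the $(2S+1)\to 2S$ replacement is absorbed. It also produces a second-order term of the form $-\frac{\kappa}{2S+1}\,c(\Omega)\,\partial_{\rm mixed}^2 P$ with an $O(1)$ coefficient. This anti-diffusive, indefinite second-order term is the only obstruction to positivity: a free quadratic Hamiltonian generates squeezing, which has no positive $P$-function.

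The key step, and the main obstacle, is to show that for $\gamma\ge\gamma_*=|\kappa|/(2S+1)$ the total second-order part, $\tfrac{\gamma}{2}\Delta_{S^2}$ plus the indefinite twist term, has a positive semidefinite diffusion matrix at every point of the sphere. First I will diagonalize the twist term in local coordinates at $\Omega$; I expect its symbol to be $\frac{\kappa}{2S+1}$ times a traceless symmetric matrix, whose eigenvalues are bounded by the same normalization as the Laplacian. Comparing eigenvalues then gives the threshold $\gamma\ge\gamma_*$. The delicate point is keeping the normalizations consistent, meaning the factors of $S$ relating $\{S_\mu,\{S_\mu,\cdot\}\}$ to derivatives in $\Omega$, so that the constant comes out exactly. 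Given a degenerate-elliptic generator with smooth coefficients on a compact manifold and no zeroth-order term, positivity of $P$ follows from the maximum principle, or equivalently from representing the evolution as the law of a diffusion process on the sphere. The initial coherent state gives $P=\delta_{\Omega_0}$, which is positive, and the rotation kicks preserve positivity. Hence $P(\Omega,t)\ge0$ for all $t$.

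Finally I will compare the resulting Fokker--Planck operator with \eqref{eq:FokkerPlanck}. The drift agrees with the flow of $H_{\rm KT}$ up to relative $O(1/S)$. The second-order part equals $\gamma\sum_\mu\mathcal D_\mu$ plus the twist term, whose size is at most $\gamma_*$ compared with $\gamma$, giving the relative error $O(\gamma_*/\gamma)$ in the dissipator.
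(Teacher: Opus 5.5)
Your proposal is correct and follows essentially the paper's route. The coherent-state $P$-dictionary maps rotations and double commutators exactly to Poisson brackets, while $-i[\hat S_x^2,\rho]$ yields $\{S_x^2,P\}$ plus a $1/S$ double-bracket correction; this turns the Lindblad equation into $\partial_t P=\{\widetilde H_{\mathrm{KT}},P\}+\tfrac12\sum_{\mu,\nu}\Lambda_{\mu\nu}\{S_\mu,\{S_\nu,P\}\}$, whose diffusion matrix is $\gamma\delta_{\mu\nu}$ plus a traceless anti-diffusive piece of size at most $|c|S$, with $c=\kappa/(S(2S+1))$. The paper checks positivity of the ambient $3\times3$ matrix (eigenvalues $\gamma$ and $\gamma\pm c\sqrt{S_y^2+S_z^2}$) rather than the tangent-plane symbol you propose, but both give exactly $\gamma\ge|c|S=\gamma_*$, and your maximum-principle argument for positivity is more explicit than the paper's brief assertion.
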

\begin{proof} 
The function $P(\Omega,t)$ is simply the Glauber--Sudarshan quasiprobability distribution for the spin \cite{Arecchi1972}. Any state can be represented in this form, but $P$ in general might not be positive. We show that, under sufficient diffusion, positivity is guaranteed. 

The evolution of $\rho$ can be described by the evolution of $P$ via the substitutions~\cite{SM}
\begin{align}
    -i[\hat S_z,\rho]
    &\to \{S_z,P\},\qquad
    -[\hat S_\mu,[\hat S_\mu,\rho]]
    \to\{S_\mu,\{S_\mu,P\}\} \nonumber
    \\
\label{eq:dict-square-comm}
    -i[\hat S_x^2,\rho]
    &\to
        \{S_x^2,P\}
        +     \frac{1}{S}\sum_{\nu,\lambda}
    \epsilon_{x\nu\lambda}
        \{S_x,\{S_\nu,S_\lambda P\}\},
\end{align}
where $\epsilon_{x\nu\lambda}$ is the Levi-Civita symbol. Equation~\eqref{eq:Lindblad} is recast into a differential equation resembling Eq.~\eqref{eq:FokkerPlanck}, 
\begin{equation}
\label{eq:actualPevol}
    \frac{\partial P}{\partial t} =    \{ \widetilde{H}_{\mathrm{KT}}(t), P\} + \frac{1}{2}\sum_{\mu,\nu} \Lambda_{\mu\nu}\{ S_\mu ,\{ S_\nu,P\}\},
\end{equation}
but with a slightly modified classical Hamiltonian $\widetilde{H}_{\mathrm{KT}}(t)={H}_{\mathrm{KT}}(t) + \frac{3c}{4}  S_x^2$   and diffusion matrix $\Lambda_{\mu\nu} = \gamma \delta_{\mu\nu} + c\sum_\lambda ( \delta_{x\mu } \epsilon_{x\nu \lambda} + \delta_{x\nu} \epsilon_{x\mu \lambda} ) S_\lambda$, where $c=\frac{\kappa}{S(2S+1)}$. 
Equation~\eqref{eq:actualPevol} exactly describes the quantum evolution, where the distribution $P$ might become negative and singular due to the negative eigenvalues of the diffusion matrix $\Lambda_{\mu\nu}$.  However, if $\gamma \geq \abs{c}S=\gamma_*$, then $\Lambda_{\mu\nu}$ is positive semidefinite,  turning Eq.~\eqref{eq:actualPevol} into a classical diffusive equation which evolves $P$ into a positive and nonsingular distribution. Furthermore, the relative error in the coefficients for $S_x^2$ in Eqs.~\eqref{eq:FokkerPlanck} and~\eqref{eq:actualPevol} is $(3c/4)/(\kappa/2S)=O(1/S)$ and the relative error between $\Lambda_{\mu\nu}$ and the isotropic diffusion matrix is  $O(\norm{(\Lambda_{\mu\nu}-\gamma\delta_{\mu\nu})_{\mu\nu}}/\gamma)=O(\gamma_*/\gamma)$.
\end{proof}

Theorem \ref{th:01} guarantees classical evolution whenever $\gamma\gg O(S^{-1})$. Furthermore, in the SM~\cite{SM} we present an improved theorem (which also holds for general Hamiltonians) where the threshold for classicality is $\gamma\gg O(S^{-4/3})$. This improved threshold is enabled by allowing slightly squeezed states in the decomposition of $\rho$~\cite{hernandez2025ehrenfests,hernandez2024classical}, and is expected to be optimal~\cite{hernandez2025threshold}. Combined with Eq.~\eqref{eq:formofgamma}, it implies that, for sufficiently large $S$, our disordered kicked-top model displays classical behavior as long as $\varepsilon\geq  O(N^{-1/3+\delta})$ for any constant $\delta>0$.

{\it Quantum branching.}--- The emergence of classicality in closed systems is believed to be described via the theory of wavefunction branching~\cite{Wallace2012} and consistent histories~\cite{Griffiths1984,Omnes1992,GellMannHartle1990,Halliwell1995}, which we introduce next. We then numerically confirm that these phenomena indeed arise in our model. 

We may describe the state of the system as a superposition of different {\it branches},
\begin{align}
\label{eq:branchessuperposition}
\ket{\psi(t_M)}&=\int\dd \bm \Omega \,c_{\bm \Omega} \ket{\phi_{\bm \Omega}}.
\end{align}
Each branch is labeled by a {\it history},
${\bm \Omega}=(\Omega_0,\dots, \Omega_{M})$, a sequence of points on the sphere $\Omega_m=(\theta_m,\varphi_m)$, at times $t_0=0,t_1,\dots,t_M$, where the integral is taken with respect to the measure $\dd{\bm \Omega}=\dd\Omega_1\cdots\dd\Omega_M$, with $\dd\Omega_m=\tfrac{2S+1}{4\pi}\sin\theta_m\,\dd\theta_m\,\dd\varphi_m$. 
These branches are defined by inserting spin coherent-state projectors  $\hat{\Omega}_m = \dyad{\Omega_m}_{\mathcal{S}} \otimes \mathds{1}_{\mathcal{P}}$ at the corresponding times:
\begin{equation}
        c_{\bm \Omega}\ket{\phi_{\bm \Omega}} = \hat{\Omega}_M U_M \cdots \hat{\Omega}_{2} U_{2} \hat{\Omega}_1 U_1 \ket{\psi(0)} ,
\end{equation}
where the initial state is given by Eq.~\eqref{eq:initial_state}, with $\Omega_0 = (\theta_0,\varphi_0)$. The coefficient $c_{\bm \Omega}$ is chosen so that $\ket{\phi_{\bm \Omega}}$ is normalized, and $U_m = \mathcal{T}\exp\!\left(-i \int_{t_{m-1}}^{t_m} \dd t\, \hat H(t)\right)$ is the unitary generated by Eq.~\eqref{equation:model} between times $t_{m-1}$ and $t_m$. Equation~\eqref{eq:branchessuperposition} follows formally from the coherent-state resolution of the identity $\int\dd \Omega_m \hat{\Omega}_m=\mathds{1}$. In general, however, these branches may interfere, so the decomposition alone does not yet define classical alternatives.

\begin{figure}
    \centering
    \includegraphics[width=1\linewidth]{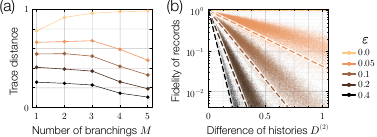}
    \caption{ (a) \textit{Quantum branching}. Trace distance between the reduced state of the global unitary dynamics and the mixture of branches [left and right sides of Eq.~\eqref{eq:equalityofobservables}, respectively], for different values of disorder $\varepsilon$ and number of branchings $M$. (b) \textit{Consistent histories}. Fidelity of records vs cumulative squared distance between histories $D^{(2)}$ after $M=5$ branchings. Each dot is a pair of independently sampled branches. Dashed lines mark the decay $e^{-\Gamma D^{(2)}}$. ($N=100$,  $k=4$, $\kappa=4.2$, $p=\pi/2$, $\theta_0=0.5$, $\varphi_0=1.1$, $\Delta t=1$) }
    \label{fig:3}
\end{figure}
The emergence of decoherence is a manifestation of the noninterference of branches in the collective observables. The reduced state in $\mathcal{S}$ becomes indistinguishable from a classical mixture over all branches:
\begin{equation}
\label{eq:equalityofobservables}
   \tr_{\mathcal{P}}({\psi(t_M)})\approx  \int \dd \bm \Omega \, |c_{\bm \Omega}|^2 \tr_{\mathcal{P}}(\dyad{\phi_{\bm \Omega}}),
\end{equation}
where $\tr_{\mathcal{P}}(\dyad{\phi_{\bm \Omega}})=\dyad{\Omega_M}$ is a coherent state. Indeed, numerics shown in Fig.~\ref{fig:3}(a) confirm that the trace distance between the two sides of Eq.~\eqref{eq:equalityofobservables} decreases with increasing $\varepsilon$. In fact, a stronger notion of noninterference is satisfied in our model, namely consistency of histories. 

Each branch has the form $\ket{\phi_{\bm \Omega}}=\ket{{\Omega_M}}_\mathcal{S}\otimes \ket{R_{\bm \Omega}}_\mathcal{P}$, and the state $\ket{R_{\bm \Omega}}$ is known as a {\it record}. In the presence of disorder, separated histories imprint distinguishable records in the permutation sector. Specifically, we estimate via a semiclassical Lindblad approximation \cite{SM} that the fidelity of records for two histories $\bm \Omega$ and $\bm \Theta$ should decay exponentially in $D^{(2)}_{\bm \Omega,\bm \Theta}\approx\frac{1}{2}\smash{\sum_{m=1}^M} \big[d(\Omega_{m-1},\Theta_{m-1})^2+d(\Omega_{m},\Theta_{m})^2\big]$ \footnote{At $m=1$, the term $d(\Omega_{m-1},\Theta_{m-1})$ vanishes since $\Omega_0=\Theta_0$ corresponds to the initial state.}. Namely,
\begin{equation}
\label{eq:fidelity_of_records}
     \abs{\braket{R_{\bm \Theta}}{R_{\bm \Omega}}}^2\sim \exp(-\Gamma D^{(2)}_{\bm \Omega,\bm \Theta}),
\end{equation}
 with a coefficient $\Gamma=\gamma S^2 \Delta t/(1+\gamma S\Delta t)$, where $\Delta t=t_{m}-t_{m-1}$. This is supported by the numerics  shown in Fig.~\ref{fig:3}(b). In the SM~\cite{SM} we present additional numerical verifications for random-matrix couplings. We remark that we can study these  overlaps because our system is closed, so we can canonically identify the record states of the bath. Such an identification would not be possible if one instead had postulated an open model governed by a Lindblad equation as the starting point.

The decay in the fidelity of distinct records prevents distinct branches from interfering in any collective observable $O_{\mathcal{S}}$. Interferences are controlled by the off-diagonal terms
\begin{equation}
\label{eq:off-diag-term}
    \bra{\phi_{\bm \Theta}}{O_\mathcal{S}} \otimes \mathds{1}_{\mathcal{P}}\ket{\phi_{\bm \Omega}}=\bra{\Theta_M}{O_\mathcal{S}}\ket{\Omega_M}\braket{R_{\bm \Omega}}{R_{\bm \Theta}}.
\end{equation}
 In the absence of disorder, $\gamma=0$ and hence $\braket{R_{\bm \Theta}}{R_{\bm \Omega}}=1$  regardless of the histories, thus allowing interference even when $\bm \Omega$ and $\bm \Theta $ are distinct. As $\gamma$ increases, the off-diagonal matrix element in Eq.~\eqref{eq:off-diag-term} is suppressed exponentially in $D_{\bm \Theta,\bm \Omega}^{(2)}$. This effect is analogous to the recording of which-path information in the double-slit experiment. If no degree of freedom records which slit the particle passed through, the two branches remain coherent and can interfere at the detector. If such a record is created, their coherence is suppressed and the interference disappears. In our model, the permutation sector creates the analogous record by  becoming correlated with the collective-spin history due to the disordered interaction.

{\it Discussion and outlook.}---
We have exhibited a concrete, numerically tractable isolated many-body model in which internal degrees of freedom act as an effective bath. For an emergent collective spin variable with chaotic classical dynamics, this leads to Lindblad dynamics, classical Fokker--Planck evolution, and consistent branching.  This model is deliberately idealized. Both the kicked-top dynamics and the disorder are $3$-local but all-to-all, whereas physical interactions are expected to be geometrically local. A natural next step is therefore to understand whether the same mechanism persists in geometrically local systems, where the quasiclassical variables are likely to be spatially coarse-grained densities rather than global collective spins, and whose classical equations of motion are not simply obtained by removing the hats from the quantum operators.

Additionally, our model exploits Schur--Weyl duality by having a disorder term that preserves total spin, constraining the dynamics to take place within a fixed spin sector, and ensuring a simple tensor-product decomposition $\mathcal{H}_S=\mathcal{S}\otimes\mathcal{P}$ between collective spin variables $\mathcal{S}$ and microscopic permutation variables $\mathcal{P}$.
This enabled numerical simulation of a system with an otherwise inaccessibly large total spin.
In contrast, a generic local perturbation would mix sectors, so that the quasiclassical variable (whether or not a collective spin) could take the form of a more general subalgebra $\mathcal{A}=\bigoplus_i \mathcal{S}_i \otimes\mathcal{P}_i$ \cite{zanardi2004quantum}.
Understanding this sector-wise form of emergent system-bath structure is an important step toward extending the present mechanism to more general and realistic settings, such as fermionic models, bosonic systems with unbounded local Hilbert spaces, and quantum field theories. 
A broader goal is to find a general framework describing the universal emergence of classical behavior in isolated quantum many-body systems, where wavefunction branches emerge naturally from the underlying quantum dynamics~\cite{riedel2017classical,weingarten2022macroscopic,weingarten2025vacuum,taylor2025wavefunction,riedel2025wavefunction}.
\\ \\
\textit{Acknowledgements.}--- We thank Curt von Keyserlingk for extensive insightful conversations and Soonwon Choi for a suggestion that informed our choice of the disorder term in Eq.~\eqref{eq:kickedtop}. We acknowledge the MIT SuperCloud and Lincoln Laboratory Supercomputing Center for providing computational resources utilized for our numerical analysis. We also acknowledge the use of GPT~5.5-6 for aid in generating code and analytical derivations, in particular the identification of Eqs.~(S12) and (S29) in the SM~\cite{SM}. This work is supported by the U.S. Department of Energy, Office of Science, under Award Number DE-SC0021013 titled `Emergent Phenomena in Quantum Dynamics: From Chaos to Spacetime'.  JC is supported by a fellowship from the Alfred P.~Sloan Foundation.

\bibliography{references}

\end{document}


\def\bibsection{\section*{\refname}} 
\newcommand{\E}[0]{\mathop{{}\mathbb{E}}}
\newcommand{\Pro}[0]{\mathop{{}\mathbb{P}}}

\title{Supplemental Material: Emergent classicality and wavefunction branching \\ in an isolated quantum
many-body system}

\author{Sa\'ul Pilatowsky-Cameo}
\email{saulpila@mit.edu}
\affiliation{Center for Theoretical Physics --- a Leinweber Institute, Massachusetts Institute of Technology, Cambridge, MA 02139, USA}

\author{Jordan Cotler}
\email{jcotler@fas.harvard.edu}
\affiliation{Department of Physics, Harvard University, Cambridge, MA 02138, USA}

\author{Daniel Ranard}
\email{dranard@caltech.edu}
\affiliation{Department of Physics, California Institute of Technology, Pasadena, CA 91125, USA}

\author{C. Jess Riedel}
\email{jessriedel@gmail.com}
\affiliation{NTT Research, Inc., Physics \& Informatics Laboratories, Sunnyvale, CA 94085, USA}

\newcommand{\supp}[0]{\mathrm{supp}}

\newtheorem{theorem}{Theorem}
    \newtheorem{claim}{Claim}

    \newtheorem{corollary}{Corollary}
    \newtheorem{lemma}{Lemma}
    \newtheorem{prop}{Proposition}
    \newtheorem{conjecture}{Conjecture}
\theoremstyle{definition}
  \newtheorem{definition}{Definition}
\def\theequation{S\arabic{equation}}
\renewcommand{\thecorollary}{S\arabic{corollary}}
\renewcommand{\thelemma}{S\arabic{lemma}}
\renewcommand{\theprop}{S\arabic{prop}}
\renewcommand{\thedefinition}{S\arabic{definition}}
\renewcommand{\theconjecture}{S\arabic{conjecture}}
\renewcommand{\thetheorem}{S\arabic{theorem}}
\renewcommand{\thefigure}{S\arabic{figure}}
\maketitle
\vspace{-1.2em}
{\small
In this Supplemental Material, we provide details supporting the derivations, rigorous statements, and numerical results of the main text. In Sec.~\ref{sec:priorwork}, we identify some key features of the emergence of classicality in our model, contrast it with related prior work, and present future directions. In Sec.~\ref{sec:numericaloptimization}, we describe the  numerical techniques used to simulate the unitary dynamics and present a time-resolved comparison between the reduced Wigner function and the corresponding classical dissipative evolution. In Sec.~\ref{sec:momsofRHC}, we compute normalization factors and moment estimates for the random coupling ensembles appearing in the disordered Hamiltonian. In Sec.~\ref{sec:analytic-rate}, we give a microscopic derivation of the effective Lindblad equation for the collective spin sector and of the corresponding diffusion rate. In Sec.~\ref{sec:proofofth1}, we present the detailed proof of Theorem~1, relating the spin Lindblad dynamics to a classical Fokker--Planck equation. In Sec.~\ref{sec:NTS}, we extend the classicality criterion using not-too-squeezed states, obtaining  an improved threshold over Theorem 1, which holds for general Hamiltonians. Finally, in Sec.~\ref{sec:fidelity}, we estimate the fidelity of records associated with different branches and compare this prediction with numerical simulations.
\vspace{-1.2em}
\tableofcontents}
\newpage

\section{Model features and comparison to prior works}
\label{sec:priorwork}

In this section, we introduce some terminology and then use it to characterize the main features of our work and compare our approach with related work in the literature.

\subsection{General terminology}

We take as given a notion of locality defined by a preferred tensor-product decomposition of a finite-dimensional Hilbert space into microscopic subsystems: $\mathcal{H} \equiv \bigotimes_j \mathcal{H}_j$.
We say an observable is $k$-\emph{local} when it is supported on at most $k$ subsystems, while a family of observables is \emph{many-body} if its support diverges in the thermodynamic limit.  Likewise, the dynamics are $k$-local when the Hamiltonian is a sum of $k$-local observables. 

A set of observables induces an algebra through operator multiplication and complex linear combinations. Generalizing from the identification between subsystems (in the sense of a tensor-product decomposition of Hilbert space) and the algebra of operators defined on them~\cite{zanardi2004quantum}, we can think of any algebra as a \emph{generalized subsystem}. Unlike the microscopic subsystems defining locality, a generalized subsystem need not correspond to a tensor-product factor of $\mathcal H$. Its algebra $\mathcal{A}$ may have a nontrivial center and, more generally, takes the Artin-Wedderburn form $\mathcal{A} = \bigoplus_i \mathcal{A}_i \otimes \mathcal{I}_i \subset \mathcal{B}(\mathcal{H})$.

The \emph{environment} of a generalized subsystem is the commutant (i.e., centralizer) of its algebra.  We say the environment is \emph{internal} when the degrees of freedom in the commutant are supported on the same microscopic subsystems as the generalized subsystem itself. If the environment is instead supported on the complementary microscopic subsystems, we say it is \emph{external}. The former occurs, for instance, when the generalized subsystem has support on all microscopic subsystems but its algebra is a proper subalgebra of the full operator algebra, while the latter is the familiar case in which the generalized subsystem is the complete operator algebra on a proper subset of microscopic subsystems.

We now turn to concepts that become applicable in the thermodynamic limit $N\to\infty$, which with appropriate scaling coincides with the classical limit $\hbar_{\mathrm{eff}}\to 0$.
The reduced density matrix $\rho(t)$ of a generalized subsystem at a given time $t$ is \emph{localized} in a set of observables $\hat{\mathbf{A}} = (\hat{A}_r)$ when it is a 
mixture, with some distribution $P(\mathbf{A},\mu;t)$,
of states 
$|\mathbf{A},\mu\rangle$ 
that are each spectrally localized\footnote{Although we do not commit to a precise quantification of approximate localization for use in a limit, for concreteness one could require that $\|\rho-\tilde{\rho}\|_{\mathrm{Tr}}$ is vanishing, where 
	$\tilde{\rho}=\int \! \dd \mathbf{A}\, \dd\mu\, P(\mathbf{A},\mu) |\mathbf{A},\mu\rangle\langle \mathbf{A},\mu|$ and where each state $|(A_r),\mu\rangle$ 
	is supported strictly on an interval of eigenvalues (for each $A_r$) whose width is vanishing.} 
around the eigenvalue $A_r$ for each of the observables $\hat{\mathbf{A}}=(\hat{A}_r)$.  
Marginalizing over the auxiliary\footnote{In our main model, we can take $\mu$ to be trivial, but in our more general theorem about spin decoherence its role is played by the approximate tangent covariance matrix $\sigma$ of the spin-squeezed states, which controls the squeezing strength and orientation.} parameter $\mu$ yields a classical distribution $P(\mathbf{A},t) = \int \!\dd\mu\, P(\mathbf{A},\mu;t)$ for the localized variable $\mathbf{A}$.  

Due to the spectral localization of the $|\mathbf{A},\mu\rangle$, the marginals of $P(\mathbf{A},t)$ converge in the thermodynamic limit to the corresponding single-time measurement distributions of the observables $\hat{\mathbf{A}}=(\hat{A}_r)$. This, however, is only a statement about single-time statistics. It does not guarantee that measurements at different times have classical joint statistics; the underlying quantum dynamics need not be Markovian even when the classical distribution $P(\mathbf{A},t)$ of localization obeys a Markovian equation. Defining quantum Markovianity is challenging \cite{li2018concepts, milz2019completely}, even within the restricted context of the quantum-classical transition \cite{strasberg2023classicality}. Nonetheless, the consistent histories framework can provide a stronger characterization of classicality than the behavior of $P(\mathbf{A},t)$ alone.

Given a set of exhaustive histories defined through time-ordered sequences of Heisenberg-picture projectors, $C_\alpha = Q_{\alpha_M}(t_M)\cdots\allowbreak  Q_{\alpha_1}(t_1)$, the histories are \emph{consistent} (precisely, \emph{medium decoherent}) when $\langle \psi |C_\alpha^\dagger C_\beta |\psi\rangle =0$ for $\alpha\neq\beta$.  This implies the probabilities $p(\alpha) = \langle \psi |C_\alpha^\dagger C_\alpha |\psi\rangle$ satisfy the Kolmogorov probability sum rules under coarse-graining.
In the thermodynamic limit, we say that the histories are \emph{trajectories} for a set of observables $\hat{\mathbf{A}}$ when (i) the time intervals between successive projectors vanish, and (ii) each projector is supported on a vanishing-width interval of eigenvalues of the observables $\hat{\mathbf{A}}$.  A consistent set of trajectory histories for $\hat{\mathbf{A}}$ is \emph{quasiclassical} when the probabilities satisfy the Markovian condition $p(\alpha_1,\ldots,\alpha_M)p(\alpha_m) = p(\alpha_1,\ldots,\alpha_m)p(\alpha_m,\ldots,\alpha_M)$.
This stronger notion of Markovianity guarantees, for instance, that multiple hypothetical measurements of the observables $\hat{\mathbf{A}}$ at different times by an external observer would have the expected classical correlations (with the single-time measurements as a special case converging to the distribution $P(\mathbf{A};t)$ associated with localization). 

When the Heisenberg-picture conditional state associated with each history at one time overlaps only a single conditional state associated with a history at an earlier time, the conditional states form a directed tree, with edges corresponding to nonzero overlap. We refer to the conditional states as \emph{branches} and say that the histories \emph{branch forward in time}.

\subsection{Model characterization and future directions}

With the preceding definitions, we can now compactly characterize our primary model at appropriate disorder strength. We heuristically derive a Lindblad equation for a generalized subsystem of a many-qubit system in the thermodynamic limit. Our model has these features:
\begin{enumerate}[label= \Alph*.]
	\item The dynamics are unitary and the Hamiltonian is $3$-local.
	\item There is a preferred variable (a set of observables, $\hat{\bm S}/S$) that generates the algebra of the generalized subsystem, whose environment is internal.
	\item In the thermodynamic limit,
	\begin{enumerate}[label= \roman*)]
		\item the preferred variable is many-body and continuous;
        \item the preferred variable is localized by the environment, and its classical distribution obeys a Kolmogorov forward equation, specifically a Fokker-Planck equation;
		\item the drift terms in the Fokker-Planck equation produce chaotic Hamiltonian flow; 
		\item the dissipative terms in the Fokker-Planck equation are negligible\footnote{See Refs.~\cite{zurek1998decoherence,hernandez2025ehrenfests}.} compared with the drift terms; and
		\item trajectory histories of the preferred variable are consistent, quasiclassical, branch forward in time, and marginalize to the same classical distribution as in (ii).
	\end{enumerate}

\end{enumerate}
Because we can numerically simulate the full unitary dynamics, rather than only simulating the reduced dynamics under an assumed Lindblad equation, we can \emph{confirm} the accuracy of the Lindblad description and directly check the consistency of the histories.

At the same time, our model has three simplifying features that leave important challenges for future work:
\begin{enumerate}
    \item The Hamiltonian is 3-local and thus contains genuine interactions, but it is not geometrically local.
	\item Our model constrains the dynamics within fixed total-spin subspaces, although these spaces are still exponentially large in the number of qubits when $k=\Theta(N)$.  Within each subspace the coupling takes the familiar finite-rank bilinear form, although the mapping to the $k$-local representation of the Hamiltonian is still nontrivial.
	\item The dynamics of the quasiclassical variable can be read off rather directly from the Hamiltonian $\hat{H}$. In the thermodynamic limit, the coupling $\epsilon \hat{H}_{\mathrm{dis}}$ becomes small relative to $\hat{H}_{KT}$, so that both $\hat{H}$ and $i[\hat{H},\hat{\mathbf{S}}]$ approach low-order polynomials in the quasiclassical variables $\hat{\mathbf{S}}=(\hat{S}_x,\hat{S}_y,\hat{S}_z)$. Thus the natural representation of the microscopic Hamiltonian already exposes the effective classical variables and their dynamics.
\end{enumerate}
A corresponding challenge for future work is to find a model that can be shown to have properties B and C above, but for which
\begin{enumerate}
	\item the Hamiltonian $\hat{H}$ is geometrically local, 
	\item the dynamics are not artificially constrained to a subspace of the Hilbert space, and/or
	\item neither the Hamiltonian $\hat{H}$, nor its commutator $[\hat{H},\hat{\mathbf{A}}]$ with the preferred observables $\hat{\mathbf{A}} = (\hat{A}_r)$,
    is a bounded-order polynomial in $\hat{\mathbf{A}}$, even in the thermodynamic limit.
\end{enumerate}
Although the conditions above fall short of a general definition of wavefunction branching \cite{riedel2025wavefunction}, we expect that identifying additional models exhibiting these phenomena, especially models without the simplifying features just described, will help clarify what such a definition should be.

\subsection{Relationship to prior work}

We now briefly comment on a selection of prior work and note how it differs from the results we present.

O’Donovan et al.\ carry out a traditional microscopic derivation of a Lindblad equation in the weak-coupling limit for a general system coupled to an external ETH-obeying bath through local operators \cite{odonovan2025quantum}. 
They numerically compare an ETH-obeying mixed-field Ising bath with an integrable version, finding that the former is well described by the derived Lindblad equation while the latter shows strong discrepancies at the finite sizes studied. 
The generality of the arguments is suggestive for identifying which general many-body variables could be expected to decohere.

Nation \& Porras~\cite{Nation2020}
also consider systems coupled to an external bath where the Hamiltonian eigenstates are taken to have the chaotic wavefunction form and the system-bath coupling is a random matrix.  To describe the emergence of classical statistical mechanics, 
they obtain dynamics corresponding to an Ornstein–Uhlenbeck process for a solvable random-matrix model, and they extend the underlying thermalization and Markovian analysis numerically to several local observables on the spin chain and one global observable (the mean spin).
They further consider sequences of measurements at multiple times and note that the outcomes satisfy a non-disturbance condition that they identify with consistency of histories.

Working on a spin chain, Mazza et al.~\cite{Mazza2021}
use machine-learning methods to find general linear generators that accurately produce the evolution of 1-local subsystems. 
Carnazza et al.~\cite{Carnazza2022}
and Cemin et al.~\cite{Cemin2024}
extend this to 2-spin subsystems and constrain the generators to be Lindbladian. These discrete variables are not many-body, and their environment is external rather than internal. The Lindblad dynamics are fit rather than derived, and they govern variables with no classical limit. 

Also on a spin chain, 
Ates et al.~\cite{Ates2012}
derive a classical Fokker–Planck equation describing non-chaotic relaxation for a continuous excitation density, a many-body variable. In related work, Ji et al.~\cite{Ji2013}
derive an analogous master equation for the discrete number of hard rods.
They do not investigate classical correlations between multiple times.

Wang and Strasberg \cite{wang2025decoherence}
study a closed many-body model of multi-time correlations: they consider  histories of a variable with divergent support in a geometrically local XXZ Heisenberg spin chain and show that they generally become consistent in the thermodynamic limit.
However, they restrict their attention to histories over a binary state space (whether the magnetization imbalance is positive or negative). They also only investigate the decoherence functional of histories for Haar-random initial states,  and thus do not provide a special low-entropy initial condition that would define the forward branching arrow of time.

\section{Numerical techniques for unitary evolution}
\label{sec:numericaloptimization}

In this section we describe the numerical methods used to simulate the
unitary dynamics under the disordered kicked-top model.  Our simulations are performed directly in the
fixed-total-spin subspace with
$$
    S=\frac{N-k}{2},
    \qquad
    \mathcal H_S=\mathcal S\otimes \mathcal P,
$$
where $\dim(\mathcal S)=2S+1$ and
$\dim(\mathcal P)=\binom{N}{k/2}-\binom{N}{k/2-1}$.  This avoids working in the
full $2^N$-dimensional Hilbert space.

We first construct an orthonormal basis for the multiplicity space
$\mathcal P$.  In the computational basis, the highest-weight states of total
spin $S=(N-k)/2$ lie in the sector with $k/2$ spins down.  We enumerate this
sector and form the rectangular matrix representing the collective raising
operator $\hat S^+$ from the $k/2$-down sector to the $(k/2-1)$-down sector.
The multiplicity space $\mathcal P$ is the nullspace of this matrix. We numerically compute an orthonormal basis for the nullspace and collect its vectors as the columns of an isometry $V$, so that $V^\dagger V=\mathds{1}_{\mathcal P}$. 

The collective sector $\mathcal S$ is represented by the usual spin-$S$
matrices.  In the $\hat S_z$ eigenbasis,
\[
    \hat S_z\ket{m}=m\ket{m},\qquad
    \hat S_\pm\ket{m}
    =
    \sqrt{S(S+1)-m(m\pm1)}\ket{m\pm1},
\]
with $\hat S_x=(\hat S_++\hat S_-)/2$ and
$\hat S_y=(\hat S_+-\hat S_-)/(2i)$.  A pure state is then stored as the
coefficient matrix
\begin{equation}
    \ket{\psi(t)}
    =
    \sum_{a=1}^{d_{\mathcal P}}\sum_{m=-S}^{S}
    C_{am}(t)\,\ket{a}_{\mathcal P}\ket{m}_{\mathcal S},
    \label{eq:coefficient-matrix}
\end{equation}
so the reduced density matrix on the collective sector is obtained by tracing out
$\mathcal P$,
$[\rho_{\mathcal S}(t)]_{mn}
    =
    \sum_{a=1}^{d_{\mathcal P}} C_{am}(t)C_{an}(t)^* .$

The random Heisenberg couplings (RHCs) $\hat K_\mu$ are constructed in the same multiplicity basis.  We draw independent Gaussian coefficients
$c_{ij}^{(\mu)}$, subtract their mean so that
$\sum_{i<j}c_{ij}^{(\mu)}=0$, and form
\[
    \hat K_\mu
    =
    V^\dagger
    \left(\sum_{i<j}c_{ij}^{(\mu)}
    \hat h_{ij}\right)
    V ,
    \qquad
    \hat h_{ij}=\hat X_i\hat X_j+\hat Y_i\hat Y_j+\hat Z_i\hat Z_j.
\]
The coefficients are normalized so that
$d_{\mathcal P}^{-1}\tr_{\mathcal P}(\hat K_\mu^2)=1$ on average, following
Lemma~\ref{lemm:std}.

Between kicks we evolve using a fixed-time Trotter step
$\delta t$ (we use $0.005$ or smaller in all simulations, and have verified that our results are independent of this number). We numerically apply
\begin{align}
    \ket{\psi(t+\delta t)}
    &\approx
    e^{-i\frac{\kappa\delta t}{2S+1}\hat S_x^2}
    e^{-i\varepsilon\delta t\hat K_x\hat S_x}
    e^{-i\varepsilon\delta t\hat K_y\hat S_y}
    e^{-i\varepsilon\delta t\hat K_z\hat S_z}
\label{eq:trotter-step}
\end{align}
by diagonalizing each \(\hat K_\mu\) and \(\hat S_\mu\) separately, since the eigenbasis of \(\hat K_\mu\otimes \hat S_\mu\) is just the tensor product of the eigenbases.

As a time-resolved complement to Fig.~1(b) of the main text, Fig.~\ref{fig:S1}
compares the Wigner function of the collective-spin state obtained from the
full unitary dynamics with the corresponding classical dissipative evolution.
The agreement illustrates the emergence of quantum--classical correspondence
for the random Heisenberg couplings.
\begin{figure}
    \centering
    \includegraphics[width=0.99\linewidth]{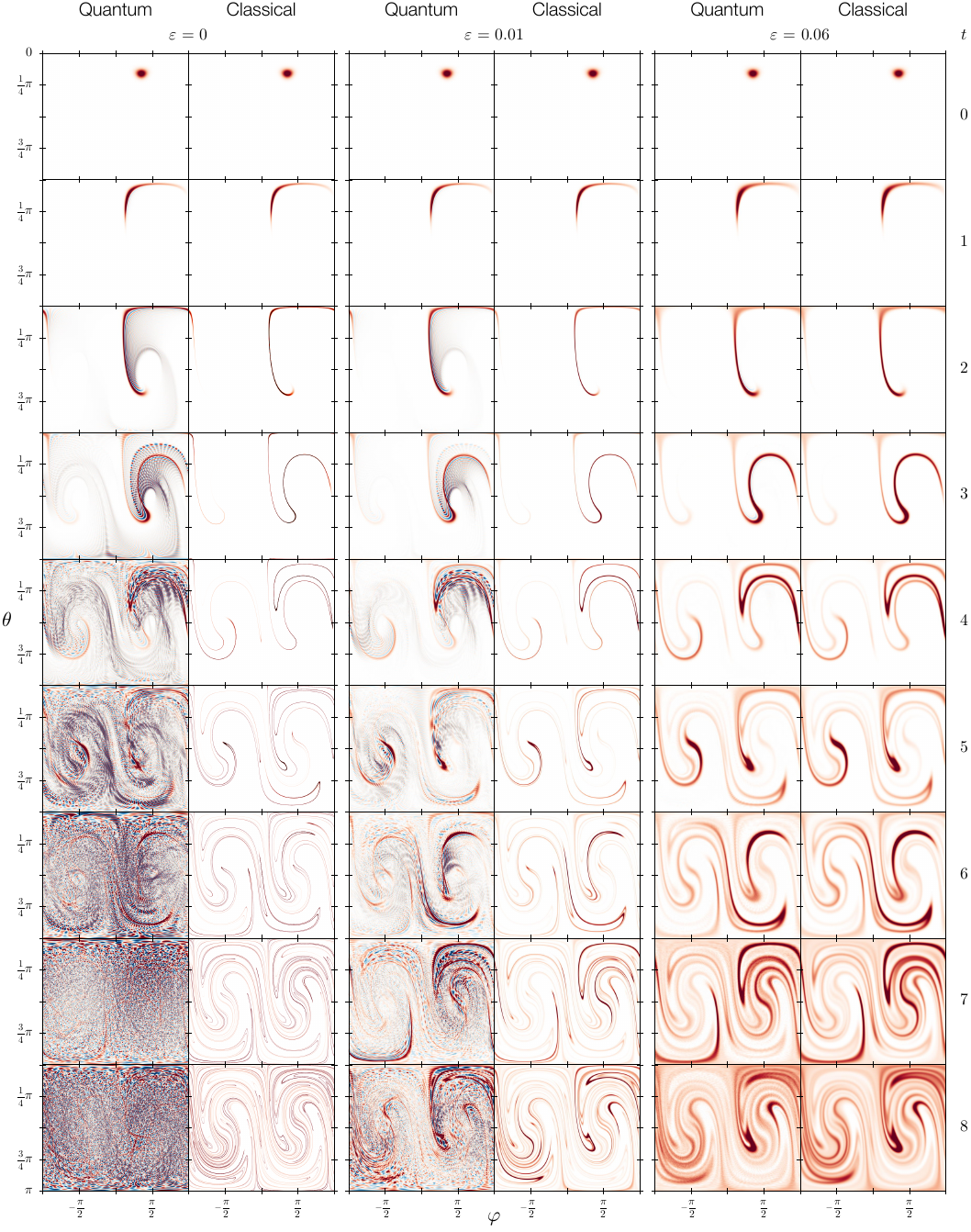}
    \vspace{-1em}
    \caption{Quantum vs classical dynamics in the collective sector (time from top to bottom). Odd columns: Wigner function of $\tr_{\mathcal{P}}(\psi(t))$, where $\ket{\psi(t)}$ undergoes unitary evolution under the disordered kicked-top Hamiltonian, with random Heisenberg couplings. Even columns: dissipative classical evolution, starting from a Gaussian distribution with variance equal to that of the spin coherent state (truncated Wigner approximation)~\cite{Villasenor2020}.  ($k=4$, $N=200$, $\kappa=4.2$, $p=\pi/2$, $\theta_0=0.5$, $\varphi_0=1.1$.)}
    \label{fig:S1}
\end{figure}
\section{Moments of random Heisenberg couplings}
\label{sec:momsofRHC}
In this section we compute ensemble averages for the random Heisenberg interaction
\begin{equation}
\label{eq:random_heisenberg_interaction}
    \hat K =\sum_{i<j}c_{ij} \hat h_{ij},\quad\quad \hat h_{ij}\coloneqq \hat X_i\hat X_j+\hat Y_i\hat Y_j+\hat Z_i\hat Z_j.
\end{equation}
The coefficients $c_{ij}$ are Gaussian, with their mean subtracted so that $\tr_\mathcal{P}(\hat K)\propto \sum_{i<j} c_{ij}=0$. We choose their standard deviation $\sigma$ so that $\E[\langle \hat K^2\rangle ]=1$, where $\langle \,\cdot\,\rangle\coloneqq \tr_{\mathcal{P}}(\,\cdot\,)/d_{\mathcal{P}}$. The following lemma gives this standard deviation explicitly.
\subsection{Standard deviation}
\begin{lemma}  \label{lemm:std} The condition $\E[\langle \hat K^2\rangle ]=1$ holds if and only if the standard deviation of the coefficients is
 \begin{align}
 \label{eq:std}
     \sigma^{-2}= \,8N(N-1) \eta (1 - \eta), \qquad
     \eta = \frac{k(2N - k + 2)}{4N(N-1)}.
\end{align}
\end{lemma}
\noindent In the limit $N\gg k$, the expression simplifies to
$\sigma^{-2}\approx 4Nk$. 
\begin{proof}
Let us compute the expected variance $\E[\langle{ \hat K^2\rangle }]$ explicitly and set it equal to $1$. We can write
\begin{equation*}
    \hat K = \sum_{i<j} c_{ij} (2 F_{ij} - \mathbb{I}),
\end{equation*}
where $F_{ij}$ is the swap operator between qubits $i$ and $j$. By the constraint $\sum_{i<j} c_{ij} = 0$, the identity term vanishes, yielding
\begin{equation*}
    \hat K = 2 \sum_{\alpha=1}^{N_p} c_\alpha F_\alpha,
\end{equation*}
where $\alpha$ indexes the $N_p = N(N-1)/2$ distinct pairs of sites. The coefficients $c_\alpha$ are Gaussian random variables subject to the sum constraint, which implies the covariance matrix:
\begin{equation*}
    \E[c_\alpha c_\beta] = \sigma^2\left(\delta_{\alpha \beta} - \frac{1}{N_p}\right).
\end{equation*}
We calculate the expected variance:
\begin{equation*}
    \E \!\left[ \langle \hat K^2\rangle  \right] = 4 \sum_{\alpha, \beta} \E[c_\alpha c_\beta] \langle F_\alpha F_\beta\rangle.
\end{equation*}
Substituting the covariance matrix splits the sum into diagonal and off-diagonal parts:
\begin{equation*}
    1=\E[\langle \hat K^2 \rangle] = 4 \sigma^2\left( \sum_\alpha \langle F_\alpha^2\rangle - \frac{1}{N_p} \sum_{\alpha, \beta} \langle F_\alpha F_\beta\rangle \right).
\end{equation*}
Since transpositions square to identity, the first term sums to $N_p$. The second term involves the square of the sum of all transpositions $\mathcal{C}_2 = \sum_\alpha F_\alpha$.
Since the operator $\mathcal{C}_2$ is permutation invariant, it acts as a scalar within the space $\mathcal{P}$. Using the identity $\mathcal{C}_2 = \mathbf{S}_{\text{tot}}^2 + \frac{N(N-4)}{4}$, this scalar must be $z=S(S+1) +N(N-4)/4$, and hence
\begin{equation*}
    \sigma^{-2} =4\left(N_p- \frac{z^2}{N_p}\right).
\end{equation*}
Using $S=(N-k)/2$ one can simplify this equation to Eq.~\eqref{eq:std}.
\end{proof}

\subsection{Variance of a commutator}
The next result gives the variance of a commutator between two independent random Heisenberg couplings.
\begin{lemma}
\label{lemm:commutav}
    Let $\hat K_0$ and $\hat K_1$ be independent random Heisenberg couplings. Then
    \begin{equation}
    \label{eq:commutator}
        \E\left [\langle[\hat K_0,\hat K_1]^\dagger[\hat K_0,\hat K_1]\rangle\right]=\sigma^4 \,32N(N-1)(N-2)(1-\phi_3),
    \end{equation}
where $\sigma$ is given by Lemma~\ref{lemm:std}, 
 and $\phi_3$ is the normalized character of the 3-cycle in the representation with total spin $S$, given by \cite[Eq. (5.2)]{Ingram1950}
\begin{align}
\label{eq:character}
    \phi_3 &= \frac{(N-3)!}{N!} \left( \frac{M_3}{2} - \frac{3}{2}N(N-1) \right),
   \\
    M_3 &= \sum_{j=1}^2 \left[ (\lambda_j - j)(\lambda_j - j + 1)(2\lambda_j - 2j + 1) + j(j-1)(2j-1) \right],&& \lambda_1=N-k/2, && \lambda_2=k/2. \nonumber
\end{align}

\end{lemma}
\begin{proof}
    Let us write $\mathcal{K} = [\hat K_0, \hat K_1]$, and $\hat K_0=2\sum_{\alpha} c_{\alpha}F_{\alpha}$ and $\hat K_1=2\sum_{\beta} c'_{\beta}F_{\beta}$ as sums over transpositions
$F_{ij}$ (with $\alpha,\beta$ indexing unordered pairs), with independent Gaussian coefficients satisfying
$\E[c'_{\alpha}c'_{\beta}]=\E[c_{\alpha}c_{\beta}]=\sigma^2(\delta_{\alpha\beta}-1/N_p)$ and $\E[c_{\alpha}c'_{\beta}]=0$.
Then
\begin{align}
    \mathcal{K}
    &= [\hat K_0,\hat K_1]
    = \left[2\sum_{\alpha} c_{\alpha}F_{\alpha},\,2\sum_{\beta} c'_{\beta}F_{\beta}\right] 
    = 4\sum_{\alpha,\beta} c_{\alpha}c'_{\beta}[F_{\alpha},F_{\beta}]. \label{eq:def-K-comm}
\end{align}
Only non-disjoint pairs contribute: for transpositions, $[F_{ij},F_{kl}]=0$ unless the pairs share exactly one site. There are $N(N-1)(N-2)$ such ordered choices: choose the repeated site in $N$ ways and the other two sites in $(N-1)(N-2)$ ways. The constraints $i<j$ and $k<l$ then fix the internal order within each pair.
For such shared-site pairs, we have
\begin{align*}
    [F_{ij},F_{jk}]
    &= F_{ij}F_{jk}-F_{jk}F_{ij} 
    = F_{(ijk)}-F_{(ikj)}, 
\end{align*}
so that
\begin{align}
    [F_{ij},F_{jk}]^\dagger [F_{ij},F_{jk}]
    = \bigl(F_{(ikj)}-F_{(ijk)}\bigr)\bigl(F_{(ijk)}-F_{(ikj)}\bigr) \nonumber
    = 2\mathbb{I}-F_{(ijk)}-F_{(ikj)}, 
\end{align}
where $F_{(ijk)}\coloneqq F_{ij}F_{jk}$ denotes the operator representing the 3-cycle $(ijk)$ in $\mathcal{P}$ (and $F_{(ikj)}\coloneqq F_{jk}F_{ij}=F_{(ijk)}^\dagger$).
Taking the trace over $\mathcal{P}$ and using $\tr_{\mathcal{P}}(F_{(ijk)})=\tr_{\mathcal{P}}(F_{(ikj)})=d_{\mathcal{P}}\phi_3$ yields 
\begin{align*}
    \langle[F_{ij},F_{jk}]^\dagger [F_{ij},F_{jk}]\rangle
    = 2(1-\phi_3). 
\end{align*}
We now perform an average over disorder. Since the disorder is Gaussian, we evaluate the fourth moments using Wick's theorem,
\begin{align*}
    \E\!\left[c_{\alpha}c_{\alpha'}c_{\gamma}c_{\gamma'}\right]
    &= \E[c_{\alpha}c_{\alpha'}]\E[c_{\gamma}c_{\gamma'}]
    + \E[c_{\alpha}c_{\gamma}]\E[c_{\alpha'}c_{\gamma'}]
    + \E[c_{\alpha}c_{\gamma'}]\E[c_{\alpha'}c_{\gamma}],
\end{align*}
and similarly for the $c'$'s. Mixed contractions vanish since the two disorder families are independent: $\E[c_{\alpha}c'_{\beta}]=0$.
Moreover $\E[c_{\alpha}c_{\alpha'}]=\sigma^2(\delta_{\alpha\alpha'}-1/N_p)$ (and similarly for $c'$).  Using Eq.~(\ref{eq:def-K-comm}),
\begin{align}
    \E\!\left[\langle \mathcal{K}^\dagger \mathcal{K} \rangle\right]
    &= 16 \sum_{\alpha,\beta}\sum_{\alpha',\beta'}
    \E\!\left[c_{\alpha}c_{\alpha'}\right]\E\!\left[c'_{\beta}c'_{\beta'}\right]\,
    \langle[F_{\alpha},F_{\beta}]^\dagger [F_{\alpha'},F_{\beta'}]\rangle.
    \label{eq:avg-double-sum}
\end{align}

Using $\sum_{\alpha}F_{\alpha}=\mathcal{C}_2\propto \mathbb{I}$ on $\mathcal{P}$, the $-1/N_p$ pieces in the covariances only contribute permutation-invariant terms (proportional to $\mathcal{C}_2$) which commute with all $F_{\alpha}$. Therefore these contributions vanish inside the commutators and make no contribution to $\E[\langle\mathcal{K}^\dagger \mathcal{K}\rangle]$. Consequently, we may replace $\E[c_{\alpha}c_{\alpha'}]\to\sigma^2 \delta_{\alpha\alpha'}$ and $\E[c'_{\beta}c'_{\beta'}]\to\sigma^2\delta_{\beta\beta'}$ in Eq.~(\ref{eq:avg-double-sum}), yielding

\begin{align}
    \E\!\left[\langle \mathcal{K}^\dagger \mathcal{K} \rangle \right]
    &= 16 \sigma^4 \sum_{\alpha,\beta}
    \langle[F_{\alpha},F_{\beta}]^\dagger [F_{\alpha},F_{\beta}]\rangle= \sigma^4\,32N(N-1)(N-2)(1-\phi_3). \qedhere
\end{align}
\end{proof}

\noindent In the limit $N\gg k$ the expression above simplifies greatly. In particular, using $1-\phi_3\approx 3k/(2N)$ and $\sigma^{-2}\approx 4Nk$, Eq.~\eqref{eq:commutator} becomes
\begin{equation}
    \E\left[\langle [\hat K_0,\hat K_1]^\dagger [\hat K_0,\hat K_1]\rangle\right]
    \approx \frac{3}{k}.
\end{equation}
This simple formula admits a higher-order generalization.  

\subsection{Higher-order moments of commutators}
For two operators
$A$ and $B$, define the iterated commutators
$[A,B]_0=B,$ and $[A,B]_{n+1}=[A,[A,B]_n].$
The leading fixed-$k$ moments of these commutators can be expressed in terms of
constants $C_n$, defined as follows.  Let $G(y)=\sum_{n\geq0}g_n y^n$ starting at $g_0=1$ be the formal power series satisfying
\begin{equation}
\label{eq:G-series-higher-comm}
    \left(1+yG(y)+2y^2G'(y)\right)
    \left(2-G(y)+yG(y)^2\right)=1.
\end{equation}
Define the formal power series:
\begin{equation}
\label{eq:P-series-higher-comm}
    H(y,z)=\sum_{n\geq0}\sum_{p=0}^{2n}g_{n}z^p y^n,\qquad
    P(y,z)=\frac{H(y,z)}{1+zyH(y,z)}=\sum_{n\geq0}\sum_{p=0}^{2n}P_{n,p}z^py^n.
\end{equation}
Then we set
\begin{equation}
\label{eq:Cn-higher-comm}
    C_n=
    \frac{1}{2}\left[
    \sum_{r=0}^n {2n\choose 2r}g_rg_{n-r}
    +
    \sum_{p=0}^{2n}(-1)^p{2n\choose p}P_{n,p}
    \right].
\end{equation}

\begin{lemma}
\label{lemm:higher-commutav}
Let $\hat K_0$ and $\hat K_1$ be independent random Heisenberg couplings, normalized as
in Lemma~\ref{lemm:std}.  For each fixed $n$ and fixed $k$,
\begin{equation}
\label{eq:higher-commutator}
    \E\left[
    \left\langle
    [\hat K_0,\hat K_1]_n^\dagger [\hat K_0,\hat K_1]_n
    \right\rangle
    \right]
    =
    \frac{C_n}{k^n}+O(N^{-1}),
\end{equation}
where $C_n$ is defined in Eq.~\eqref{eq:Cn-higher-comm}.
\end{lemma}

\begin{table}

\begin{tabular}{|c |l|}
\hline
$n$ & $C_n$ \\
\hline
0 & 1 \\
1 & 3 \\
2 & 28 \\
3 & 408 \\
4 & 7698 \\
5 & 173682 \\
6 & 4498622 \\
7 & 130698192 \\
8 & 4199406588 \\
9 & 147876395596 \\
10 & 5671974937140 \\
11 & 235911748619404 \\
12 & 10602128617075710 \\
13 & 513225453419669250 \\
14 & 26682263414687027790 \\
15 & 1485579557699327016216 \\
16 & 88328721495283696013796 \\
17 & 5592824183754640556807628 \\
18 & 376096726732217578613778896 \\
19 & 26789128136617248915600755328 \\
20 & 2016058168256938037502316523336 \\
\hline
\end{tabular}
\caption{Values of $C_n$ as defined by Eq.~\eqref{eq:Cn-higher-comm} up to $n=20$. }\label{tab:cns}
\end{table}
\noindent The first values are $C_0=1$ and $C_1=3$, so the case $n=1$ recovers
the large-$N$ form of Lemma~\ref{lemm:commutav}  (see Table~\ref{tab:cns} for values of $C_n$ up to $n=20$).
To prove Lemma~{\ref{lemm:higher-commutav}}, we first need the following:
 
\begin{lemma}
\label{lemm:centered-character-estimate}
Let \(\alpha_1,\ldots,\alpha_\ell\) be a sequence of pairs of qubits whose union totals \(v\leq N\) qubits and with \(\ell\geq2\) fixed as
\(N\to\infty\).
Define the traceless transpositions
$\widetilde F_\alpha\coloneqq F_\alpha-\langle F_\alpha \rangle \mathbb{I} .$
Then
\begin{equation}
\label{eq:centered-trace-asymptotic}
    \left\langle
    \widetilde F_{\alpha_1}\cdots \widetilde F_{\alpha_\ell}
    \right\rangle
    =
    \frac{k}{2N}
    \tr(
    \prod_{a=1}^\ell L_\alpha)
    +O(N^{-2}),
\end{equation}
where \(L_{\alpha}=(P_\alpha-\mathbb{I})\), with $P_\alpha$ the $v\times v$ permutation matrix associated with the transposition
\(\alpha\), acting on the space of $v$ labels.
\end{lemma}

\begin{proof}
Consider an arbitrary permutation \(\pi\in S_N\) with support size \(s=|\supp(\pi)|\leq \ell\), and let $F_\pi$ be the associated unitary acting on $\mathcal{P}$.
The normalized character associated with the permutation $\pi$ is \begin{equation*}
    \phi_\pi\coloneqq\langle F_\pi\rangle = 1 - \frac{k s}{2N} + O(N^{-2}).
\end{equation*}
Indeed, this follows from elementary representation theory. Since \(k\) is even, the Young diagram associated with the representation has two rows, $\left(N-\frac{k}{2},\frac{k}{2}\right)$. Denote $[N]=\{1,\dots,N\}$. Let $m_n\coloneqq \abs{\big\{S\subseteq [N] \mid |S|=n,\quad \pi(S)\subseteq S \big\}}$ count the number of subsets of $[N]$ of size $n$ that are fixed by $\pi$. By the Murnaghan–Nakayama rule

\begin{align*}
    \phi_\pi 
    =
    \frac{1}{d_\mathcal{P}}\left(m_{k/2} - m_{k/2-1}\right).
\end{align*}

A subset of size $n$ which is fixed by $\pi$ either avoids \(\supp(\pi)\), or contains
an entire nontrivial cycle of \(\pi\). There are $\binom{N-s}{n}$ options for the former, and since every nontrivial cycle has
length at least \(2\), the latter contribution is \(O(N^{n-2})\). Hence
\[
    m_{k/2}
    =
    {N-s\choose k/2}+O(N^{k/2-2}),\qquad  m_{k/2-1}
    =
    {N-s\choose k/2-1}+O(N^{k/2-3}).
\]
Since
\[
    d_{\mathcal P}
    =
    {N\choose k/2}-{N\choose k/2-1},
\]
expanding in powers of \(1/N\) gives
\begin{equation}
\label{eq:fixed-k-character-estimate}
    \phi_\pi
    =
    1-\frac{k|\supp(\pi)|}{2N}+O(N^{-2}).
\end{equation}
In particular, $\phi_2=1-\frac{k}{N}+O(N^{-2})$ for a transposition. Now expand the product of traceless transpositions:
\[
    \left\langle
    \widetilde F_{{\alpha_1}}\cdots \widetilde F_{{\alpha_\ell}}
    \right\rangle
    =
    \sum_{A\subseteq[\ell]}
    (-\phi_2)^{\ell-|A|}
    \phi\!\left(
    \prod_{a\in A}^{\rightarrow}\pi_{\alpha_a}
    \right),
\]
where the arrow means that the product is taken in the original order.
Using Eq.~\eqref{eq:fixed-k-character-estimate},
\[
    \phi\!\left(
    \prod_{a\in A}^{\rightarrow}\pi_{\alpha_a}
    \right)
    =
    1-\frac{k}{2N}
    \left|
    \supp\!\left(
    \prod_{a\in A}^{\rightarrow}\pi_{\alpha_a}
    \right)
    \right|
    +O(N^{-2}).
\]
Also,
\[
    (-\phi_2)^{\ell-|A|}
    =
    (-1)^{\ell-|A|}
    \left(
    1-\frac{k(\ell-|A|)}{N}
    \right)
    +O(N^{-2}).
\]
Upon multiplying, using $\sum_{A\subseteq[\ell]}(-1)^{\ell-|A|}=0,$
and
$\sum_{A\subseteq[\ell]}(-1)^{\ell-|A|}(\ell-|A|)=0,$ we get

\[
    \left\langle
    \widetilde F_{\alpha_1}\cdots \widetilde F_{\alpha_\ell}
    \right\rangle
    =
    -\frac{k}{2N}
    \sum_{A\subseteq[\ell]}
    (-1)^{\ell-|A|}
    \left|
    \supp\!\left(
    \prod_{a\in A}^{\rightarrow}\pi_{\alpha_a}
    \right)
    \right|
    +O(N^{-2}).
\]
Since $|\supp(\pi)|=\tr(I-P_\pi)$ and $P_{\prod_a {\pi_{\alpha_a}}}=\prod_{a} P_{\alpha_a}$ we get
\[
    \left\langle
    \widetilde F_{\alpha_1}\cdots \widetilde F_{\alpha_\ell}
    \right\rangle
    =
    \frac{k}{2N}
    \tr(
    \sum_{A\subseteq[\ell]}
    (-1)^{\ell-|A|}
    \prod_{a\in A}^{\rightarrow}P_{\alpha_a})
    +O(N^{-2})=
    \frac{k}{2N}
    \tr(
    \prod_{a=1}^\ell L_\alpha)
    +O(N^{-2}). \qedhere
\]
\end{proof}
Now we prove Lemma~\ref{lemm:higher-commutav}.

\noindent {\it Proof of  Lemma~\ref{lemm:higher-commutav}:}
The case \(n=0\) is the normalization
\(\E[\langle \hat K_1^2\rangle]=1=C_0\).  We henceforth assume \(n\geq1\). We expand the commutators using
\begin{equation*}
    [\hat K_0,\hat K_1]_n
    =
    \sum_{a=0}^n(-1)^a{n\choose a}\hat K_0^{n-a}\hat K_1\hat K_0^a .
\end{equation*}
By Hermiticity of $\hat K_0,\hat K_1$ and cyclicity of the trace, 
\begin{align*}
    \left\langle
    [\hat K_0,\hat K_1]_n^\dagger [\hat K_0,\hat K_1]_n
    \right\rangle
    \nonumber
    & =
    \sum_{a,b=0}^n
    (-1)^{a+b}{n\choose a}{n\choose b}
    \left\langle
    \hat K_0^{a+b}\hat K_1\hat K_0^{2n-a-b}\hat K_1
    \right\rangle
     \nonumber\\
    & =
    \sum_{p=0}^{2n}
    (-1)^p{2n\choose p}
    \left\langle
    \hat K_0^{2n-p}\hat K_1\hat K_0^p\hat K_1
    \right\rangle.
\end{align*}

We now perform the average over the
Gaussian disorder. Note that, in terms of the traceless transpositions $\widetilde F_\alpha\coloneqq F_\alpha-\langle F_\alpha \rangle \mathbb{I}$, we can write $\hat K_\mu
    =
    2\sum_\alpha \tilde c_\alpha^{(\mu)}\widetilde F_\alpha$, where $\alpha$ runs over all unordered pairs of qubits and \(\tilde c_\alpha^{(\mu)}\) are Gaussian random numbers with zero mean and variance \(\sigma^2\).
 Then, we obtain
\begin{align}
 &\E\left[
    \left\langle
    [\hat K_0,\hat K_1]_n^\dagger [\hat K_0,\hat K_1]_n
    \right\rangle
    \right] \nonumber
=
\sum_{p=0}^{2n}(-1)^p{2n\choose p}
\E\left[
    \left\langle
    \hat K_0^{2n-p}\hat K_1\hat K_0^p\hat K_1
    \right\rangle
\right] \nonumber \\
&\qquad=
2^{2n+2}
\sum_{p=0}^{2n}(-1)^p{2n\choose p}
\sum_{\alpha_1,\ldots,\alpha_{2n}}
\sum_{\beta,\beta'}
\E\!\bigg[\tilde c^{(1)}_\beta \tilde c^{(1)}_{\beta'}\prod_{j=1}^{2n}\tilde c^{(0)}_{\alpha_j}\bigg]
\left\langle
\widetilde F_{\alpha_1}\cdots \widetilde F_{\alpha_{2n-p}}
\widetilde F_\beta
\widetilde F_{\alpha_{2n-p+1}}\cdots \widetilde F_{\alpha_{2n}}
\widetilde F_{\beta'}
\right\rangle. \label{eq:expressionformoments}
\end{align}
 By Wick's theorem, the expectation reduces to contractions over pairings (here, {\it pairings} means pairings of the variables $\alpha_j$ or $\beta$, which themselves denote pairs of qubits), with covariances $\E[\tilde c_{\alpha}^{(\mu)}\tilde c_{\alpha'}^{(\mu')}]=\sigma^2\delta_{\mu\mu'}\delta_{\alpha\alpha'},$  \begin{equation*}
    \E\!\bigg[\tilde c^{(1)}_\beta \tilde c^{(1)}_{\beta'}\prod_{j=1}^{2n}\tilde c^{(0)}_{\alpha_j}\bigg]=\sigma^{2n+2}\delta_{\beta\beta'}\sum_{\Pi\in \mathrm{Pair}(2n)}
\prod_{\{i,j\}\in\Pi}\delta_{\alpha_i\alpha_j}, 
\end{equation*}
where $\mathrm{Pair}(2n)$ denotes the set of pairings of
the labels $\{1,\ldots,2n\}$ (e.g., $\mathrm{Pair}(4)=\bigl\{\{\{1,2\}\,\{3,4\}\}, \allowbreak \{\{1,3\},\{2,4\}\}, \allowbreak \{\{1,4\},\{2,3\}\}\bigr\}$). Thus, Eq.~\eqref{eq:expressionformoments} reduces to
 \begin{equation}
   \E\left[
    \left\langle
    [\hat K_0,\hat K_1]_n^\dagger [\hat K_0,\hat K_1]_n
    \right\rangle
    \right] =(4\sigma^2)^{n+1}
\sum_{p=0}^{2n}(-1)^p{2n\choose p}
\sum_{\Pi\in\mathrm{Pair}(2n)}
\sum_{\bm \alpha,\beta}
\left\langle
\widetilde F_{\Pi,p}(\bm \alpha;\beta)
\right\rangle, \label{eq:centered-wick-sum}\end{equation}
where
\begin{equation}
\label{eq:Fpip-word}
\widetilde F_{\Pi,p}({\bm \alpha};\beta)
=
\widetilde F_{\alpha_{\Pi(1)}}\cdots
\widetilde F_{\alpha_{\Pi(2n-p)}}\widetilde F_\beta
\widetilde F_{\alpha_{\Pi(2n-p+1)}}\cdots
\widetilde F_{\alpha_{\Pi(2n)}}\widetilde F_\beta,
\end{equation}
and  
$\Pi(j)$ is the pair containing the label $j\in\{1,\dots,2n\}$, 
and now each $\alpha_\pi$ is labeled by a pair $\pi \in \Pi$, forming the  $n$-tuple $\bm \alpha=(\alpha_\pi)_{\pi \in \Pi}$. The sum $\sum_{\bm \alpha}$ contains a sum over each $\alpha_\pi$, each running over the $N_p$ possible pairs of qubits.  

For fixed \(\Pi,\bm\alpha,\beta\), we say that a label \(\alpha_\pi\) is
{\it connected} to \(\beta\) if there exist labels
\(\alpha_{\pi_1},\ldots,\alpha_{\pi_s}\), with \(\pi_s=\pi\in \Pi\), such that
$ \beta\cap\alpha_{\pi_1}\neq\varnothing,$ and $   \alpha_{\pi_i}\cap\alpha_{\pi_{i+1}}\neq\varnothing
    \quad (1\leq i<s)$. That is, $\beta$ and $\alpha_{\pi_1}$ share a qubit, and also $\alpha_{\pi_i}$ and $\alpha_{\pi_{i+1}}$ for each $1\leq i<s$.
Let 
\begin{equation}
    \label{eq:disconnected-labels}
    \mathrm{dis}(\bm\alpha,\beta)=\{\pi \in \Pi \,|\, \alpha_\pi \text{ is not connected to }\beta\}.
\end{equation}
We split the sum into the connected component of $\beta$ and the rest, which contains some disconnected $\alpha_{\pi}$:
\begin{align}
\label{eq:U-Nk-definition}
Q_{p}
&\coloneqq
\sum_{\Pi\in\mathrm{Pair}(2n)}
\sum_{\substack{\bm\alpha,\beta\\ \mathrm{dis}(\bm\alpha,\beta)=\varnothing}}
\left\langle
\widetilde F_{\Pi,p}(\bm\alpha;\beta)
\right\rangle,
&&
D_p
\coloneqq
\sum_{\Pi\in\mathrm{Pair}(2n)}
\sum_{\substack{\bm\alpha,\beta\\ \mathrm{dis}(\bm\alpha,\beta)\neq\varnothing}}
\left\langle
\widetilde F_{\Pi,p}(\bm\alpha;\beta)
\right\rangle 
\end{align}
so
$$\E\left[
    \left\langle
    [\hat K_0,\hat K_1]_n^\dagger [\hat K_0,\hat K_1]_n
    \right\rangle
    \right]=(4\sigma^2)^{n+1}\sum_{p=0}^{2n}(-1)^p{2n\choose p} \left(Q_p + D_p\right).$$

\begin{claim}[Only the connected component contributes] We have
\label{claim:1}
    \begin{equation}
        \sum_{p=0}^{2n}(-1)^p{2n\choose p}D_p=0.
    \end{equation}
\end{claim}
\begin{proof}
    Define  the functional $\Sigma[\,\cdot\,]$ acting on a function $p\mapsto f(p)$ by
\[
\Sigma[f]
\coloneqq
\sum_{p=0}^{2n}(-1)^p{2n\choose p}f(p).
\]
Note that the functions $f_{a,b}(p)={p\choose a}{2n-p\choose b}$ with $a,b\geq 0$ and $a+b<2n$ are annihilated by this functional:
\begin{align}
\label{eq:Delta-kills-binomial}
\Sigma[f_{a,b}]
&=
\sum_{p=0}^{2n}(-1)^p{2n\choose p}
{p\choose a}{2n-p\choose b}
=
(-1)^a
\frac{(2n)!}{a!\,b!\,(2n-a-b)!}
\sum_{q=0}^{2n-a-b}(-1)^q{2n-a-b\choose q}
=0.
\end{align}

We now show that \(D_{p}\) is a linear combination of the functions $f_{a,b}(p)$.
Fix \(\bm\alpha,\beta\) and let $s$ be the size of $\Pi\setminus \mathrm{dis}(\bm\alpha,\beta)$. If
\(\mathrm{dis}(\bm\alpha,\beta)\neq\varnothing\), then \(s<n\). Moreover, if
\(\pi\in \mathrm{dis}(\bm\alpha,\beta)\) and \(\pi'\notin \mathrm{dis}(\bm\alpha,\beta)\), then
\[
    \alpha_\pi\cap\alpha_{\pi'}=\varnothing,
    \qquad
    \alpha_\pi\cap\beta=\varnothing,
\]
because otherwise \(\alpha_\pi\) would also be connected to \(\beta\). Hence
\begin{equation}
\label{eq:Fscommuteforpiindisnotdis}
        [\widetilde F_{\alpha_\pi},\widetilde F_{\alpha_{\pi'}}]=0,
    \qquad
    [\widetilde F_{\alpha_\pi},\widetilde F_\beta]=0
    \qquad
    (\pi\in \mathrm{dis}(\bm\alpha,\beta),\ \pi'\notin \mathrm{dis}(\bm\alpha,\beta)).
\end{equation}
Write $ i_1<\cdots<i_{2s}$
for the labels \(j\in\{1,\ldots,2n\}\) such that
\(\Pi(j)\notin \mathrm{dis}(\bm\alpha,\beta)\), and write
$j_1<\cdots<j_{2n-2s}$ 
for the remaining labels, so \(\Pi(j_q)\in \mathrm{dis}(\bm\alpha,\beta)\).  Suppose that
exactly \(a\) of \(i_1,\ldots,i_{2s}\) are in \(\{2n-p+1,\ldots,2n\}\), i.e.,
\(i_{2s-a}\leq 2n-p<i_{2s-a+1}\).

By Eq.~\eqref{eq:Fscommuteforpiindisnotdis} we may perform the following commutation:
\begin{align*}
\widetilde F_{\Pi,p}(\bm\alpha;\beta)
&=
\left(
\prod_{r=1}^{2s-a}
\widetilde F_{\alpha_{\Pi(i_r)}}
\right)
\underbrace{
\left(
\prod_{\substack{q:\,j_q\leq 2n-p}}
\widetilde F_{\alpha_{\Pi(j_q)}}
\right)
}_{(\star)}
\widetilde F_\beta
\left(
\prod_{r=2s-a+1}^{2s}
\widetilde F_{\alpha_{\Pi(i_r)}}
\right)
\underbrace{
\left(
\prod_{\substack{q:\,j_q>2n-p}}
\widetilde F_{\alpha_{\Pi(j_q)}}
\right)
}_{(\star\star)}
\widetilde F_\beta
\nonumber\\
&=
\left(
\prod_{r=1}^{2s-a}
\widetilde F_{\alpha_{\Pi(i_r)}}
\right)
\widetilde F_\beta
\left(
\prod_{r=2s-a+1}^{2s}
\widetilde F_{\alpha_{\Pi(i_r)}}
\right)
\widetilde F_\beta
\underbrace{
\left(
\prod_{q=1}^{2n-2s}
\widetilde F_{\alpha_{\Pi(j_q)}}
\right)
}_{(\star)\times(\star\star)} ,
\end{align*}
and hence
\begin{align*}
\bigg\langle
\widetilde F_{\Pi,p}(\bm\alpha;\beta)
\bigg\rangle
&=
\bigg\langle
\bigg(
\prod_{r=1}^{2s-a}
\widetilde F_{\alpha_{\Pi(i_r)}}
\bigg)
\widetilde F_\beta
\bigg(
\prod_{r=2s-a+1}^{2s}
\widetilde F_{\alpha_{\Pi(i_r)}}
\bigg)
\widetilde F_\beta
\bigg(
\prod_{q=1}^{2n-2s}
\widetilde F_{\alpha_{\Pi(j_q)}}
\bigg)
\bigg\rangle \eqqcolon W_{s,a},
\label{eq:trace-after-moving-J}
\end{align*}
where the dependence of the right-hand side on $p$ is now entirely through \(a\). Thus, we can collect
\begin{equation*}
    D_p=\sum_{\Pi\in\mathrm{Pair}(2n)}
\sum_{\substack{\bm\alpha,\beta\\ \mathrm{dis}(\bm\alpha,\beta)\neq\varnothing}}
\left\langle
\widetilde F_{\Pi,p}(\bm\alpha;\beta)
\right\rangle 
=
\sum_{s=0}^{n-1}\sum_{a=0}^{2s}
W_{s,a}
m_{s,a,p},
\end{equation*}
where $m_{s,a,p}$ counts the possible choices for $a$ and the upper limit \(s\leq n-1\) follows from \(\mathrm{dis}(\bm\alpha,\beta)\neq\varnothing\). Let us count the possible choices for $a$. Recall that \(s=|\Pi\setminus \mathrm{dis}(\bm \alpha,\beta)|\), and suppose that exactly \(a\) of the \(2s\) positions in the word $\widetilde F_{\Pi,p}(\bm \alpha,\beta)$ whose
labels are not in \(\mathrm{dis}(\bm \alpha,\beta)\) occur to the right of the first $\widetilde F_\beta$, i.e., among the last \(p\)
positions. Then the other \(2s-a\) such positions occur to the left of $\widetilde F_\beta$. The number of choices of these positions is then exactly
$m_{s,a,p}={p\choose a}{2n-p\choose 2s-a}$.

Applying the functional \(\Sigma\) we can then conclude 
\begin{align}
\sum_{p=0}^{2n}(-1)^p{2n\choose p}D_{p}
&=\Sigma\left[p\mapsto \sum_{s=0}^{n-1}\sum_{a=0}^{2s}
W_{s,a}
m_{s,a,p}\right ]=
\sum_{s=0}^{n-1}\sum_{a=0}^{2s}
W_{s,a}
\Sigma\left[p\mapsto
{p\choose a}{2n-p\choose 2s-a}
\right]
\nonumber=0,
\end{align}
where we used \(a+(2s-a)=2s<2n\) to apply Eq. \eqref{eq:Delta-kills-binomial} with $b=2s-a$.
\end{proof}
Because of Claim~\ref{claim:1}, we obtain
$$\E\left[
    \left\langle
    [\hat K_0,\hat K_1]_n^\dagger [\hat K_0,\hat K_1]_n
    \right\rangle
    \right]=(4\sigma^2)^{n+1}\sum_{p=0}^{2n}(-1)^p{2n\choose p} Q_{p}.$$
The proof is complete once we prove the following.
\begin{claim}
For fixed \(p,n\), and \(k\), with \(P_{n,p}\) and \(g_r\) defined by
Eqs.~\eqref{eq:G-series-higher-comm} and~\eqref{eq:P-series-higher-comm},
\begin{equation}
        (4\sigma^2)^{n+1} Q_{p}
        =
        \frac{1}{2k^n}\left(
        P_{n,p}
        +
        \begin{cases}
        g_{p/2}g_{n-p/2}, & p\ {\rm even},\\
        0, & p\ {\rm odd}
        \end{cases}
        \right)
        +O(N^{-1}) .
\end{equation}
\end{claim}

\begin{proof}
We begin by applying Lemma~\ref{lemm:centered-character-estimate} to \(Q_p\):
\begin{align*}
Q_p
&=
\sum_{\Pi\in\mathrm{Pair}(2n)}
\sum_{\substack{\bm\alpha,\beta\\ \mathrm{dis}(\bm\alpha,\beta)=\varnothing}}
\left[
\frac{k}{2N}
\tr\!\left(
L_{\alpha_{\Pi(1)}}\cdots
L_{\alpha_{\Pi(2n-p)}}L_\beta
L_{\alpha_{\Pi(2n-p+1)}}\cdots
L_{\alpha_{\Pi(2n)}}L_\beta
\right)
+O(N^{-2})
\right].
\end{align*}
Here the trace is taken over $\mathbb{R}^v$, where $v$ counts the total number of qubits appearing in \(\bm\alpha,\beta\), each labeling a canonical basis vector $e_i$, and \(L_{\{i,j\}}=(e_i-e_j)(e_i-e_j)^T=-(P_{\{i,j\}}-\mathbb{I})\) (the sign cancels since the number of terms is even).

 We now identify the terms that contribute at leading order in $N$. These terms are associated with {\it maximal trees}, in the following sense.  Let
\(G(\bm\alpha,\beta)\) be the graph whose vertices are the qubits appearing
in the pairs \(\beta\) and \(\alpha_\Pi\), and whose edges are those pairs. Let \(E\) be the number of distinct edges
and \(v\) the number of distinct qubits in this graph. Because we are summing only over configurations with
\(\mathrm{dis}(\bm\alpha,\beta)=\varnothing\), the graph
\(G(\bm\alpha,\beta)\) is connected. Hence $v\le E+1$.
Moreover \(E\le n+1\): there are at most \(n\) distinct \(\alpha_\pi\) edges, plus the distinguished edge \(\beta\). Therefore $v\le n+2.$
For a fixed graph using \(v\) distinct qubits, the number of embeddings into
\(\{1,\ldots,N\}\) is \(O(N^v)\). Thus, the leading order corresponds to graphs with the maximal $v=n+2$, and hence $E=n+1$: a tree with $n+1$ edges. Since $ (4\sigma^2)^{n+1}
    =
    \frac{1}{N^{n+1}k^{n+1}}(1+O(N^{-1})),$ we obtain
    \begin{align*}
         (4\sigma^2)^{n+1}Q_p
&=
\frac{1}{2N^{n+2}k^n}
\sum_{\Pi\in\mathrm{Pair}(2n)}
\sum_{\substack{\bm\alpha,\beta\\ G(\bm\alpha,\beta)\text{ is a tree}\\
|E(G)|=n+1}}
\tr\!\left(
L_{\alpha_{\Pi(1)}}\cdots
L_{\alpha_{\Pi(2n-p)}}L_\beta
L_{\alpha_{\Pi(2n-p+1)}}\cdots
L_{\alpha_{\Pi(2n)}}L_\beta
\right)
+O(N^{-1}) ,
    \end{align*}
where we summed the error term over the \(O(N^{n+2})\) leading
tree configurations.

We now transform the sums over pairings $\Pi$ and over pairs of qubits $\bm \alpha,\beta$ into sums over trees and {\it words}. This reduces the problem to counting such objects.
Let us label qubits such that \(\beta=\{0,1\}\), which we will take to be the {\it root} of the tree. Denote by \(\mathcal T_n\) the set of trees on the
vertex set \(\{0,1,\ldots,n+1\}\) which contain the distinguished edge
\(\beta\). For \(T\in\mathcal T_n\), write
\(E(T)\setminus\{\beta\}=\{e_1,\ldots,e_n\}\), and let \(\mathcal W(T)\) be
the set of words \(w=w_1\cdots w_{2n}\) in the alphabet
\(E(T)\setminus\{\beta\}\) such that each non-root edge appears exactly twice. The data of a pairing \(\Pi\in\mathrm{Pair}(2n)\), together with an assignment
\(\pi\mapsto\alpha_\pi\) of distinct non-root edges to the \(n\) pairs
\(\pi\in\Pi\), is equivalent to such a word: the letter in position \(j\) is
the edge assigned to the pair \(\Pi(j)\). Conversely, any word in which every
edge appears exactly twice determines the pairing by pairing the two
positions occupied by the same edge.

 The dependence on \(N\) now enters only through the choice of the
\(n+2\)  qubits supporting the maximal tree. We may first choose
this support, which gives \({N\choose n+2}\) choices. Once the support is
chosen, we simply have to select which pair of qubits forms the distinguished edge $\beta$, giving  $\frac{1}{2}(n+2)(n+1)$ more options.  Therefore
\begin{align*}
&\sum_{\Pi\in\mathrm{Pair}(2n)}
\sum_{\substack{\bm\alpha,\beta\\ G(\bm\alpha,\beta)\text{ is a tree}\\
|E(G)|=n+1}}
\tr\!\left(
L_{\alpha_{\Pi(1)}}\cdots
L_{\alpha_{\Pi(2n-p)}}L_\beta
L_{\alpha_{\Pi(2n-p+1)}}\cdots
L_{\alpha_{\Pi(2n)}}L_\beta
\right)
\nonumber\\
&\qquad =
{N\choose n+2}\frac{(n+1)(n+2)}{2}
\sum_{T\in\mathcal T_n}
\sum_{w\in\mathcal W(T)}
\tr\!\left(
L_{w_1}\cdots L_{w_{2n-p}}L_\beta
L_{w_{2n-p+1}}\cdots L_{w_{2n}}L_\beta
\right).
\end{align*}
Since
\(\binom{N}{n+2}=N^{n+2}+O(N^{n+1})\), we obtain
\begin{align*}
    (4\sigma^2)^{n+1}Q_p
    =
    \frac{A_{n,p}}{2k^n}+O(N^{-1}),
\end{align*}
where we defined
\begin{equation}
\label{eq:Anp-rhc}
A_{n,p}
\coloneqq
\frac{1}{2n!}
\sum_{T\in\mathcal T_n}
\sum_{w\in\mathcal W(T)}
\tr\!\left(
L_{w_1}\cdots L_{w_{2n-p}}L_\beta
L_{w_{2n-p+1}}\cdots L_{w_{2n}}L_\beta
\right),
\end{equation}
which is independent of $N$. Thus, we have removed all the $N$ dependence, and it simply remains to compute the coefficients \(A_{n,p}\). 

To compute the coefficients \(A_{n,p}\), we analyze the combinatorics in the space of trees. Write, for a word
\(u=u_1\cdots u_m\) in edges of \(T\setminus\{\beta\}\),
\[
    L_u \coloneqq L_{u_1}\cdots L_{u_m},
    \qquad L_{\varnothing}=\mathbb{I}.
\]
Since \(L_\beta=(e_0-e_1)(e_0-e_1)^T\), if
\(u=w_1\cdots w_{2n-p}\) and
\(v=w_{2n-p+1}\cdots w_{2n}\), then
\begin{align*}
    A_{n,p}
&=
\frac{1}{2 n!}
\sum_{\substack{T\in\mathcal{T}_n\\ \mathclap{w\in\mathcal W(T)}}}
\tr\!\left(
L_u(e_0-e_1)(e_0-e_1)^T
L_v(e_0-e_1)(e_0-e_1)^T
\right)
\\
&=
\frac{1}{2 n!}
\sum_{\substack{T\in\mathcal{T}_n\\ \mathclap{w\in\mathcal W(T)}}}
\left((e_0-e_1)^T L_u(e_0-e_1)\right)
\left((e_0-e_1)^T L_v(e_0-e_1)\right)
\\
&=
\frac{1}{2n!}
\sum_{\substack{T\in\mathcal{T}_n\\ \mathclap{w\in\mathcal W(T)}}}
\left(e_0^T L_u e_0+e_1^T L_u e_1\right)
\left(e_0^T L_v e_0+e_1^T L_v e_1\right)
\\
&=
\underbrace{\frac{1}{2 n!}
\sum_{\substack{T\in\mathcal{T}_n\\ \mathclap{w\in\mathcal W(T)}}}
\left[
(e_0^T L_u e_0)(e_0^T L_v e_0)
+
(e_1^T L_u e_1)(e_1^T L_v e_1)
\right]}_{\eqqcolon \, P_{n,p}}
+
\underbrace{\frac{1}{2n!}
\sum_{\substack{T\in\mathcal{T}_n\\ \mathclap{w\in\mathcal W(T)}}}
\left[
(e_0^T L_u e_0)(e_1^T L_v e_1)
+
(e_1^T L_u e_1)(e_0^T L_v e_0)
\right]}_{(\star)} .
\end{align*}
Let us focus on $(\star)$. The factors $e_a^T L_u e_a$ are zero unless the word $u$ is supported on the side of the tree attached to the vertex $a$ of $\beta$. Thus \(e_0^T L_u e_0\) vanishes unless \(u\) lies entirely on the
\(0\)-side of \(\beta\), while \(e_1^T L_v e_1\) vanishes unless \(v\) lies
entirely on the \(1\)-side. Hence the cross terms in $(\star)$ can be nonzero only when \(u\)
and \(v\) lie on opposite sides of \(\beta\), which forces \(p\) to be even. For even $p=2r$, we can split into two separate contributions where \(u\) uses exactly the \(n-r\) edges on the \(0\) side of \(\beta\), and \(v\) uses exactly the \(r\) edges on the \(1\) side, weighted by the combinatorial factor $\binom{n}{r}$. To write this in equations, let us denote by \(\mathcal T_m^{(a)}\) the set of rooted trees
with \(m+1\) vertices, root \(a\in\{0,1\}\), and with vertex set disjoint from the other
endpoint of \(\beta\).  For \(T\in\mathcal T_m^{(a)}\),
\(\mathcal W(T)\) denotes the set of words in the \(m\) edges of \(T\), with
each edge appearing exactly twice, and define
\begin{equation}
    \label{eq:gn-rhc}
     g_m
    \coloneqq
    \frac{1}{m!}
    \sum_{T\in\mathcal T_m^{(a)}}
    \sum_{w\in\mathcal W(T)}
    e_a^T L_{w} e_a,
    \qquad a\in\{0,1\}.
\end{equation}

Then
\begin{align*}
(\star)
&=
\frac{1}{2 n!}\,
2{n\choose r}
\bigg[
\sum_{T_0\in\mathcal T_{n-r}^{(0)}}
\sum_{u\in\mathcal W(T_0)}
e_0^T L_{u} e_0
\bigg]
\bigg[
\sum_{T_1\in\mathcal T_r^{(1)}}
\sum_{v\in\mathcal W(T_1)}
e_1^T L_{v} e_1
\bigg]
=
\frac{1}{2 n!}\,
2{n\choose r}
\bigl((n-r)!g_{n-r}\bigr)
\bigl(r!g_r\bigr)
=
g_rg_{n-r}.
\end{align*}
Therefore, we have shown
\[
    A_{n,p}
    =
    P_{n,p}
    +
    \begin{cases}
    g_{p/2}g_{n-p/2}, & p\ {\rm even},\\
    0, & p\ {\rm odd}.
    \end{cases}
\]
It remains to show that $P_{n,p}$ indeed satisfies Eq.~\eqref{eq:P-series-higher-comm} and that $G(y)=\sum_{n\geq0}g_n y^n$ satisfies Eq.~\eqref{eq:G-series-higher-comm}.

First observe that
$$P_{n,p}=\frac{1}{n!}
\sum_{\substack{T\in\mathcal{T}_n^{(0)}\\ w\in\mathcal W(T)}}
(e_0^T L_u e_0)(e_0^T L_v e_0)=\frac{1}{n!}
\sum_{\substack{T\in\mathcal{T}_n^{(1)}\\ w\in\mathcal W(T)}}
(e_1^T L_u e_1)(e_1^T L_v e_1)$$ 
for $u=w_1\cdots w_{2n-p}$ and $v=w_{2n-p+1}\cdots w_{2n},$ for any $p$. Note first that if \(n=0\) or \(p=0\), then trivially
\(g_n=P_{n,p}\). Thus assume \(n\neq 0\) and \(p\neq 0\). We calculate by inserting a 
resolution of the identity $\sum_x e_xe_x^T=\mathbb I$:
\begin{align*}
g_n
&=
\frac{1}{n!}
\sum_{T\in\mathcal T_n^{(0)}}
\sum_{w\in\mathcal W(T)}
e_0^T L_uL_v e_0
\\
&=
\frac{1}{n!}
\sum_{T\in\mathcal T_n^{(0)}}
\sum_{w\in\mathcal W(T)}
\sum_{x=0}^{n+1}
\left(e_0^T L_u e_x\right)
\left(e_x^T L_v e_0\right)
\\
&=
P_{n,p}
+
\frac{1}{n!}
\sum_{T\in\mathcal T_n^{(0)}}
\sum_{w\in\mathcal W(T)}
\sum_{x=1}^{n+1}
\left(e_0^T L_u e_x\right)
\left(e_x^T L_v e_0\right)
\\
&=
P_{n,p}
+
\frac{1}{n!}
\sum_{\substack{a+b=n-1\\ q+r=p-1}}
{n\choose a,b,1}
\bigg[
\sum_{T\in\mathcal T_a^{(0)}}
\sum_{\substack{w=uv\in\mathcal W(T)\\ |v|=q}}
\left(e_0^T L_u e_0\right)
\left(e_0^T L_v e_0\right)
\bigg]
\\
&\qquad\qquad\times
\bigg[
\sum_{T\in\mathcal T_b^{(1)}}
\sum_{\substack{w=uv\in\mathcal W(T)\\ |v|=r}}
\sum_{x=1}^{b+1}
\left(e_1^T L_u e_x\right)
\left(e_x^T L_v e_1\right)
\bigg]
\\
&=
P_{n,p}
+
\frac{1}{n!}
\sum_{\substack{a+b=n-1\\ q+r=p-1}}
{n\choose a,b,1}
\bigg[
\sum_{T\in\mathcal T_a^{(0)}}
\sum_{\substack{w=uv\in\mathcal W(T)\\ |v|=q}}
\left(e_0^T L_u e_0\right)
\left(e_0^T L_v e_0\right)
\bigg]
\bigg[
\sum_{T\in\mathcal T_b^{(1)}}
\sum_{w\in\mathcal W(T)}
e_1^T L_w e_1
\bigg]
\\
&=
P_{n,p}
+
\frac{1}{n!}
\sum_{\substack{a+b=n-1\\ q+r=p-1}}
\frac{n!}{a!\,b!}
\left(a!P_{a,q}\right)
\left(b!g_b\right)
\\
&=
P_{n,p}
+
\sum_{\substack{a+b=n-1\\ q+r=p-1}}
P_{a,q}g_b.
\end{align*}
The fourth equality is the only nontrivial step. It follows from the following decomposition. For the summand \((e_0^T L_u e_x)(e_x^T L_v e_0)\) to be nonzero, the first edge on the path from \(0\) to \(x\)
 must be crossed once by \(u\) and once by \(v\). Removing this edge separates
\(T\) into a component containing \(0\), with \(a\) edges, and a component
containing \(x\), with \(b\) edges, so \(a+b=n-1\). If \(q\) and \(r\) are the
numbers of letters of \(v\) lying in these two components, then
\(q+r=p-1\), since the removed edge also appears in \(v\). The first component
gives \(a!P_{a,q}\). For the second component, we relabel the endpoint of the
removed edge as \(1\), so the component becomes an element of
\(\mathcal T_b^{(1)}\). Under this relabeling, the sum over \(x\) becomes the
resolution of the identity on this component with $b+1$ vertices,
\(\sum_{x=1}^{b+1}e_xe_x^T=I\), and gives \(b!g_b\). Finally, the multinomial coefficient
\({n\choose a,b,1}\) counts the choice of the two components and the separating
edge.

Therefore, defining
\[
H(y,z)=\sum_{n\ge0}g_ny^n \sum_{p=0}^{2n}z^p
\qquad\text{and}\qquad
P(y,z)=\sum_{n\ge0}\sum_{p=0}^{2n}P_{n,p}z^py^n,
\]
we obtain the relation
\begin{align*}
H(y,z)
&=
\sum_{n\ge0}\sum_{p=0}^{2n}P_{n,p}z^py^n
+
\sum_{n\ge1}\sum_{p=1}^{2n}
\sum_{\substack{a+b=n-1\\ q+r=p-1}}
P_{a,q}g_{b}z^py^n
=
P(y,z)
+
\sum_{a,b\ge0}
\sum_{q=0}^{2a}
\sum_{r=0}^{2b}
P_{a,q}g_{b}z^{q+r+1}y^{a+b+1}
\\
&=
P(y,z)
+
zy
\left(
\sum_{a\ge0}\sum_{q=0}^{2a}P_{a,q}z^qy^a
\right)
\left(
\sum_{b\ge0}\sum_{r=0}^{2b}g_{b}z^ry^b
\right)
=
P(y,z)+zyP(y,z)H(y,z)\,,
\end{align*}
which implies Eq.~\eqref{eq:P-series-higher-comm}. Further, $
H(y,1)
=
P(y,1)+yP(y,1)H(y,1)$
and 
\begin{align*}
H(y,1)
=
\sum_{n\ge0}(2n+1)g_ny^n
=
\sum_{n\ge0}g_ny^n
+
2y\sum_{n\ge1}ng_ny^{n-1}
=
G(y)+2yG'(y).
\end{align*}
So
\begin{align*}
P(y,1)
&=
\frac{H(y,1)}{1+yH(y,1)}
=
\frac{G(y)+2yG'(y)}
{1+yG(y)+2y^2G'(y)}.
\end{align*}
From the above we get Eq.~\eqref{eq:G-series-higher-comm}:
\begin{align*}
\frac{1}{1+yG(y)+2y^2G'(y)}
&=
\frac{1}{1+y\bigl(G(y)+2yG'(y)\bigr)}
=
\frac{1}{1+yH(y,1)}
=
1-y\frac{H(y,1)}{1+yH(y,1)}
=
1-yP(y,1)\\
&=
1-\bigl(G(y)-1-yG(y)^2\bigr)
=
2-G(y)+yG(y)^2 .\qedhere
\end{align*}

\end{proof}

\subsection{Moments of random chiral couplings}
\label{sec:rcc-moments}
We can generalize the ensemble of 2-local RHCs to 3-local operators by replacing
transpositions with 3-cycles.  We define a random chiral coupling (RCC)
by
\begin{equation}
\label{eq:rcc-def}
    \hat K
    =
    \sum_{i<j<\ell}
    c_{ij\ell}\,
    i\left(F_{(ij\ell)}-F_{(i\ell j)}\right)
    =
    \frac{1}{2}\sum_{i<j<\ell}
    c_{ij\ell}\,
    \bm\sigma_i\cdot
    \left(\bm\sigma_j\times\bm\sigma_\ell\right),
\end{equation}
where $\bm\sigma_i=(\hat X_i,\hat Y_i,\hat Z_i)$ and
$F_{(ij\ell)}=F_{ij}F_{j\ell}$.  The second equality follows from
$F_{ij}=(\mathbb I+\bm\sigma_i\cdot\bm\sigma_j)/2$.  We take the coefficients to be independent centered Gaussians with
variance $\sigma_3^2$ and choose $\sigma_3$ so that
$\E[\langle\hat K^2\rangle]=1$.  Since $ \left\langle
    \left(F_{(ij\ell)}-F_{(i\ell j)}\right)^2
    \right\rangle
    =
    2(\phi_3-1)$, by Eq.~\eqref{eq:fixed-k-character-estimate}
the normalization is
\begin{equation}
\label{eq:rcc-std}
    \sigma_3^{-2}=2{N\choose 3}(1-\phi_3)
    =
    \frac{kN^2}{2}+O(N).
\end{equation}

The commutator moments of RCCs obey a simpler formula than the corresponding
moments of RHCs.
\begin{lemma}
\label{lemm:higher-commutav-rcc}
Let $\hat K_0$ and $\hat K_1$ be independent RCCs defined as in
Eq.~\eqref{eq:rcc-def}.  For each
fixed $n$ and fixed $k$,
\begin{align}
    \E\left[
    \left\langle
    [\hat K_0,\hat K_1]_n^\dagger [\hat K_0,\hat K_1]_n
    \right\rangle
    \right]
    =
    \frac{C_n^{(3)}}{k^n}+O(N^{-1}),
&&\text{
where}&&
    C_n^{(3)}
    =
    \frac{2^n}{(n+1)(n+2)}
    {2n\choose n}{2n+2\choose n+1}.\label{eq:Cn-rcc}
\end{align}
\end{lemma}
\begin{proof}
We follow the proof of Lemma~\ref{lemm:higher-commutav} step by step.  Let
\(\alpha=\{i,j,\ell\}\) denote a triple of sites, and write
\[
    \widetilde F_\alpha
    \coloneqq
    i\left(F_{(ij\ell)}-F_{(i\ell j)}\right),
    \qquad
    \hat K_\mu=\sum_{\alpha}c_\alpha^{(\mu)}
    \widetilde F_\alpha .
\]
The commutator expansion is exactly the same one used to obtain
Eq.~\eqref{eq:expressionformoments}.  Applying Wick's theorem to the Gaussian
coefficients gives the same expression as Eq.~\eqref{eq:centered-wick-sum},
\begin{equation}
\label{eq:rcc-wick-sum}
\E\left[
    \left\langle
    [\hat K_0,\hat K_1]_n^\dagger [\hat K_0,\hat K_1]_n
    \right\rangle
    \right]
=
\sigma_3^{2n+2}
\sum_{p=0}^{2n}(-1)^p{2n\choose p}
\sum_{\Pi\in\mathrm{Pair}(2n)}
\sum_{\bm \alpha,\beta}
\left\langle
\widetilde F_{\Pi,p}(\bm\alpha;\beta)
\right\rangle.
\end{equation}
Here \(\widetilde F_{\Pi,p}\) is defined by the same word as in
Eq.~\eqref{eq:Fpip-word}, with \(\beta\) and the \(\alpha_\pi\)'s now triples.

The next step parallels Lemma~\ref{lemm:centered-character-estimate}.
For a fixed word of triples, let \(L_\alpha\) be the finite permutation
matrix \(i(P_{(ij\ell)}-P_{(i\ell j)})\), acting on the labels that appear in
the word.  Since each \(\widetilde F_\alpha\) is a difference of two
permutations with coefficients summing to zero, the proof of
Lemma~\ref{lemm:centered-character-estimate} applies without change: only the
support size of the resulting permutation enters the \(O(N^{-1})\) term.
Therefore
\begin{equation}
\label{eq:rcc-character-estimate}
    \left\langle
    \widetilde F_{\alpha_1}
    \cdots
    \widetilde F_{\alpha_m}
    \right\rangle
    =
    \frac{k}{2N}
    \tr\left(
    L_{\alpha_1}
    \cdots
    L_{\alpha_m}
    \right)
    +O(N^{-2}).
\end{equation}
We use the definition of connectedness from Eq.~\eqref{eq:disconnected-labels},
with \(\alpha_\pi\) and \(\beta\) now triples.  Equivalently, connectedness is
computed in the 3-uniform hypergraph where two triples are adjacent if they
share at least one site.  The disconnected part of Eq.~\eqref{eq:rcc-wick-sum}
vanishes by the
same argument as Claim~\ref{claim:1}.  Indeed, disjoint triples commute, and
after commuting disconnected components to the end, the remaining dependence on
\(p\) is again killed by the finite-difference identity
Eq.~\eqref{eq:Delta-kills-binomial}.

It remains to evaluate the terms in which the triples form a connected
configuration.  After applying Wick's theorem there are \(n\) triples from
\(\hat K_0\), together with the distinguished triple \(\beta\) from
\(\hat K_1\).  A connected 3-uniform hypergraph with \(m\) hyperedges has at
most \(1+2m\) vertices, so here the number of vertices is at most \(2n+3\).
Combining Eq.~\eqref{eq:rcc-character-estimate} with Eq.~\eqref{eq:rcc-std}
gives
\[
    \sigma_3^{2n+2}\frac{k}{2N}
    =
    \frac{2^n}{k^nN^{2n+3}}+O(N^{-2n-4}).
\]
Thus a connected configuration with \(v\) vertices contributes at order
\(N^{v-(2n+3)}\).  Only the maximal case \(v=2n+3\) survives as
\(N\to\infty\).  These maximal connected configurations are {\it loose hypertrees}
made of triples: connected collections in which each added triple shares
exactly one site with the previous collection and contributes two new sites.

We can now define the coefficient \(A_{n,p}^{(3)}\), analogously to
\(A_{n,p}\) in Eq.~\eqref{eq:Anp-rhc}.  Fix
\(\beta=\{0,1,2\}\), which plays the role of the root, and let
\(\mathcal T_n^{(3)}\) be the set of such loose hypertrees \(T\) with
\(V(T)=\{0,1,\ldots,2n+2\}\) and \(\beta\in E(T)\).  For
\(T\in\mathcal T_n^{(3)}\), let \(\mathcal W(T)\) be the set of words in the
\(n\) hyperedges of \(T\) other than \(\beta\), with each such hyperedge
appearing exactly twice.
Each term that survives at leading order uses \(2n+3\) distinct sites.  Choosing
these sites and then choosing which three of them form \(\beta\) gives
\[
    {N\choose 2n+3}{2n+3\choose 3}
    =
    \frac{N^{2n+3}}{6(2n)!}+O(N^{2n+2}).
\]
Combining this with the factor \(2^n/(k^nN^{2n+3})\) above, one arrives at
\begin{equation}
\label{eq:Anp-rcc}
    A_{n,p}^{(3)}
    \coloneqq
    \frac{2^n}{6(2n)!}
    \sum_{T\in\mathcal T_n^{(3)}}
    \sum_{w\in\mathcal W(T)}
    \tr\!\left(
    L_{w_1}\cdots L_{w_{2n-p}}L_\beta
    L_{w_{2n-p+1}}\cdots L_{w_{2n}}L_\beta
    \right),
\end{equation}
so that
\[
    \E\left[
    \left\langle
    [\hat K_0,\hat K_1]_n^\dagger [\hat K_0,\hat K_1]_n
    \right\rangle
    \right]
    =
    \frac{1}{k^n}
    \sum_{p=0}^{2n}(-1)^p{2n\choose p}A_{n,p}^{(3)}
    +O(N^{-1}).
\]

We finally evaluate \(A_{n,p}^{(3)}\).  Write
\(u=w_1\cdots w_{2n-p}\) and \(v=w_{2n-p+1}\cdots w_{2n}\), and set
\(L_u=L_{u_1}\cdots L_{u_{|u|}}\), with the analogous
definition for \(L_v\).  The matrix \(L_\beta\) has zero
diagonal on the three vertices of \(\beta\) and satisfies
\((L_\beta)_{ab}(L_\beta)_{ba}=1\) for \(a\neq b\).  Therefore
the only terms that survive in the trace have the form
\[
    \tr(L_uL_\beta L_vL_\beta)
    =
    \sum_{a\neq b}
    (e_a^T L_u e_a)
    (e_b^T L_v e_b),
\]
with \(a,b\in\{0,1,2\}\).  Since \(L_\alpha\) has no identity part, a
diagonal element \(e_a^T L_u e_a\) based at \(a\) can be nonzero only
when the word \(u\) uses hyperedges entirely in the component attached to \(a\)
after removing \(\beta\).  Thus \(u\) must use one component attached to
\(\beta\), \(v\) must use a second one, and the third such component must be
empty.  In particular, \(p\) must be even.  For \(p=2r\), the two nonempty
components have
\(n-r\) and \(r\) hyperedges, respectively.

We define \(g_m^{(3)}\) analogously to \(g_n\) in Eq.~\eqref{eq:gn-rhc}, but
with the sum running over loose hypertrees made of triples, rooted at one
vertex, with \(m\) hyperedges.  To compute \(g_m^{(3)}\), consider a word that
starts at the root and returns to it.  If the word is nonempty, let
\(\alpha\) be the hyperedge used to leave the root for the first time.  Since
\(L_\alpha\) has no identity part, the next return to the root must use
\(\alpha\) again.  Between these two uses, the word lies entirely in one of the
two branches beyond \(\alpha\), and after the return the remaining letters form
an independent word rooted at the original vertex.  Therefore, for \(m\geq1\),
\(g_m^{(3)}=2\sum_{a+b=m-1}g_a^{(3)}g_b^{(3)}\), where the factor \(2\)
chooses the branch beyond \(\alpha\).  Thus the generating function
\(G^{(3)}(y)=\sum_{m\geq0}g_m^{(3)}y^m\) satisfies
\[
    G^{(3)}(y)=1+2yG^{(3)}(y)^2,
\]
and hence \(g_m^{(3)}=2^m\frac{1}{m+1}{2m\choose m}\).
After summing over the choices of the two nonempty components attached to
\(\beta\) and over the partitions of labels away from \(\beta\), the prefactor
in Eq.~\eqref{eq:Anp-rcc} accounts for the remaining combinatorial factors.
Thus
\[
    A_{n,p}^{(3)}
    =
    \begin{cases}
    g_r^{(3)}g_{n-r}^{(3)}, & p=2r\\
    0, & p\ {\rm odd}
    \end{cases}.
\]
Combining this with the alternating sum over \(p\) gives
\[
    C_n^{(3)}
    =
    \sum_{r=0}^n {2n\choose 2r}g_r^{(3)}g_{n-r}^{(3)}
    =
    2^n(2n)!
    \sum_{r=0}^n
    \frac{1}{r!(r+1)!(n-r)!(n-r+1)!}
    =
    \frac{2^n}{(n+1)(n+2)}
    {2n\choose n}{2n+2\choose n+1}.\qedhere
\]
\end{proof}

\section{Microscopic derivation of the Lindblad equation}
\label{sec:analytic-rate}
We now give a heuristic derivation of the effective Lindblad equation

\begin{equation}
\label{eq:Lindblad}
            \frac{\partial \rho}{\partial t} =\mathcal{L}_t[\rho]= -i [\hat H_{KT}(t),\rho] - \gamma \sum_{{\mathclap{\mu=x,y,z}}}\tfrac{1}{2}[\hat S_\mu,[\hat S_\mu,\rho]].
\end{equation}

\noindent To see why this equation should arise, recall the global Hamiltonian:
\begin{equation}
\label{eq:Hamiltoniananddisorder}
    \hat H(t)=\hat H_{\mathrm{KT}}(t) + \varepsilon H_{\textrm{dis}} \qquad \qquad     \hat H_{\textrm{dis}}= \hat K_x \hat S_x + \hat K_y \hat S_y +\hat K_z \hat S_z.
\end{equation}
The matrices $\hat K_x,\hat K_y,\hat K_z$ are independent samples from a bath ensemble. In the main text we focus on random Heisenberg couplings; here we also treat the Gaussian orthogonal ensemble (GOE). 
Recall that $H_{\mathrm{KT}}(t)$ and the collective spins $\hat S_\mu$ only act on $\mathcal{S}$, while each $\hat K_\mu$ only acts on $\mathcal{P}$.
To derive the Lindblad equation, we first rewrite the total Hamiltonian in the form $H_\mathcal{S}+H_\mathcal{P}+H_{\mathcal{S\mathcal{P}}}$. Although no bare bath Hamiltonian is present at first sight, an effective bath Hamiltonian emerges after expanding around a localized spin-coherent state.

\subsection{Effective bath Hamiltonian}

When the system state is localized near the north pole (positive $z$ direction) for large $N$, the state is an approximate eigenstate of $\hat{S}_z$ (with eigenvalue $S$), so the disorder Hamiltonian $\hat H_{\textrm{dis}}$ naturally breaks up into two parts, an effective bath term $S \hat K_z$ and an interaction $\hat K_x \hat S_x + \hat K_y \hat S_y$, which is analogous (as $N\to\infty$) to the standard diffusion interaction $\hat{x} B_x + \hat{p} B_p$ for a free particle in flat phase space. 

For a system initialized in a coherent state \(\ket{\Omega}\) centered at
\(\Omega=(\Omega_x,\Omega_y,\Omega_z)\) on the sphere, we can split the action of a
spin operator on this state as
\[
\hat S_\mu\ket{\Omega}
=
S\Omega_\mu\ket{\Omega}
+
(\hat S_\mu-S\Omega_\mu)\ket{\Omega},
\]
where the norm of the second term is \(O(\sqrt S)\), and is therefore small relative to the first one, which scales as $S$. This suggests decomposing the disorder into
$\varepsilon H_{\rm dis}=
\mathbb{I}_{\mathcal S}\otimes H_{\mathcal P}^{(\Omega)}
+
H_{\mathcal{SP}}^{(\Omega)}$ where
\[
H_{\mathcal P}^{(\Omega)}
=
\varepsilon S
\sum_{\mu=x,y,z}
\Omega_\mu \hat K_\mu
\]
acts only on the permutation sector for any $\Omega$, and where
\[
H_{\mathcal{SP}}^{(\Omega)}
=
\varepsilon
\sum_{\mu=x,y,z}
(\hat S_\mu-S\Omega_\mu)\otimes \hat K_\mu .
\]
Thus, when acting on a state
\(\ket{\psi}=\ket{\Omega}_{\mathcal S}\otimes\ket{\phi}_{\mathcal P}\), the
 disorder splits as
$\varepsilon H_{\rm dis}\ket{\psi}
=
\mathbb{I}_{\mathcal S}\otimes H_{\mathcal P}^{(\Omega)}\ket{\psi}
+
H_{\mathcal{SP}}^{(\Omega)}\ket{\psi}$.

We make a semiclassical approximation for the collective-spin dynamics:
an initially coherent state remains localized near a coherent state
\(\ket{\Omega(t)}\), whose center follows a classical
trajectory on the sphere. This is exactly true in the case of no system dynamics or only rotations of the sphere (we will study these cases in detail below), but we also expect it to be approximately true in the limit of large $S$. For any specified classical trajectory $\Omega(t)$, we naturally extend the above interaction and bath operators to be time-dependent:
\begin{align}
H_{\mathcal P}^{(\Omega)}(t)
:=H_{\mathcal P}^{(\Omega(t))}
=
\varepsilon S
\sum_\mu \Omega_\mu(t)\hat K_\mu ,
\qquad\qquad
H_{\mathcal{SP}}^{(\Omega)}(t)
:=H_{\mathcal{SP}}^{(\Omega(t))}
=\varepsilon
\sum_\mu
(\hat S_\mu-S\Omega_\mu(t))\otimes \hat K_\mu\,.
\end{align}
We move into the interaction picture with respect to
\(H_{\mathcal S}(t)+H_{\mathcal P}^{(\Omega)}(t)\), and define
\[
U^{(\Omega)}(t)
=
\mathcal T
\exp\left[
-i\int_0^t \dd t'\,
\left(
H_{\mathcal S}(t')\otimes\mathbb{I}_{\mathcal P}
+
\mathbb{I}_{\mathcal S}\otimes H_{\mathcal P}^{(\Omega)}(t')
\right)
\right]
=
U_{\mathcal S}(t)\otimes U_{\mathcal P}^{(\Omega)}(t).
\]
The interaction term in the interaction picture is, then,
\[
\widetilde H_{\mathcal{SP}}^{(\Omega)}(t)
=U^{(\Omega)\dagger}(t)  H_{\mathcal{SP}}^{(\Omega)}(t)
U^{(\Omega)}(t)=
\varepsilon
\sum_\mu
\Delta \widetilde S_\mu(t)
\otimes
\widetilde K_\mu(t),
\]
where
\[
\Delta \widetilde S_\mu(t)
=
U_{\mathcal S}^\dagger(t)
(\hat S_\mu-S\Omega_\mu(t))
U_{\mathcal S}(t),
\qquad
\widetilde K_\mu(t)
=
U_{\mathcal P}^{(\Omega)\dagger}(t)
\hat K_\mu
U_{\mathcal P}^{(\Omega)}(t).
\]

\subsection{Dyson series}
Having moved to the interaction picture, the full unitary dynamics are simply given by the Schr\"odinger equation $i\partial_t\widetilde \rho_{\mathcal{SP}}(t)=[\widetilde H_{\mathcal{SP}}^{(\Omega)}(t),\widetilde \rho_{\mathcal{SP}}(t)]$, starting at $\widetilde \rho_{\mathcal{SP}}(0)= \rho_{\mathcal{SP}}(0)$.  We integrate these dynamics in a Dyson series up to a certain {\it observation time} $T$ for which we ask about the empirical rate $\gamma$, which, as we will see, may in principle depend on $T$.

\begin{align}
\widetilde\rho_{\mathcal S\mathcal P}(T)
&=
\widetilde\rho_{\mathcal S\mathcal P}(0)
-i\int_0^T \dd t_1\,
[
\widetilde H_{\mathcal{SP}}^{(\Omega)}(t_1),
\widetilde\rho_{\mathcal S\mathcal P}(0)
]
-\int_0^T \dd t_1
\int_0^{t_1}\dd t_2\,
[
\widetilde H_{\mathcal{SP}}^{(\Omega)}(t_1),
[
\widetilde H_{\mathcal{SP}}^{(\Omega)}(t_2),
\widetilde\rho_{\mathcal S\mathcal P}(0)
]
]
+\cdots.
\end{align}
We compare this to the leading change from the dissipative part of the Lindblad equation~\eqref{eq:Lindblad} in the weak-decoherence regime, \(\gamma T\ll 1\):
\begin{equation}
\label{eq:lidbladexpansion}
    \widetilde\rho_{\mathcal S}(T) - \rho_{\mathcal S}(0)
    =
    -\frac{\gamma }{2}
    \sum_{\mu}
    \int_0^T \dd s\,
    [
    \widetilde S_\mu(s),
    [
    \widetilde S_\mu(s),
    \rho_{\mathcal S}(0)
    ]
    ]
    +O((\gamma T)^2),
\end{equation}
where
\begin{align}
    \widetilde S_\mu(s)
    =
    U_{\mathcal S}^\dagger(s)\hat S_\mu U_{\mathcal S}(s)\,,\qquad 
    \widetilde\rho_{\mathcal S}(t)
    =
    U_{\mathcal S}^\dagger(t)\rho_{\mathcal S}(t)U_{\mathcal S}(t). 
\end{align}
The rate \(\gamma\) can be extracted by matching the
second-order microscopic change to the coefficient multiplying the double
commutator in this weak-decoherence expansion. The higher-order terms in the Dyson series generate the $O((\gamma T)^2)$ terms in Eq.~\eqref{eq:lidbladexpansion}, so for the purposes of extracting \(\gamma\), we may truncate them. We then further approximate the bath as initially being in an infinite-temperature state
\begin{align}\label{eq:infinite-temp}
\widetilde\rho_{\mathcal S\mathcal P}(0)
=
\rho_{\mathcal S}(0)\otimes
\frac{\mathbb I_{\mathcal P}}{d_{\mathcal P}},
\end{align}
 but do not invoke the Born approximation in the usual sense. 

By Eq.~\eqref{eq:infinite-temp}, the first-order term vanishes after tracing over \(\mathcal P\), since
\(\langle \hat K_\mu\rangle=0\). Thus, tracing out the bath yields
\begin{align}
\widetilde\rho_{\mathcal S}(T) -  \rho_{\mathcal{S}}(0)
&\approx
-\operatorname{tr}_{\mathcal P}
\int_0^T \dd t_1
\int_0^{t_1}\dd t_2\,
\left[
\widetilde H_{\mathcal{SP}}^{(\Omega)}(t_1),
\left[
\widetilde H_{\mathcal{SP}}^{(\Omega)}(t_2),
\rho_{\mathcal S}(0)\otimes
\frac{\mathbb I_{\mathcal P}}{d_{\mathcal P}}
\right]
\right]
\nonumber\\
&=
-\varepsilon^2
\sum_{\mu,\nu}
\int_0^T \dd t_1
\int_0^{t_1}\dd t_2\,
\operatorname{tr}_{\mathcal P}
\left(
\left[
\Delta\widetilde S_\mu(t_1)\otimes \widetilde K_\mu(t_1),
\left[
\Delta\widetilde S_\nu(t_2)\otimes \widetilde K_\nu(t_2),
\rho_{\mathcal S}(0)\otimes
\frac{\mathbb I_{\mathcal P}}{d_{\mathcal P}}
\right]
\right]
\right)
\\&=
-\varepsilon^2
\sum_{\mu,\nu}
\int_0^T dt_1
\int_0^{t_1}dt_2\,
\langle \widetilde K_\mu(t_1)\widetilde K_\nu(t_2)\rangle
\left[
\Delta \widetilde{S}_\mu(t_1),
\left[
\Delta \widetilde{S}_\nu(t_2),
\rho_{\mathcal S}(0)\right]\right]\\&=
-\varepsilon^2
\sum_{\mu,\nu}
\int_0^T \dd t_1
\int_0^{t_1}\dd t_2\,
B_{\mu\nu}(t_1,t_2)
\left[
\widetilde S_\mu(t_1),
\left[
\widetilde S_\nu(t_2),
\rho_{\mathcal S}(0)
\right]
\right],\label{Eq:secondorderdiff}
\end{align}
where we introduced the bath correlation function
\[
B_{\mu\nu}(t_1,t_2)
=
\left\langle 
\widetilde K_\mu(t_1)\widetilde K_\nu(t_2) 
\right\rangle =\left\langle
\hat K_\mu U_{\mathcal{P}}(t_1,t_2)
\,\hat K_\nu\,
U_{\mathcal{P}}^\dagger(t_1,t_2)\right\rangle,
\]
with $U_{\mathcal{P}}(t_1,t_2)=\mathcal T
e^{-i\varepsilon S\int_{t_2}^{t_1} d{s}\, \hat K_{\Omega(s)}}$ and $\hat K_\Omega=\sum_\sigma \Omega_\sigma \hat K_\sigma$.

We first study the case where there are no system dynamics, $H_{\mathcal{S}}=0$ (i.e., $\kappa=0=p$ in the kicked-top Hamiltonian).
\subsection{Diffusion rate without system dynamics}

In the case of no system dynamics, the trajectory $\Omega(t)=\Omega$ is constant, so we can write
\begin{equation}
    B_{\mu\nu}(\tau_1,\tau_2)= \langle \hat K_\mu e^{-i  (\tau_1-\tau_2) \hat K_\Omega} \hat K_\nu e^{i  (\tau_1-\tau_2) \hat K_\Omega} \rangle,
\end{equation}
where we introduced the scaled time $\tau\coloneqq \varepsilon S t$.
Further, in this case $U_\mathcal{S}(t)=\mathbb{I}$, so $\widetilde S_\mu(t)=\hat S_\mu$ is also constant and
\begin{equation}
   \widetilde\rho_{\mathcal S}(T) -  \rho_{\mathcal{S}}(0)
=-\frac{T}{2}\sum_{\mu\nu}\gamma_{\mu \nu}(T)
\left[
 \hat S_\mu,
\left[
 \hat S_\nu,
\rho_{\mathcal S}(0)
\right]
\right],
\end{equation}
where we introduced the directional rates
 \begin{align}
   \gamma_{\mu\nu}(T)
   &=
   \frac{2}{S^2T}
   \int_0^{\varepsilon S T}\dd \tau_1
   \int_0^{\tau_1}\dd \tau_2\,
   B_{\mu\nu}(\tau_1,\tau_2)
   =
   \frac{2\varepsilon}{S}
   \int_0^{\varepsilon S T}\dd \tau\,
   \left(
   1-\frac{ \tau}{\varepsilon S T}
   \right)
   B_{\mu\nu}(\tau).
\end{align}
In the last equality we substituted $\tau=\tau_1-\tau_2$ and performed one of the integrals. 

We now average over the disorder in $\hat K_\mu$ to obtain an average prediction over the ensemble, remaining agnostic for the moment as to whether the $\hat K_\mu$ are RHCs or GOE operators.
We compute
\begin{equation}
    \E[\gamma_{\mu\nu}(T)]=   
   \frac{2\varepsilon}{S}
   \int_0^{\varepsilon S T}\dd \tau\,
   \left(
   1-\frac{ \tau}{\varepsilon S T}
   \right)
    \E[B_{\mu\nu}(\tau)].
\end{equation}

Without loss of generality, take $\Omega=x$. The correlators with $\mu\neq\nu$ vanish, because they always contain a sole average $\E_{\hat K_\mu}[\hat K_\mu]=0$. For instance,
\begin{equation}
    \E[B_{xy}(\tau)]=  \E_{\hat K_x}[\langle \hat K_x e^{-i  \tau \hat K_x} \cancel{\E_{\hat K_y}[\hat K_y]} e^{i  \tau \hat K_x} ]\rangle=0.
\end{equation}
Further, for the direction parallel to $\Omega$, namely $\mu=\nu=x$, the commutator vanishes $\bigl[
 \hat S_x,
\bigl[
 \hat S_x,
\rho_{\mathcal S}(0)
\bigr]
\bigr]=0$, so the only relevant correlators are those in the transverse directions to $\Omega$, $\E[B_{yy}]$ and $\E[B_{zz}]$. These two have to be equal, since the ensemble of matrices $\hat K_\mu$ is isotropic, meaning that for any rotation matrix $R_{\mu\nu}$ the matrix $
\hat K'_\mu=\sum_{\nu}R_{\mu\nu}\hat K_\nu$ is distributed identically to $\hat K_\mu$ (this follows directly from expanding the matrices as sums of transpositions with Gaussian coefficients for the random Heisenberg couplings, and is also true if $\hat K_\mu$ is sampled from random-matrix ensembles, such as GOE). 
An additional average over rotations that leave $\Omega=x$ invariant would not change the result, so $\E[B_{yy}(\tau)]=\E[B_{zz}(\tau)]$. The single relevant quantity is

\begin{equation}
      \bar{\gamma}(T)=\E[\gamma_{zz}(T)]=\E[\gamma_{yy}(T)]
   =\frac{\varepsilon}{S}
   \bar J(\varepsilon S T),
\end{equation}
where
\begin{align}
\label{eq:J-def}
 \bar J(\tau)=\int_{-\tau}^{\tau}\dd \tau'\,
   \left(
   1-\frac{|\tau'|}{\tau}
   \right)
   \bar B( \tau')\,,\qquad
        \bar B( \tau)= \E[\langle \hat K_0 e^{-i   \tau \hat K_1} \hat K_0 e^{i \tau \hat K_1} \rangle],
\end{align}
with independent samples $\hat K_0,\hat K_1$.
Next, we compute the integral $\bar J(\tau)$ when $\hat K_0$ and $\hat K_1$ are RHCs and also when they are sampled from the GOE, in the limit $N\to \infty$.

\subsubsection{Correlator for random Heisenberg couplings}
\begin{figure}
    \centering
    \includegraphics[width=0.6\linewidth]{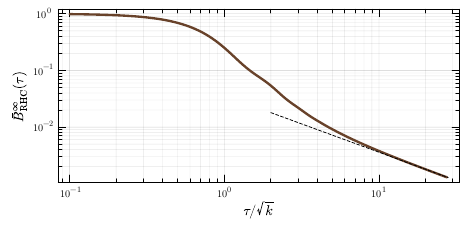}
    \vspace{-1.5em}
    \caption{Ensemble-averaged bath correlator for random Heisenberg couplings $\bar B_{\mathrm{RHC}}^\infty( \tau)$, constructed by computing $C_n$ up to $n=2500$ and truncating the rest of the Taylor series. The dashed line is a power-law decay $\sim\sqrt{k}/\tau$, i.e.\ $\sim\tilde\tau^{-1}$ with $\tilde\tau\coloneqq\tau/\sqrt{k}$.}
    \label{fig:S2}
\end{figure}
Expanding in nested commutators and using Lemma~\ref{lemm:higher-commutav} gives:
\begin{align}
    \bar B_{\mathrm{RHC}}^{\infty}( \tau)=\lim_{N\to\infty} \E[\langle \hat K_0 e^{-i   \tau \hat K_1} \hat K_0 e^{i \tau \hat K_1}\rangle]=\sum_{n=0}^{\infty} \frac{(-1)^n \tau^{2n}}{(2n)!} \lim_{N\to\infty} \E[\langle [\hat K_0, \hat K_1]_n [\hat K_0, \hat K_1]_n^\dagger \rangle] =\sum_{n=0}^{\infty}\frac{ (-1)^n \tau^{2n}}{(2n)!} \frac{C_n}{k  ^n}, 
\end{align}
where $C_n$ are defined in terms of a recursive expression \eqref{eq:Cn-higher-comm}. 
Performing the integral yields
\begin{equation}
    \bar J_{\mathrm{RHC}}^{\infty}(\tau)\coloneq\int_{-\tau}^{\tau} \dd \tau'\left(
   1-\frac{|\tau'|}{\tau}
   \right)
   \bar B^{\infty}_{\mathrm{RHC}}( \tau')=\sum_{n=0}^\infty \frac{(-1)^n C_n }{(2n)! k^n (1+3n+2n^2)}\,\tau^{1+2n}.
\end{equation}

We show a plot of $\bar B_{\mathrm{RHC}}^{\infty}( \tau)$ up to the value $\tilde\tau\coloneqq \tau/\sqrt{k}\sim 30$ in Fig.~\ref{fig:S2}. Up to the accessible times $\tilde\tau\sim 30$, $\bar B_{\mathrm{RHC}}^{\infty}$ appears to have a power-law tail $\sim~\tilde \tau^{-1}$. However,  a preliminary asymptotic analysis for much later times (not shown) suggests instead $\sim~(\tilde\tau\sqrt{\log\tilde \tau})^{-1}$. In any case, such a tail implies a logarithmic-type divergence for $\bar J_{\mathrm{RHC}}^{\infty}(\tau)$. As we will see below, this slow divergence will not prevent Markovianity of the bath once the system Hamiltonian is turned on.

\subsubsection{Correlator for GOE matrices}
In the case of GOE matrices, we can compute the correlator using free probability~\cite{Voiculescu1991}. Averages of products involving
different GOE matrices simplify much like products of independent random
variables, provided one keeps the operator ordering. Concretely, it allows us to decompose:
\begin{equation}
\bar{B}^\infty_{\mathrm{GOE}}(\tau) := \lim_{N\to \infty}\E[\langle \hat K_0 e^{-i\tau \hat K_1} \hat K_0 e^{i\tau \hat K_1} \rangle]
=\lim_{N\to \infty}\E[\langle \hat K_0^2\rangle] 
\E[\langle e^{-i\tau \hat K_1} \rangle]
\E[ \langle e^{i\tau \hat K_1} \rangle]
=\left(\frac{J_1(2\tau)}{\tau}\right)^2,
\label{eq:B-explicit}
\end{equation}
where $J_j(\tau)$ is the Bessel function of the first kind of order $j$, and we used $\lim_{N\to \infty}\E[\langle e^{i\tau \hat K}\rangle]={J_1(2\tau)}/{\tau} $. Integrating yields
\begin{equation}
\label{EQ:J:GOE}
    \bar{J}^\infty_{\mathrm{GOE}}(\tau) :=\int_{-\tau}^{\tau}\dd \tau'\,
   \left(
   1-\frac{|\tau'|}{\tau}
   \right)
   \bar{B}^\infty_{\mathrm{GOE}}( \tau')=\frac{1}{3\tau}[(3+16\tau^2)J_0(2\tau)^2-8\tau J_0(2\tau)J_1(2\tau)+(1+16\tau^2)J_1(2\tau)^2-3].
\end{equation}

\subsubsection{Correlator for random chiral couplings}
The 2-local RHC ensemble can be generalized to the 3-local random chiral
couplings (RCCs) introduced in Sec.~\ref{sec:rcc-moments}.  We do not use RCCs
in our numerical analysis, but their correlator is a useful comparison point:
interestingly, it takes a much simpler form than the RHC correlator, closer to
what we found for GOE matrices.

Expanding in nested commutators as in the RHC case and using
Lemma~\ref{lemm:higher-commutav-rcc} gives
\begin{align}
\label{eq:B-rcc-series}
    \bar B_{\mathrm{RCC}}^\infty(\tau)
    &\coloneq
    \lim_{N\to\infty}
    \E[\langle
    \hat K_0 e^{-i\tau \hat K_1}
    \hat K_0 e^{i\tau \hat K_1}
    \rangle]
    =
    \sum_{n=0}^{\infty}
    \frac{(-1)^n\tau^{2n}}{(2n)!}
    \frac{C_n^{(3)}}{k^n},
\end{align}
where $C_n^{(3)}$ is given by Eq.~\eqref{eq:Cn-rcc}.  In this
case the series can be summed explicitly.  To see this, use the Taylor series
\[
    \frac{J_1(2x)}{x}
    =
    \sum_{m=0}^{\infty}
    \frac{(-1)^m x^{2m}}{m!(m+1)!}.
\]
Comparing coefficients with Eq.~\eqref{eq:Cn-rcc}, one sees that the square of
this Taylor series reproduces Eq.~\eqref{eq:B-rcc-series} after setting
$x=\sqrt{2/k}\,\tau$.  Thus
\begin{equation}
\label{eq:B-rcc-goe-rescaled}
    \bar B_{\mathrm{RCC}}^\infty(\tau)
    =
    \left[
    \frac{
    J_1\!\left(2\sqrt{2/k}\,\tau\right)
    }{
    \sqrt{2/k}\,\tau
    }
    \right]^2
    =
    \bar B_{\mathrm{GOE}}^\infty\!\left(\sqrt{\frac{2}{k}}\,\tau\right),
\end{equation}
where in the last equality we used Eq.~\eqref{eq:B-explicit}.  Therefore the
integrated correlator in this case is the same as for GOE, but with a 
rescaling:
\begin{equation}
\label{eq:J-rcc-goe-rescaled}
    \bar J_{\mathrm{RCC}}^\infty(\tau)
    =
    \sqrt{\frac{k}{2}}\,
    \bar J_{\mathrm{GOE}}^\infty\!\left(\sqrt{\frac{2}{k}}\,\tau\right).
\end{equation}
Thus, unlike the RHC correlator in Fig.~\ref{fig:S2}, the RCC correlator has
the same integrable tail as the GOE correlator.  This suggests that the logarithmic-type divergence found for RHCs
is special to the 2-local couplings.

\begin{figure}
    \centering
    \includegraphics[width=1\linewidth]{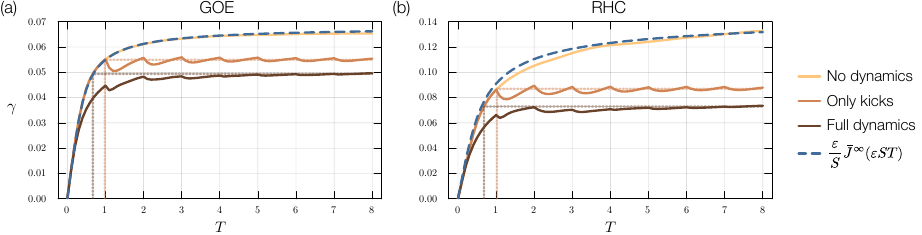}
   \caption{
Time-dependent rates extracted from exact disorder-sampled dynamics.
For each bath ensemble we sample \(M\) independent realizations of
\((\hat K_x,\hat K_y,\hat K_z)\), evolve the corresponding global states
\(\ket*{\psi^{(i)}(t)}\), and form
\(\rho_{\mathcal S}^{(i)}(t)=\tr_{\mathcal P}{\psi^{(i)}(t)}\).
We then evolve a single reduced density matrix \(\rho_\gamma(t)\) with the
same system dynamics and a piecewise-constant instantaneous Lindblad rate
\(\gamma_{\rm inst}(t)\). At each time step, \(\gamma_{\rm inst}(t)\) is chosen
to minimize the trace distance to all samples
\(M^{-1}\sum_{i=1}^M \tfrac{1}{2}\norm*{\rho_{\mathcal S}^{(i)}(t)-\rho_\gamma(t)}_1\). The plotted quantity is
the cumulative rate \(\gamma(T)=T^{-1}\int_0^T\,dt\gamma_{\rm inst}(t)\).
Panel (a) shows GOE bath operators ($N=150,M=10$), with a resulting average trace distance $\sim 0.02$, while panel (b) shows random Heisenberg
couplings (RHC) with average trace distance $\sim 0.07$  ($N=200,M=20$). In each panel we compare no system dynamics $(\kappa=0, p=0)$, kicks
only $(\kappa=0, p=\pi/2)$, and full kicked-top dynamics $(\kappa=4.2, p=\pi/2)$. The dashed blue curve \(J\) is the
prediction obtained from the bath correlator in the absence of system dynamics.
The dotted vertical lines indicate the effective cutoff times selected by the
only-kicks and full-dynamics curves ($k=4$, $\theta_0=0.5$, $\varphi_0=1.1$). 
}
    \label{fig:S3}
\end{figure}

\subsection{Diffusion rate with system dynamics}
Fig.~\ref{fig:S3} shows an empirical rate $\gamma(T)$ obtained by fitting a time-dependent instantaneous rate to the trajectory of the reduced density matrix and minimizing the trace distance to multiple disorder realizations. The prediction $J_{\mathrm{}}^\infty(TS\varepsilon)\frac{\varepsilon}{S}$ agrees well with $\gamma(T)$ for both GOE and RHCs in the case of no system dynamics. However, in the presence of a system Hamiltonian (either the full kicked top or only kicks), the empirical rate $\gamma(T)$ flattens and plateaus near $J_{\mathrm{}}^\infty(T_*S\varepsilon)\frac{\varepsilon}{S}$, for some $T_*$. To understand this effect, consider a single kick at time $T_*=1$ with \(p=\pi/2\), without the nonlinear term \((\kappa=0)\) in the kicked-top Hamiltonian. We will explain that, after the kick, $\gamma(T)$ becomes independent of the observation time $T$. It takes exactly the same form as the case with no dynamics, but with $T$ replaced by $T_*$:
\begin{equation}
\label{eq:formgamma}
   \gamma=\gamma(T_*)=\frac{\varepsilon}{S}J(\varepsilon S T_*). 
\end{equation}
This can be proven rigorously upon averaging for GOE random matrices $\hat K_\mu$, using free probability. We only prove the statement considering the immediate effect of a single kick, but  a similar statement can be made for later times. 
\begin{prop}
\label{th:onlykicksmemory}
Let \(\hat K_x,\hat K_y,\hat K_z\) be independent GOE matrices in $\mathcal{P}$ with $\E[\langle \hat K_\mu^2\rangle ]=1$. For $\kappa=0$, \(p=\pi/2\), and $k\geq 4$, consider the initial state
$\rho(0)=\dyad{\Omega(0)}$ with $\Omega(0)= x$, and a single kick applied at time \(T_*=1\). At the observation time \(T=2T_*\), i.e.~right before the next kick would occur, the empirical ensemble-averaged rate satisfies $\E[\gamma(T)]=\E[\gamma(T_*)]$:
\begin{align}
\E[\text{\textup{Eq.}~\eqref{Eq:secondorderdiff}}]=
-\frac{\E[\gamma(T_*)]}{2}
\sum_{\mu=x,y,z}
\int_0^{2T_*}\dd s\,
\left[
\widetilde S_\mu(s),
\left[
\widetilde S_\mu(s),
\rho_{\mathcal S}(0)
\right]
\right]
+
O\!\left(\frac{1}{S^2}\right),
\label{eq:reset-theorem}
\end{align}
\end{prop}

\begin{proof} 

We split the double integral in
Eq.~\eqref{Eq:secondorderdiff} into cases depending on whether $t_1,t_2$ are larger or smaller than $T_*$:
\begin{equation}
    \eqref{Eq:secondorderdiff}=\int_0^{T} \dd t_1\int_0^{t_1} \dd t_2 \,(\cdots)=\underbrace{\int_0^{T_*} \dd t_1\int_0^{t_1} \dd t_2 \,(\cdots)}_{I_1:\quad t_1,t_2\leq T_*} +\underbrace{\int_{T_*}^{2T_*} \dd t_1\int_{T_*}^{t_1} \dd t_2 \,(\cdots)}_{I_2:\quad t_1,t_2\geq T_*}+\underbrace{\int_{T_*}^{2T_*} \dd t_1\int_0^{T_*} \dd t_2 \,(\cdots)}_{I_3:\quad t_2\leq T_*\leq t_1}.
\end{equation}
The first two integrals $I_1$ and $I_2$ encompass the cases when $t_1$ and $t_2$ are not
separated by the kick and take the same form as the
equation with no system dynamics. The last integral $I_3$, in which $t_1$ and $t_2$ are separated by a kick, gives a vanishing contribution proportional to $1/S^2$. To compute these integrals, recall that the only terms that contribute to the dephasing correspond to the directions perpendicular to $\Omega(t)$. These are:
\begin{align}
\Omega(t)&=x&\implies& &\Omega_{\perp,1}&=y,&
\Omega_{\perp,2}&=z,& (t< T_*)
\\
\Omega(t)&=y&\implies& &\Omega_{\perp,1}&=x,&
\Omega_{\perp,2}&=z,& (t> T_*).
\label{eq:transverse-directions}
\end{align}

Let us compute the three integrals:

\begin{enumerate}
    \item[$I_1:$] This integral is exactly the same as the one we already performed in the case without dynamics, where the transverse directions to $\Omega(t\leq T_*)=x$ are $y$ and $z$, and the bath evolution is simply 
$U_{\mathcal{P}}(\tau_1,\tau_2)
=
e^{-i(\tau_1-\tau_2)\hat K_x},$ giving
\begin{equation}
    I_1=-T_*
\frac{\bar{\gamma}_*}{2}
\left[
 \hat S_{y},
\left[
 \hat S_{y},
\rho_{\mathcal S}(0)
\right]
\right]-T_*\frac{\bar{\gamma}_*}{2}\left[
 \hat S_{z},
\left[
 \hat S_{z},
\rho_{\mathcal S}(0)
\right]
\right],
\end{equation}
with $\bar{\gamma}_*=\E[\gamma(T_*)]$. 
\item[$I_2:$] Here, the center of the coherent state has been rotated to
\(\Omega(t)=y\), so the transverse directions are \(x\) and \(z\).
Thus, in the interaction picture,
\begin{align}
I_2
&=
-T_*\frac{\bar{\gamma}_*}{2}
\left[
\widetilde S_x(s),
\left[
\widetilde S_x(s),
\rho_{\mathcal S}(0)
\right]
\right]
-
T_*\frac{\bar{\gamma}_*}{2}
\left[
\widetilde S_z(s),
\left[
\widetilde S_z(s),
\rho_{\mathcal S}(0)
\right]
\right]
\\
&=
-T_*\frac{\bar{\gamma}_*}{2}
\left[
\hat S_y,
\left[
\hat S_y,
\rho_{\mathcal S}(0)
\right]
\right]
-
T_*\frac{\bar{\gamma}_*}{2}
\left[
\hat S_z,
\left[
\hat S_z,
\rho_{\mathcal S}(0)
\right]
\right],
\qquad T_*<s<2T_*,
\end{align}
yielding the same result as $I_1$.

\item[$I_3:$]  The bath evolution from \(t_2\) to \(t_1\) is
$U_{\mathcal{P}}(t_1,t_2)
=
e^{-i\tau_1\hat K_y}
e^{-i\tau_2\hat K_x},$
where we have introduced shifted scaled times $\tau_1=\varepsilon S(t_1-T_*)$ and $\tau_2=\varepsilon S(T_*-t_2)$. Expanding $I_3$ and changing variables gives four terms, corresponding to the pairs of possible transverse directions before and after the kick:
\begin{align}
        I_3=-\frac{1}{S^2} \int_{0}^{\varepsilon ST_*} \dd \tau_1 \int_{0}^{\varepsilon ST_*}\dd \tau_2 \Big(&\bar{B}_{xz}(\tau_1,\tau_2)[\widetilde S_{x}(t_1) ,[\widetilde S_z(t_2),\rho_{\mathcal{S}}(0)]] + \bar{B}_{zz}(\tau_1,\tau_2)[\widetilde S_{z}(t_1) ,[\widetilde S_z(t_2),\rho_{\mathcal{S}}(0)]] \\
&+ \bar{B}_{xy}(\tau_1,\tau_2)[\widetilde S_{x}(t_1) ,[\widetilde S_y(t_2),\rho_{\mathcal{S}}(0)]]+ \bar{B}_{zy}(\tau_1,\tau_2)[\widetilde S_{z}(t_1) ,[\widetilde S_y(t_2),\rho_{\mathcal{S}}(0)]] \Big).
\end{align}
The $xz$ and $zy$ correlators vanish because they contain a single $\hat K_z$, which averages to zero (since $\hat K_z$ does not appear in the evolution operator). The remaining terms can be computed using free probability, up to $O(d_{\mathcal{P}}^{-1})=O(1/S^2)$ corrections:
\begin{align}
 \bar{B}_{zz}(\tau_1,\tau_2)&=\E\left[\langle
\hat K_z
e^{-i\tau_1 \hat K_y}
e^{-i\tau_2\hat K_x}
\hat K_z
e^{i\tau_2\hat K_x}
e^{i\tau_1 \hat K_y}
\rangle
\right]
\nonumber \\&= \E\!\left[
\langle
e^{-i\tau_1 \hat K_y}\rangle\right] \E\!\left[\langle
e^{-i\tau_2\hat K_x}\rangle\right] \E\!\left[\langle
e^{i\tau_2\hat K_x}\rangle\right]
\E\!\left[\langle
e^{i\tau_1 \hat K_y}\rangle\right]
=
\bar{B}_{\mathrm{GOE}}(\tau_1 )
\bar{B}_{\mathrm{GOE}}(\tau_2).
\end{align}
The resulting integral is bounded independently of $S$, 
\begin{align}
& \int_{0}^{\varepsilon ST_*} \hspace{-1.5em}\dd \tau_1 \int_{0}^{\varepsilon ST_*}\hspace{-1.5em}\dd \tau_2  \bar{B}_{zz}(\tau_1,\tau_2) =\left(\int_{0}^{\varepsilon S T_*} \hspace{-1.5em}\dd \tau
\bar{B}_{\mathrm{GOE}}(\tau)\right)^2
=\left(\int_{0}^{\varepsilon S T_*} \hspace{-1.5em}\dd \tau \left(\frac{J_1(2\tau)}{\tau}\right)^2\right)^2
\leq \left(\int_{0}^{\infty}\hspace{-0.5em} \dd \tau \left(\frac{J_1(2\tau)}{\tau}\right)^2\right)^2
=\left(\frac{8}{3\pi}\right)^2.
\end{align}
 The same analysis applies to $B_{xy}$:
\begin{align}
\bar{B}_{xy}(\tau_1,\tau_2)&=\E\!\left[
\left\langle
\hat K_x e^{-i\tau_1\hat K_y}e^{-i\tau_2\hat K_x}
\hat K_y e^{i\tau_2\hat K_x}e^{i\tau_1\hat K_y}
\right\rangle
\right]
=
2\,
\E[\langle e^{-i\tau_1\hat K_y}\rangle]\,
\E[\langle e^{i\tau_2\hat K_x}\rangle]\,
\E[\langle \hat K_x e^{-i\tau_2\hat K_x}\rangle]\,
\E[\langle \hat K_y e^{i\tau_1\hat K_y}\rangle]
\nonumber\\
& =
2\,
{B_{1/2}(\tau_1)}{B_{1/2}(\tau_2)}
\left(i\frac{\dd}{\dd \tau_2}{B_{1/2}(\tau_2)}\right)
\left(-i\frac{\dd}{\dd \tau_1}{B_{1/2}(\tau_1)}\right),
\end{align}
where we used  \(\E[\langle X e^{-i\tau X}\rangle]
=
i\,\frac{\dd}{\dd\tau}\E[\langle e^{-i\tau X}\rangle]\) and \(B_{1/2}(\tau)\coloneqq\E[\langle e^{i\tau \hat K}\rangle]=J_1(2\tau)/\tau\). Integrating by parts gives the same $O(1)$ bound, using $B_{1/2}(\tau)^2=\bar{B}_{\mathrm{GOE}}(\tau)$. With the overall prefactor \(1/S^2\), this gives \(I_3=O(1/S^2)\), completing the proof of Eq.~\eqref{eq:reset-theorem}. \qedhere
\end{enumerate}
 \end{proof}

As confirmed by the numerics in Fig.~\ref{fig:S3}, in the kick-only dynamics, $\gamma(T)$ becomes time independent exactly at the first kick time, $T_*=1$. When the nonlinear term is also included,  $\gamma(T)$ instead deviates smoothly from the integrated correlator and plateaus near $\frac{\varepsilon}{S}J(\varepsilon S T_*)$, with a slightly smaller fitted value $T_*\approx 0.6$ for the parameters used here. The parameter $T_*$ is simply a characteristic timescale of the system dynamics. We remark that its precise value is not necessary to predict the effective rate $\gamma$ in the thermodynamic limit, as the relative error between $\bar{J}^{\infty}(\varepsilon T_*S)$ and $\bar{J}^{\infty}(\varepsilon T_*'S)$ for constant $T_*$ and $T_*'$ vanishes in the limit $N\to \infty$ for both the correlators of RHCs and GOE. In both Fig.~2 of the main text (RHCs) and Fig.~\ref{fig:S4} (GOE) we find excellent agreement between the predicted rate of Eq.~\eqref{eq:formgamma} and the empirical value from numerics, across a wide range of system sizes and disorder strengths $\varepsilon$.

\begin{figure}
    \centering
    \includegraphics[width=\textwidth]{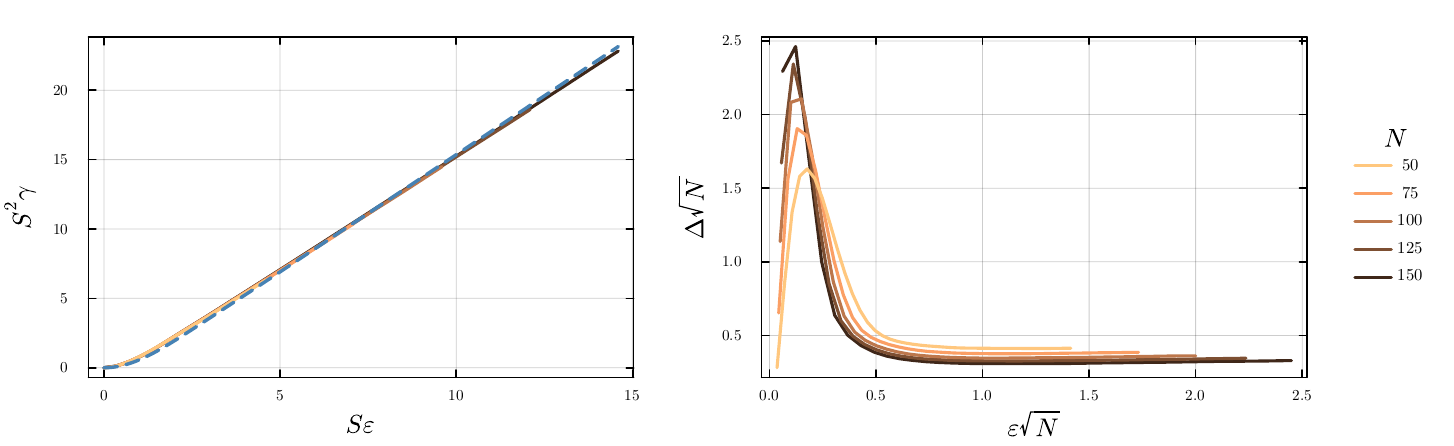}
    \caption{Effective Lindblad dynamics for couplings sampled from GOE. (a) Optimized rate $\gamma$ obtained by minimizing average trace distance $\Delta$ to exact quantum dynamics, with the permutation-sector couplings $\hat K_\mu$ sampled from the GOE. The dashed line is $\frac{\varepsilon}{S}
   \bar J^\infty_{\mathrm{GOE}}(\varepsilon S T),$ [see Eq.~\eqref{EQ:J:GOE}]. The vertical axis is scaled by $S^2$ and the horizontal axis by $S$, which collapses all curves. (b) Trace distance of the optimization. Axes are scaled by $\sqrt{N}$. All plots are averaged over 20 random-matrix realizations ($k=4$, $\kappa=4.2$, $p=\pi/2$, $\theta_0=0.5$, $\varphi_0=1.1$).}
    \label{fig:S4}
\end{figure}

 \subsection{Intuition}
 To understand how   stronger system dynamics can reduce the effective bath-induced decoherence, we can get intuition from a simplified model closely related to the Coleman-Hepp model~\cite{Hepp1972}. Consider a system spin moving through a one-dimensional bath of spins with spatial density $\chi$, where the two states of the system spin generate opposite local magnetic fields on each bath spin. If the system moves with speed $\nu$ and the field profile has integrated strength $G=\int g(u)\,du$, then each bath spin is only rotated by an angle $G/\nu$, so the overlap of the two conditional bath states is $\cos(G/\nu)$.  Since the system encounters $\chi \nu$ bath spins per unit time, the system's coherence decays at rate
\begin{equation}
\Gamma(\nu)=-\chi \nu\log|\cos(G/\nu)|\simeq \frac{\chi G^2}{2\nu}
\end{equation}
for $G/\nu\ll 1$.  Thus faster system motion increases the number of bath spins sampled per unit time but decreases the distinguishability imprinted on each one more strongly, making the net decoherence rate smaller. 

In the present model, the analogous role is played by the system Hamiltonian itself: rapid motion of the collective spin changes the interaction-picture system operators and cuts off the effective memory integral, preventing the Markov-limit rate from diverging even when the static bath correlator has a slowly decaying tail. This effect is reminiscent of dynamical decoupling \cite{Viola1999}, Zeno-bang-bang effects~\cite{Facchi2004}, and open-system motional narrowing \cite{Rnnburg2006}.
\section{Detailed proof of Theorem~1}
\label{sec:proofofth1}
Here we provide a more detailed proof of Theorem~1 in the main text. We recall the basic notation. We denote by $\hat\Omega=\dyad{\Omega}$ a spin coherent-state projector and use the measure $
    \dd\Omega
    =
    \frac{2S+1}{4\pi}\sin\theta\,\dd\theta\,\dd\varphi .$
We write $S_\mu(\Omega)=S\Omega_\mu$ 
and use the Poisson bracket $
    \{f,g\}
    =
    \bm S\cdot
    \left(
        \frac{\partial f}{\partial \bm S}
        \times
        \frac{\partial g}{\partial \bm S}
    \right),$ which satisfies $
    \{S_\mu,S_\nu\}
    =
\sum_{\lambda}\epsilon_{\mu\nu\lambda}S_\lambda$.

The following Lemma serves as a dictionary to transform commutators in the operator picture to Poisson brackets in the phase space picture.
\begin{lemma}[Dictionary from commutators to Poisson brackets]
\label{lem:coherent-state-dictionary}
Let $ \rho
    =
    \int \dd\Omega\,P(\Omega)\hat\Omega$.
Then
\begin{align}
\label{eq:dict-linear-comm}
    -i[\hat S_\mu,\rho]
    &=
    \int \dd\Omega\,\{S_\mu,P\}\hat\Omega ,
    \\
\label{eq:dict-double-comm}
    -[\hat S_\mu,[\hat S_\nu,\rho]]
    &=
    \int \dd\Omega\,\{S_\mu,\{S_\nu,P\}\}\hat\Omega ,
    \\
\label{eq:dict-square-comm}
    -i[\hat S_\mu^2,\rho]
    &=
    \int \dd\Omega\,
    \left[
        \{S_\mu^2,P\}
        +
        \frac{1}{S}
        \sum_{\nu,\lambda}
        \epsilon_{\mu\nu\lambda}
        \{S_\mu,\{S_\nu,S_\lambda P\}\}
    \right]\hat\Omega .
\end{align}
\end{lemma}

\begin{proof}
The commutator $-i[\hat S_\mu,\,\cdot\,]$ generates rotations of coherent-state
projectors about the $\mu$ axis, while the Poisson bracket $\{S_\mu,\,\cdot\,\}$
generates the corresponding classical rotation on the sphere. Then, integration by parts gives the equality
\begin{equation}
\label{eq:commutator-ibp-identity}
    -i\int \dd\Omega\,F(\Omega)[\hat S_\mu,\hat\Omega]
    =
    \int \dd\Omega\,\{S_\mu,F\}\hat\Omega,
\end{equation}
for any smooth function $F(\Omega)$.
 Equation~\eqref{eq:dict-linear-comm} follows by picking $F=P$. Now, for Eq.~\eqref{eq:dict-double-comm} note that
\begin{align}
\label{eq:second-dictionary-proof}
    -[\hat S_\mu,[\hat S_\nu,\rho]]
    =
    -i\left[\hat S_\mu,-i[\hat S_\nu,\rho]\right]
      =
    -i\left[
        \hat S_\mu,
        \int \dd\Omega\,\{S_\nu,P\}\hat\Omega
    \right]
    =
    -i \int \dd\Omega\,\{S_\nu,P\} [
        \hat S_\mu,
       \hat\Omega
    ]
    =
    \int \dd\Omega\,\{S_\mu,\{S_\nu,P\}\}\hat\Omega ,
\end{align}
where in the last equality we used $F=\{S_\nu,P\}$ in Eq.~\eqref{eq:commutator-ibp-identity}. 
We finally derive the identity for the quadratic term. For this, we show
\begin{equation}
\label{eq:anticommutator-projector-identity}
    \hat S_\mu\hat\Omega+\hat\Omega\hat S_\mu
    =
    2S_\mu\hat\Omega
    -
    \frac{i}{S}
    \sum_{\nu,\lambda}\epsilon_{\mu\nu\lambda}S_\lambda[\hat S_\nu,\hat\Omega].
\end{equation}
Indeed, by rotational invariance, it is enough to check Eq.~\eqref{eq:anticommutator-projector-identity} at the north pole $\hat z=\dyad{S,S}$. Using $\hat S_+\hat z=0$ and $\hat z\hat S_-=0$, one obtains
\begin{align}
\label{eq:anticommutator-projector-check}
    \hat S_x\hat z+\hat z\hat S_x
    &=
    -i[\hat S_y,\hat z],
    &
    \hat S_y\hat z+\hat z\hat S_y
    &=
    i[\hat S_x,\hat z],
    &
    \hat S_z\hat z+\hat z\hat S_z
    &=
    2S\hat z .
\end{align}
It follows that
\begin{align}
\label{eq:square-projector-commutator}
    [\hat S_\mu^2,\hat\Omega]
    &=
    [\hat S_\mu,\hat S_\mu\hat\Omega+\hat\Omega\hat S_\mu]
    =
    2S_\mu[\hat S_\mu,\hat\Omega]
    -
    \frac{i}{S}
    \sum_{\nu,\lambda}
    \epsilon_{\mu\nu\lambda}S_\lambda
    [\hat S_\mu,[\hat S_\nu,\hat\Omega]] .
\end{align}
Using Eq.~\eqref{eq:commutator-ibp-identity} with $F=2S_\mu P$ gives
\begin{align}
\label{eq:quadratic-leading-term}
    -i\int \dd\Omega\,2S_\mu P[\hat S_\mu,\hat\Omega]
    &=
    \int \dd\Omega\,\{S_\mu,2S_\mu P\}\hat\Omega
    =
    \int \dd\Omega\,\{S_\mu^2,P\}\hat\Omega .
\end{align}
For the remaining terms, note that in general
\begin{equation}
\label{eq:double-commutator-with-weight}
    \int \dd\Omega\,F[\hat S_\mu,[\hat S_\nu,\hat\Omega]]
    =
    -\int \dd\Omega\,\{S_\mu,\{S_\nu,F\}\}\hat\Omega ,
\end{equation}
so taking $F=S_\lambda P$ gives
\begin{equation}
\label{eq:quadratic-correction-term}
    -
    \frac{1}{S}
    \int \dd\Omega\,P
    \sum_{\nu,\lambda}
    \epsilon_{\mu\nu\lambda}S_\lambda
    [\hat S_\mu,[\hat S_\nu,\hat\Omega]]
    =
    \frac{1}{S}
    \int \dd\Omega
    \sum_{\nu,\lambda}
    \epsilon_{\mu\nu\lambda}
    \{S_\mu,\{S_\nu,S_\lambda P\}\}\hat\Omega .
\end{equation}
Combining Eqs.~\eqref{eq:quadratic-leading-term} and \eqref{eq:quadratic-correction-term} gives
Eq.~\eqref{eq:dict-square-comm}.
\end{proof}
To prove Theorem 1, we apply  Lemma~\ref{lem:coherent-state-dictionary} to the terms in the Lindblad equation \eqref{eq:Lindblad}, for $ \rho
    =
    \int \dd\Omega\,P(\Omega)\hat\Omega$, giving

\begin{equation}
    \int \dd\Omega\,\frac{\partial P}{\partial t}\hat\Omega
    =
    \int \dd\Omega\,\Bigg[
    \frac{\kappa}{2S+1}
    \Big(
        \{S_x^2,P\}
        +
        \frac{1}{S}
        \sum_{\nu,\lambda}
        \epsilon_{x\nu \lambda}
        \{S_x,\{S_\nu,S_\lambda P\}\}
    \Big) +
    p\sum_{n=1}^{\infty}\delta(t-n)\{S_z,P\}
    +
    \frac{\gamma}{2}
    \sum_{\mu}
    \{S_\mu,\{S_\mu,P\}\}
    \Bigg]\hat\Omega .
\end{equation}
This equation is satisfied whenever the coefficients in the integrands are equal, yielding an equation for $P$. Note that
\begin{align*}
\sum_{\mu,\nu}
\epsilon_{\sigma\mu\nu}
\{S_\sigma,\{S_\mu,S_\nu P\}\}
\nonumber
&=
\{S_\sigma^2,P\}
+
\sum_{\mu,\nu}
\epsilon_{\sigma\mu\nu}
\{S_\sigma,S_\nu\}\{S_\mu,P\}
+
\sum_{\mu,\nu}
\epsilon_{\sigma\mu\nu}
S_\nu\{S_\sigma,\{S_\mu,P\}\}
\nonumber\\
&=
\frac{3}{2}\{S_\sigma^2,P\}
+
\sum_{\mu,\nu}
\epsilon_{\sigma\mu\nu}
S_\nu\{S_\sigma,\{S_\mu,P\}\}
\nonumber\\
&=
\frac{3}{2}\{S_\sigma^2,P\}
+
\frac{1}{2}
\sum_{\mu,\nu}
\epsilon_{\sigma\mu\nu}S_\nu
\left(
\{S_\sigma,\{S_\mu,P\}\}
+
\{S_\mu,\{S_\sigma,P\}\}
+
\{\{S_\sigma,S_\mu\},P\}
\right)
\nonumber\\
&=
\frac{5}{4}\{S_\sigma^2,P\}
+
\frac{1}{2}
\sum_{\mu,\nu}
\epsilon_{\sigma\mu\nu}S_\nu
\left(
\{S_\sigma,\{S_\mu,P\}\}
+
\{S_\mu,\{S_\sigma,P\}\}
\right)
\nonumber\\
&=
\frac{5}{4}\{S_\sigma^2,P\}
+
\frac{1}{2}
\sum_{\mu,\nu,\lambda}
\left(
\delta_{\sigma\mu}\epsilon_{\sigma\nu\lambda}
+
\delta_{\sigma\nu}\epsilon_{\sigma\mu\lambda}
\right)
S_\lambda
\{S_\mu,\{S_\nu,P\}\},
\end{align*}
where the first equality is the product rule and \(\{S_\mu,S_\nu\}=\sum_\lambda\epsilon_{\mu\nu\lambda}S_\lambda\), the second uses the contraction of two Levi-Civita symbols, together with \(\{S_x^2+S_y^2+S_z^2,P\}=0\), and the third equality is a symmetrization using the Jacobi identity. Then,  the equation for $P$ becomes
\begin{align*}
    \frac{\partial P}{\partial t} &=\frac{\kappa}{2S+1}\left(1+\frac{5}{4S}\right)\{S_x^2,P\}
         +
    p\sum_{n=1}^{\infty}\delta(t-n)\{S_z,P\}\,\,+
    \\
    &\qquad+ 
    \frac{\kappa}{2S(2S+1)}
    \sum_{\mu,\nu,\lambda}
    \left(
    \delta_{x\mu}\epsilon_{x\nu\lambda}
    +\delta_{x\nu}\epsilon_{x\mu\lambda}
    \right)
    S_\lambda
    \{S_\mu,\{S_\nu,P\}\}+\frac{\gamma}{2}
    \sum_{\mu}
    \{S_\mu,\{S_\mu,P\}\}\\
    &=\{\widetilde H_\mathrm{KT}(t),P\}
    +
    \frac{1}{2}\sum_{\mu,\nu}\Lambda_{\mu\nu}
    \{S_\mu,\{S_\nu,P\}\},
\end{align*}
where $\widetilde{H}_{\mathrm{KT}(t)}=(\frac{\kappa}{2S}+\frac{3c}{4})S_x^2
         +
    p\sum_{n=1}^{\infty}\delta(t-n)S_z,$
with $c=\kappa/(S(2S+1))$ and $$\Lambda_{\mu\nu}
=
\gamma\delta_{\mu\nu}
+
c\sum_\lambda
\left(
\delta_{x\mu }\epsilon_{x\nu\lambda}
+
\delta_{x\nu }\epsilon_{x\mu\lambda}
\right)S_\lambda=
\begin{pmatrix}
\gamma & cS_z & -cS_y\\
cS_z & \gamma & 0\\
-cS_y & 0 & \gamma
\end{pmatrix}_{\mu\nu},$$
whose eigenvalues are $\gamma, \gamma\pm c {\sqrt{S_z^2+S_y^2}}$, which are positive if $\gamma\geq \abs{c}S$, as stated in Theorem 1.
\section{Classicality for \texorpdfstring{$\gamma \gg O(S^{-4/3})$}{gamma >> O(S^(-4/3))} with not-too-squeezed states}
\label{sec:NTS}

\providecommand{\Lspin}{\mathcal L_t}
\providecommand{\Lclass}{\mathcal L_t^{(\mathrm{cl})}}
\providecommand{\NTS}{\mathrm{NTS}}
In this section, we extend Theorem 1 of the main text to general Hamiltonians and weaker noise by allowing mixtures of squeezed spin states. Following the ideas of Refs.~\cite{hernandez2025ehrenfests,hernandez2024classical}, which study flat phase space, we prove an approximation by mixtures of {\it not-too-squeezed} (NTS) states.

Let $|\Omega\rangle$ be the spin coherent state centered at
$\Omega\in\mathbb S^2$. Choose a unit vector $u\perp\Omega$, set
$v=\Omega\times u$, and write $\hat S_a=\sum_\mu a_\mu\hat S_\mu$
for the spin along an axis $a$. We use the
\emph{two-axis-countertwisted squeezed states}~\cite{KitagawaUeda1993}
\begin{equation}
 |\Omega;\sigma\rangle
 =\exp\!\left[\frac{i\log\zeta}{4S}
 (\hat S_u\hat S_v+\hat S_v\hat S_u)\right]|\Omega\rangle,
 \qquad
 \sigma=\frac1{2S}
 \left(\zeta^{-1}uu^{\mathsf T}+\zeta vv^{\mathsf T}\right).
 \label{eq:quadratic-packet}
\end{equation}
Here $\Omega$ is the center about which squeezing is applied, and
$\zeta\geq1$ controls its strength, with $\zeta=1$ giving a coherent
state. At leading order, $u$ is the squeezed axis and $v$ the broadened
axis: $\sigma$ approximates the tangent covariance of the normalized
spin $\hat{\bm S}/S$, with curvature corrections
$O(S^{-2})$ at fixed $\zeta$. For $\zeta_{\mathrm{max}}\geq1$,
we consider the set $\NTS_{\zeta_{\mathrm{max}}}$ of labels  $(\Omega,\sigma)$ such that $1\leq\zeta\leq \zeta_{\mathrm{max}}$.

Consider a collective-spin Hamiltonian of the form
\begin{equation}
 \widehat H(t)=\sum_{a,b,c\geq0}
 \frac{\kappa_{a,b,c}(t)}{S^{a+b+c-1}}\,
 \mathrm{Sym}\bigl(\hat S_x^a\hat S_y^b\hat S_z^c\bigr),
 \label{eq:smooth-H}
\end{equation}
where the coefficients $\kappa_{a,b,c}(t)$ are real and
$\mathrm{Sym}$ averages over all orderings of the spin factors.\footnote{For example,
$\mathrm{Sym}(\hat S_x^2\hat S_y)
=(\hat S_x^2\hat S_y+\hat S_x\hat S_y\hat S_x+\hat S_y\hat S_x^2)/3$.}
Consider the Lindblad generator with isotropic noise with rate
$\gamma>0$,
\begin{equation}
 \Lspin[\rho]=-i[\widehat H(t),\rho]
 -\frac\gamma2\sum_{\mu=x,y,z}
 [\hat S_\mu,[\hat S_\mu,\rho]].
 \label{eq:quadratic-Lindblad}
\end{equation}

\begin{theorem}[Approximation by mixtures of NTS states]
\label{thm:quadratic-NTS}
Consider an initial coherent state $\rho(0)=|\Omega_0\rangle\langle\Omega_0|$, suppose $\gamma\geq KS^{-4/3}$, and set $\zeta_{\mathrm{max}}=\max\{1,K/(S\gamma)\}$. Then for all times $t\geq 0$ there is a normalized, nonnegative probability distribution
$p_t(\Omega,\sigma)$, supported on $\NTS_{\zeta_{\mathrm{max}}}$, such that
\begin{align}
 \widetilde\rho(t)&=\int_{\NTS_{\zeta_{\mathrm{max}}}}\dd\Omega\,\dd\sigma\,
 p_t(\Omega,\sigma)|\Omega;\sigma\rangle\langle\Omega;\sigma|,&&
  \label{eq:quadratic-main-bound}
 \norm{\rho(t)-\widetilde\rho(t)}_1
 \leq r\,t,\end{align}
 with an error rate $r\coloneqq 100(K+K_3)\max\{S^{-1/2},\,K^{3/2}S^{-2}\gamma^{-3/2}\}$, which is small when $\gamma\gg O(S^{-4/3})$.
\end{theorem}
Here the constants $K$ and $K_3$ are assumed to be finite, and given by
\begin{align}
 K&=\frac S2\sup_{t,\Omega}
 \bigl[\lambda_{\max}(A(t,\Omega))-\lambda_{\min}(A(t,\Omega))\bigr],&
  K_3&=\sup_t\sum_{n=3}^{\infty}n(n-1)(n-2)
  \sum_{a+b+c=n}|\kappa_{a,b,c}(t)|,
 \label{eq:smooth-A-K}\\
 A_{\mu\nu}(t,\Omega)&=\frac1S\left[
 \frac{\partial^2 h(\Omega,t)}{\partial\Omega_\mu\partial\Omega_\nu}
 -\frac{\delta_{\mu\nu}}3\sum_\lambda
 \frac{\partial^2 h(\Omega,t)}{\partial\Omega_\lambda^2}
 \right]&h(\Omega,t)&=\sum_{a,b,c\geq 0}{\kappa_{a,b,c}(t)} \Omega_x^a\Omega_y^b\Omega_z^c. \nonumber
\end{align}
Here the components $\Omega_\mu$ are treated as independent when
differentiating $h$. The quantities
$\lambda_{\max}$ and $\lambda_{\min}$ are the largest and smallest
eigenvalues of the real symmetric $3\times3$ matrix $A(t,\Omega)$. The integral over $\NTS_{\zeta_{\mathrm{max}}}$ is defined in terms of the measure $\dd\sigma=\dd\zeta\,\dd\varphi/\pi$,
where $\varphi\in[0,\pi)$ is the angle of $u$ in the plane tangent to the sphere at $\Omega$. For the kicked top of the main text, $K=S|\kappa|/(2S+1)$ and $K_3=0$.

As will be apparent below, the evolution of $p_t(\Omega,\sigma)$ constructed in the proof  has a direct classical
interpretation: the motion of the centers $\Omega$ and deformation of the covariances $\sigma$ of the squeezed states  follow the 
dynamics of the classical Hamiltonian $h(\Omega,t)$, with an added diffusion term, resulting in the emergence of classical Fokker–Planck dynamics. A precise statement capturing this can be phrased as is done in Refs.~\cite{hernandez2024classical,hernandez2025threshold} for flat phase space (we do not present it here).

\addcontentsline{toc}{subsection}{Proof of Theorem~\ref{thm:quadratic-NTS}}
\begin{proof}[Proof of Theorem~\ref{thm:quadratic-NTS}]
Our proof follows the same construction as Ref.~\cite{hernandez2025ehrenfests}. We perform a local quadratic approximation, which we use to construct a distribution over the space of labels $(\Omega,\sigma)$ so that the corresponding mixture approximately follows the quantum evolution.

At each center $\Omega$, we use the quadratic Taylor approximation
\begin{align}
 \widehat H^{(2)}_\Omega
 ={}&S h(\Omega,t)
 +\sum_\mu\frac{\partial h(\Omega,t)}{\partial\Omega_\mu}(\hat S_\mu-S\Omega_\mu) +\frac1{4S}\sum_{\mu,\nu}\frac{\partial^2 h(\Omega,t)}{\partial\Omega_\mu\partial\Omega_\nu}
 \bigl[(\hat S_\mu-S\Omega_\mu)(\hat S_\nu-S\Omega_\nu)
 +(\hat S_\nu-S\Omega_\nu)(\hat S_\mu-S\Omega_\mu)\bigr].
 \label{eq:smooth-Taylor}
\end{align}
Up to a scalar, this has the quadratic form
$\widehat H^{(2)}_\Omega=\sum_\mu b_\mu\hat S_\mu+
\frac14\sum_{\mu,\nu}A_{\mu\nu}
(\hat S_\mu\hat S_\nu+\hat S_\nu\hat S_\mu)$, with
$A$ from Eq.~\eqref{eq:smooth-A-K} and
$
 b_\mu(t,\Omega)=\frac{\partial h(\Omega,t)}{\partial\Omega_\mu}
 -\sum_\nu\Omega_\nu\frac{\partial^2 h(\Omega,t)}{\partial\Omega_\mu\partial\Omega_\nu}.$
The leading-order Hamiltonian motion changes the variables $\Omega$ and $\sigma$ as:
\begin{align}
 \dd\Omega=(\omega\times\Omega)\,\dd t,&&
 \dd\sigma=\bigl[\omega\times\sigma+(\omega\times\sigma)^{\mathsf T}
 +B\sigma+\sigma B\bigr]\,\dd t. && \textrm{(Only Hamiltonian)}
 \label{eq:quadratic-Hamiltonian-motion}
\end{align}
Here $\omega\times\sigma$ means taking the cross product with each
column of $\sigma$, and the angular velocity $\omega$ and the symmetric
stretching matrix $B$ are
\begin{align}
 \omega_\mu=b_\mu+S\sum_\nu A_{\mu\nu}\Omega_\nu
 +\frac S2A_{\Omega\Omega}\Omega_\mu,
&&
 B=SA_{uv}(uu^{\mathsf T}-vv^{\mathsf T})
 +\frac S2(A_{vv}-A_{uu})(uv^{\mathsf T}+vu^{\mathsf T}),
 \label{eq:quadratic-stretching}
\end{align}
where $A_{ab}=\sum_{\mu,\nu}a_\mu A_{\mu\nu}(t,\Omega)b_\nu$.
All coefficients are evaluated at the current center $\Omega$.
Along the squeezed axis, the variance changes at rate
$\dd\sigma_{uu}/\dd t=2B_{uu}\sigma_{uu}=A_{uv}/\zeta$.
Since $S|A_{uv}|\leq K$, this rate can be as negative as
$-K/(S\zeta)$ in the worst case. At $\zeta=\zeta_{\mathrm{max}}$, such a contraction
could take the squeezed state outside the NTS set.

We leverage the noise to counteract such a contraction. It spreads the centers of the squeezed states through random rotations, adding tangent
covariance at rate $\gamma\Omega_\perp$, with $\Omega_\perp=I-\Omega\Omega^{\mathsf T}$. Thus
$\gamma\geq K/(S\zeta_{\mathrm{max}})$ provides enough broadening
to counteract the squeezing. We divide the non-rotational covariance
change as follows:
\begin{equation}
 B\sigma+\sigma B+\gamma\Omega_\perp
 =\Sigma_{\mathrm{shape}}+\Sigma_{\mathrm{diff}},
 \qquad
 \Sigma_{\mathrm{diff}}
 =\gamma\Omega_\perp+\frac{K}{S\zeta_{\mathrm{max}}}
 \frac{4S^2\sigma^2-\Omega_\perp}{1-\zeta_{\mathrm{max}}^{-2}}.
 \label{eq:quadratic-split}
\end{equation}
The matrix $\Sigma_{\mathrm{diff}}$ is the full tangent covariance
rate from center diffusion; $\Sigma_{\mathrm{shape}}$, determined
by the first equality, is the rate of change of each state's
squeezing matrix at fixed $\Omega$. This is analogous to Eqs.~(D.9)--(D.11) of
Ref.~\cite{hernandez2025ehrenfests}.

To describe the resulting motion, we introduce some notation: $\partial/\partial\theta_\mu$ will
mean differentiation under a rotation of
$\Omega$ and $\sigma$ about the Cartesian $\mu$ axis, that is, for any function $F$ on $\NTS_{\zeta_{\mathrm{max}}}$,
\begin{equation}
 \frac{\partial F}{\partial\theta_\mu}
 =\left.\frac{\dd}{\dd\theta}
 F(R_\mu(\theta)\Omega,
 R_\mu(\theta)\sigma R_\mu(\theta)^{\mathsf T})\right|_{\theta=0}.
 \label{eq:quadratic-rotation-derivative}
\end{equation}
The expression $\sum_{\mu,\nu}(\Sigma_{\mathrm{shape}})_{\mu\nu}
\partial F/\partial\sigma_{\mu\nu}$ similarly means the derivative
at fixed $\Omega$ when the squeezing matrix changes at rate
$\Sigma_{\mathrm{shape}}$, along the family in
Eq.~\eqref{eq:quadratic-packet}.
We define the classical evolution on the set $\NTS_{\zeta_{\mathrm{max}}}$ by specifying the following generator, acting on a function $F(\Omega,\sigma)$.
\begin{align*}
 \Lclass[F]
 \coloneqq{}&\sum_\mu\omega_\mu\frac{\partial F}{\partial\theta_\mu}
 +\sum_{\mu,\nu}(\Sigma_{\mathrm{shape}})_{\mu\nu}
 \frac{\partial F}{\partial\sigma_{\mu\nu}}
 +\frac12\sum_{\mu,\nu}
 \left[\gamma\Omega_\mu\Omega_\nu
 +(\Sigma_{\mathrm{diff}})_{vv}u_\mu u_\nu
 +(\Sigma_{\mathrm{diff}})_{uu}v_\mu v_\nu\right]
 \frac{\partial^2F}{\partial\theta_\mu\,\partial\theta_\nu}.
\end{align*}
The three terms in this equation rotate the squeezed state, change its squeezing parameter, and add random
rotations, respectively. The three terms in the square brackets
describe random rotations about $\Omega$, $u$, and $v$, with rates
$\gamma$, $(\Sigma_{\mathrm{diff}})_{vv}$, and
$(\Sigma_{\mathrm{diff}})_{uu}$, respectively. Rotation about $\Omega$ turns the
squeezing axes without moving the center. The diffusion rates along
$u$ and $v$ are exchanged because rotation about $u$ displaces the
center along $v$, and conversely. The evolution of a distribution is then simply 
$\frac{\dd}{\dd t}\int p_tF=\int p_t\Lclass[F]$. For a squeezed state
$\rho_{\Omega,\sigma}=|\Omega;\sigma\rangle\langle\Omega;\sigma|$, the action $\Lclass[\rho_{\Omega,\sigma}]$ differentiates its dependence
on $\Omega,\sigma$, entry by entry in a fixed spin basis: it is not a
quantum superoperator acting on the density matrix. Our main technical result is that the evolution given by $\Lclass$ remains close to the actual Lindblad evolution by $\Lspin$.  
\addcontentsline{toc}{subsection}{Lemmata}
\begin{lemma}
\label{lem:quadratic-comparison}
If $\gamma\geq K/(S\zeta_{\mathrm{max}})$ and
$1<\zeta_{\mathrm{max}}\leq S^{1/3}$,
\begin{equation}
 \norm{\Lspin[\rho_{\Omega,\sigma}]
 -\Lclass[\rho_{\Omega,\sigma}]}_1
 \leq\frac{100(K+K_3)\zeta_{\mathrm{max}}^{3/2}}{\sqrt S}.
 \label{eq:quadratic-local-bound}
\end{equation}
\end{lemma}

We prove Lemma~\ref{lem:quadratic-comparison} below, but first use it to
finish the proof of Theorem~\ref{thm:quadratic-NTS}.
With $\zeta_{\mathrm{max}}=\max\{1,K/(S\gamma)\}$, the hypotheses of
Lemma~\ref{lem:quadratic-comparison} hold: $\gamma\geq K/(S\zeta_{\mathrm{max}})$
by construction, and $\zeta_{\mathrm{max}}\leq S^{1/3}$ is equivalent to
$\gamma\geq KS^{-4/3}$.
Evolve the initial delta distribution at
$(\Omega_0,(\Omega_0)_\perp/(2S))$ with the label dynamics above.
The corresponding mixture obeys
$\dot{\widetilde\rho}=\int\dd\Omega\,\dd\sigma\,
p_t(\Omega,\sigma)\Lclass[\rho_{\Omega,\sigma}]$. Let $\mathcal N_{t,s}$ be the quantum propagator from time $s$ to $t$.
To accumulate the local errors, differentiate the proposed mixture
after propagating it from $s$ to the final time: $\frac{\dd}{\dd s}\bigl(\mathcal N_{t,s}[\widetilde\rho(s)]\bigr)
 =\mathcal N_{t,s}\bigl[\dot{\widetilde\rho}(s)
 -\mathcal L_s[\widetilde\rho(s)]\bigr].$ 
Integrating from $s=0$ to $s=t$, and using
$\widetilde\rho(0)=\rho(0)$, gives Duhamel's formula,
\begin{equation}
 \rho(t)-\widetilde\rho(t)
 =\int_0^t\mathcal N_{t,s}
 \bigl[\mathcal L_s[\widetilde\rho(s)]
 -\dot{\widetilde\rho}(s)\bigr]\,\dd s.
 \label{eq:quadratic-Duhamel}
\end{equation}
Thus the final error is the accumulated local mismatch, propagated
by the quantum evolution. Trace-norm contractivity and the positivity
and normalization of $p_s$ give
\begin{align}
 \norm{\rho(t)-\widetilde\rho(t)}_1
 &\leq\int_0^t\dd s\int\dd\Omega\,\dd\sigma\,p_s(\Omega,\sigma)
 \norm{\mathcal L_s[\rho_{\Omega,\sigma}]
 -\mathcal L_s^{(\mathrm{cl})}[\rho_{\Omega,\sigma}]}_1
 \leq\frac{100(K+K_3)\zeta_{\mathrm{max}}^{3/2}}{\sqrt S}\,t=r\,t,
\end{align}
which proves Eq.~\eqref{eq:quadratic-main-bound}.
\end{proof}

To prove Lemma~\ref{lem:quadratic-comparison}, we require the following two intermediate lemmas.
\begin{lemma}[Bounds on spin operators acting on squeezed states]
\label{lem:spin-squeezed-estimates}
Define the operator $\hat a_\zeta\coloneqq(\hat S_u+i\zeta^{-1}\hat S_v)/\sqrt S$. For the squeezed states in Eq.~\eqref{eq:quadratic-packet} and $k$ an integer, the following bounds hold:
\begin{align}
 \norm{(S-\hat S_\Omega)^{k/2}|\Omega;\sigma\rangle}
 &\leq \sqrt{(2k-1)!!}\,(\zeta^{k/2}-1),
 && k\geq 1,
 \label{eq:quadratic-moments}\\
 \norm{(S+1-\hat S_\Omega)^{k/2}
       \hat a_\zeta|\Omega;\sigma\rangle}
 &\leq \sqrt{\frac{(2k+5)!!}{8}}\,
 \frac{(\zeta^2-1)\zeta^{(k-1)/2}}{S},
 && k\geq 0,
 \label{eq:quadratic-annihilator-bound}\\
 \norm{(\hat S_{\Theta}-S\Theta\cdot\Omega)^k|\Omega;\sigma\rangle}
 &\leq2^k\sqrt{(2k-1)!!}\,(S\zeta)^{k/2},
 && k\geq1,\quad \Theta\in\mathbb{S}^2.
 \label{eq:smooth-axis-moments}
\end{align}

\end{lemma}
\begin{proof}
In the basis $(S-\hat S_\Omega)|m\rangle=m|m\rangle$,
\begin{align}
 \frac{i(\hat S_u\hat S_v+\hat S_v\hat S_u)}{4S}
 &=\frac{(\hat S_u+i\hat S_v)^2-(\hat S_u-i\hat S_v)^2}{8S},\nonumber\\*
 \Bigl|\langle m+2|\frac{i(\hat S_u\hat S_v+\hat S_v\hat S_u)}{4S}|m\rangle\Bigr|
 &=\frac{\sqrt{(m+1)(m+2)(2S-m)(2S-m-1)}}{8S}
 \leq\frac{\sqrt{(m+1)(m+2)}}4\leq\frac{m+2}4.
 \label{eq:quadratic-squeezing-matrix-elements}
\end{align}
From Eq.~\eqref{eq:quadratic-squeezing-matrix-elements}, for $k\geq1$ and any vector $|\psi\rangle$,
\begin{align}
 &\Bigl|\langle \psi|\Bigl[\prod\nolimits_{j=1}^k(S+2j-1-\hat S_\Omega),
 \frac{i(\hat S_u\hat S_v+\hat S_v\hat S_u)}{4S}\Bigr]|\psi\rangle\Bigr|\nonumber\\*
 &\quad\leq\frac12\sum_{m=0}^{2S-2}(m+2)
 |\langle m|\psi\rangle\langle \psi|m+2\rangle|
 \Bigl[\prod\nolimits_{j=1}^k(m+2j+1)-\prod\nolimits_{j=1}^k(m+2j-1)\Bigr]\nonumber\\*
 &\quad\leq\frac k2\sum_{m=0}^{2S}|\langle m|\psi\rangle|^2
 \prod\nolimits_{j=1}^k(m+2j-1)
 \Bigl(\frac{m+2}{m+1}+\frac m{m+2k-1}\Bigr)
 \leq k\langle \psi|\prod\nolimits_{j=1}^k(S+2j-1-\hat S_\Omega)|\psi\rangle.
 \label{eq:quadratic-weighted-commutator}
\end{align}
The second inequality in Eq.~\eqref{eq:quadratic-weighted-commutator} uses $2|ab|\leq|a|^2+|b|^2$.
Integrating Eq.~\eqref{eq:quadratic-weighted-commutator} gives, for $1\leq\zeta'\leq\zeta$,
\begin{align}
 &\Bigl\|\Bigl[\prod\nolimits_{j=1}^k(S+2j-1-\hat S_\Omega)\Bigr]^{1/2}
 e^{\frac{i\log(\zeta/\zeta')}{4S}(\hat S_u\hat S_v+\hat S_v\hat S_u)}
 |\psi\rangle\Bigr\|
 \leq\Bigl(\frac\zeta\zeta'\Bigr)^{k/2}
 \Bigl\|\Bigl[\prod\nolimits_{j=1}^k(S+2j-1-\hat S_\Omega)\Bigr]^{1/2}|\psi\rangle\Bigr\|,
 \nonumber\\*
 &\Bigl\|\Bigl[\prod\nolimits_{j=1}^k(S+2j-1-\hat S_\Omega)\Bigr]^{1/2}
 |\Omega;\sigma\rangle\Bigr\|
 \leq\sqrt{(2k-1)!!}\,\zeta^{k/2}.
 \label{eq:quadratic-weighted-evolution}
\end{align}
For $k=0$, the first inequality in Eq.~\eqref{eq:quadratic-weighted-evolution} is an equality, with an empty product.

For $k\geq2$, Eq.~\eqref{eq:quadratic-moments} follows by integration from the coherent
state, writing $\sigma'$ for the covariance matrix with squeezing $\zeta'$:
\begin{align}
 \norm{(S-\hat S_\Omega)^{k/2}|\Omega;\sigma\rangle}
 &\leq\int_1^\zeta\frac{\dd\zeta'}{4S\zeta'}
 \Bigl\|\bigl[(S-\hat S_\Omega)^{k/2},
 \hat S_u\hat S_v+\hat S_v\hat S_u\bigr]|\Omega;\sigma'\rangle\Bigr\|\nonumber\\*
 &\leq\frac k2\int_1^\zeta\frac{\dd\zeta'}{\zeta'}
 \Bigl\|\Bigl[\prod\nolimits_{j=1}^k(S+2j-1-\hat S_\Omega)\Bigr]^{1/2}
 |\Omega;\sigma'\rangle\Bigr\|\nonumber\\*
 &\leq\frac k2\sqrt{(2k-1)!!}\int_1^\zeta\zeta'^{k/2-1}\,\dd\zeta'
 =\sqrt{(2k-1)!!}\,(\zeta^{k/2}-1).
 \label{eq:quadratic-moment-integration}
\end{align}
The second inequality in Eq.~\eqref{eq:quadratic-moment-integration} uses Eq.~\eqref{eq:quadratic-squeezing-matrix-elements} and
$(m+2)^{k/2}-m^{k/2}\leq k(m+2)^{k/2-1}$;
the last uses Eq.~\eqref{eq:quadratic-weighted-evolution}.
For $k=1$, Cauchy--Schwarz gives
\begin{align}
 \zeta\partial_\zeta\langle\Omega;\sigma|
 (S-\hat S_\Omega)|\Omega;\sigma\rangle
 &=-\frac{\operatorname{Re}\langle\Omega;\sigma|
 (\hat S_u+i\hat S_v)^2|\Omega;\sigma\rangle}{2S}\leq\Bigl[\langle\Omega;\sigma|(S-\hat S_\Omega)|\Omega;\sigma\rangle
 \bigl(1+\langle\Omega;\sigma|(S-\hat S_\Omega)|\Omega;\sigma\rangle\bigr)
 \Bigr]^{1/2},\nonumber\\*
 \norm{(S-\hat S_\Omega)^{1/2}|\Omega;\sigma\rangle}
 &\leq\frac{\sqrt\zeta-\zeta^{-1/2}}2\leq\sqrt\zeta-1,
 \label{eq:quadratic-half-moment}
\end{align}
where the last line of Eq.~\eqref{eq:quadratic-half-moment} integrates its first line from $\zeta=1$.
This proves Eq.~\eqref{eq:quadratic-moments}.

To prove Eq.~\eqref{eq:quadratic-annihilator-bound}, direct differentiation gives
\begin{align}
 2\zeta\partial_\zeta\bigl(\hat a_\zeta|\Omega;\sigma\rangle\bigr)
 ={}&\Bigl[\frac{i(\hat S_u\hat S_v+\hat S_v\hat S_u)}{2S}-1\Bigr]
 \hat a_\zeta|\Omega;\sigma\rangle+\frac{\hat a_\zeta^\dagger(S-\hat S_\Omega)
 +(S-\hat S_\Omega)\hat a_\zeta^\dagger}{2S}
 |\Omega;\sigma\rangle.
 \label{eq:quadratic-annihilator}
\end{align}
To bound the last term in Eq.~\eqref{eq:quadratic-annihilator}, we first compute
the nonzero matrix elements of $\hat S_{\Theta}$ in the basis
$(S-\hat S_\Omega)|m\rangle=m|m\rangle$,
\begin{align}
 \bigl|\langle m|(\hat S_{\Theta}-S\Theta\cdot\Omega)|m\rangle\bigr|
 &=|\Theta\cdot\Omega|\,m,
 &
 \bigl|\langle m+1|\hat S_{\Theta}|m\rangle\bigr|
 &=\frac{\sqrt{1-(\Theta\cdot\Omega)^2}}2
 \sqrt{(m+1)(2S-m)},
 \label{eq:smooth-axis-matrix-elements}
\end{align}
which are also used below. We then use
\begin{align}
 \hat a_{\zeta'}^\dagger
 &=\frac{1-\zeta'^{-1}}2\,\frac{\hat S_u+i\hat S_v}{\sqrt S}
 +\frac{1+\zeta'^{-1}}2\,\frac{\hat S_u-i\hat S_v}{\sqrt S},\label{eq:quadratic-annihilator-decomposition}\\*
 2m(2m-1)^2\prod\nolimits_{j=1}^k(m+2j-2)
 &\leq2(m+1)(2m+1)^2\prod\nolimits_{j=1}^k(m+2j)
 \leq8\prod\nolimits_{j=1}^{k+3}(m+2j-1).
 \label{eq:quadratic-source-coefficients}
\end{align}
Eqs.~\eqref{eq:smooth-axis-matrix-elements}, \eqref{eq:quadratic-annihilator-decomposition}, and~\eqref{eq:quadratic-source-coefficients}, together with convexity of the squared norm, give
\begin{align}
 &\Bigl\|\Bigl[\prod\nolimits_{j=1}^k(S+2j-1-\hat S_\Omega)\Bigr]^{1/2}
 \bigl[\hat a_{\zeta'}^\dagger(S-\hat S_\Omega)
 +(S-\hat S_\Omega)\hat a_{\zeta'}^\dagger\bigr]
 |\Omega;\sigma'\rangle\Bigr\|^2\nonumber\\*
 &\qquad \qquad \leq8\langle\Omega;\sigma'|
 \prod\nolimits_{j=1}^{k+3}(S+2j-1-\hat S_\Omega)
 |\Omega;\sigma'\rangle
 \leq8(2k+5)!!\,\zeta'^{k+3}.
 \label{eq:quadratic-annihilator-source}
\end{align}
The last inequality in Eq.~\eqref{eq:quadratic-annihilator-source} uses Eq.~\eqref{eq:quadratic-weighted-evolution}.
Finally, $\hat a_1|\Omega\rangle=0$ and Eqs.~\eqref{eq:quadratic-weighted-evolution}, \eqref{eq:quadratic-annihilator}, and \eqref{eq:quadratic-annihilator-source} give
\begin{align*}
 \norm{(S+1-\hat S_\Omega)^{k/2}\hat a_\zeta|\Omega;\sigma\rangle}
 &\leq\Bigl\|\Bigl[\prod\nolimits_{j=1}^k(S+2j-1-\hat S_\Omega)\Bigr]^{1/2}
 \hat a_\zeta|\Omega;\sigma\rangle\Bigr\|\\*
 &\leq\frac1S\sqrt{\frac{(2k+5)!!}{2}}\,
 \zeta^{(k-1)/2}\int_1^\zeta\zeta'\,\dd\zeta'
 =\sqrt{\frac{(2k+5)!!}{8}}\,
 \frac{(\zeta^2-1)\zeta^{(k-1)/2}}S,
\end{align*}
which proves Eq.~\eqref{eq:quadratic-annihilator-bound}.

Equations~\eqref{eq:smooth-axis-moments} and~\eqref{eq:smooth-axis-matrix-elements} give, for every integer $k\geq0$ and any vector $|\psi\rangle$,
\begin{align}
 &\Bigl\|\Bigl[\prod\nolimits_{\ell=1}^k
 (S+\ell-\hat S_\Omega)\Bigr]^{1/2}
 (\hat S_{\Theta}-S\Theta\cdot\Omega)|\psi\rangle\Bigr\|
 \nonumber\\*
 &\quad\leq\sqrt{2S}
 \bigl(|\Theta\cdot\Omega|+\sqrt{1-(\Theta\cdot\Omega)^2}\bigr)
 \Bigl\|\Bigl[\prod\nolimits_{\ell=1}^{k+1}
 (S+\ell-\hat S_\Omega)\Bigr]^{1/2}|\psi\rangle\Bigr\|
 \nonumber\\*
 &\quad\leq2\sqrt S
 \Bigl\|\Bigl[\prod\nolimits_{\ell=1}^{k+1}
 (S+\ell-\hat S_\Omega)\Bigr]^{1/2}|\psi\rangle\Bigr\|.
 \label{eq:smooth-weighted-axis}
\end{align}
The first inequality in Eq.~\eqref{eq:smooth-weighted-axis} applies Eq.~\eqref{eq:smooth-axis-matrix-elements} to the two off-diagonals and the diagonal
separately, using $m^2\leq2S(m+1)$; the second is Cauchy--Schwarz.
Iterating Eq.~\eqref{eq:smooth-weighted-axis} $k$ times and applying
Eq.~\eqref{eq:quadratic-weighted-evolution} gives
\begin{align*}
 \norm{(\hat S_{\Theta}-S\Theta\cdot\Omega)^k|\Omega;\sigma\rangle}
 &\leq2^kS^{k/2}
 \Bigl\|\Bigl[\prod\nolimits_{\ell=1}^{k}
 (S+\ell-\hat S_\Omega)\Bigr]^{1/2}|\Omega;\sigma\rangle\Bigr\|\\*
 &\leq2^kS^{k/2}
 \Bigl\|\Bigl[\prod\nolimits_{\ell=1}^{k}
 (S+2\ell-1-\hat S_\Omega)\Bigr]^{1/2}|\Omega;\sigma\rangle\Bigr\|
 \leq2^k\sqrt{(2k-1)!!}\,(S\zeta)^{k/2},
\end{align*}
which proves Eq.~\eqref{eq:smooth-axis-moments}.
\end{proof}

\begin{lemma}[Taylor remainder]
\label{lem:smooth-Taylor}
For the Hamiltonian in Eq.~\eqref{eq:smooth-H} and its quadratic
approximation in Eq.~\eqref{eq:smooth-Taylor},
\begin{equation}
 \norm{-i[\widehat H-\widehat H^{(2)}_\Omega,
 \rho_{\Omega,\sigma}]}_1
 \leq\frac{100K_3\zeta^{3/2}}{\sqrt S}.
 \label{eq:smooth-Taylor-bound}
\end{equation}
\end{lemma}
\begin{proof}
By Theorem~5.1 of Ref.~\cite{FriedlandLim2018}, any monomial of degree
$n=a+b+c$ admits a finite real decomposition of the following form:
\begin{equation}
 x^ay^bz^c=\sum_j w_j(\Theta_x^{(j)}x+\Theta_y^{(j)}y+\Theta_z^{(j)}z)^n,
 \qquad \Theta^{(j)}\in\mathbb{S}^2,
 \qquad \sum_j|w_j|\leq1.
 \label{eq:smooth-monomial-decomposition}
\end{equation}
For $n\geq3$, let $f_n(s,y)$ denote the second-order Taylor remainder of
$s^n$ about $y$, for which Taylor's theorem gives
\begin{align}
 f_n(s,y)&\coloneqq s^n-y^n-ny^{n-1}(s-y)
 -\tfrac12n(n-1)y^{n-2}(s-y)^2,
 \nonumber\\*
 |f_n(s,y)|&\leq\frac{n(n-1)(n-2)}6|s-y|^3,
 \qquad s,y\in[-1,1].
 \label{eq:smooth-power-Taylor}
\end{align}
Symmetric ordering preserves Eq.~\eqref{eq:smooth-monomial-decomposition}
as an operator identity, giving
$\mathrm{Sym}(\hat S_x^a\hat S_y^b\hat S_z^c)
=\sum_j w_j\hat S_{\Theta^{(j)}}^n$. 
Eqs.~\eqref{eq:smooth-H} and~\eqref{eq:smooth-Taylor} give
\begin{equation}
 \widehat H-\widehat H^{(2)}_\Omega
 =S\sum_{n=3}^{\infty}\sum_{a+b+c=n}\kappa_{a,b,c}(t)\sum_j w_j\,
 f_n\Bigl(\frac{\hat S_{\Theta^{(j)}}}S,\,\Theta^{(j)}\cdot\Omega\Bigr),
 \label{eq:smooth-remainder-identity}
\end{equation}
where $\Theta^{(j)}$ and $w_j$ depend on $(a,b,c)$. Therefore
\begin{align}
 \norm{-i[\widehat H-\widehat H^{(2)}_\Omega,\rho_{\Omega,\sigma}]}_1
 &\leq2\norm{(\widehat H-\widehat H^{(2)}_\Omega)|\Omega;\sigma\rangle}
 \nonumber\\*
 &\leq2S\sum_{n=3}^{\infty}\sum_{a+b+c=n}|\kappa_{a,b,c}(t)|\sum_j|w_j|\,
 \norm{f_n\Bigl(\frac{\hat S_{\Theta^{(j)}}}S,\,
 \Theta^{(j)}\cdot\Omega\Bigr)|\Omega;\sigma\rangle}
 \nonumber\\*
 &\leq\frac{2S}{6S^3}\sum_{n=3}^{\infty}n(n-1)(n-2)
 \sum_{a+b+c=n}|\kappa_{a,b,c}(t)|\sum_j|w_j|\,
 \norm{(\hat S_{\Theta^{(j)}}-S\Theta^{(j)}\cdot\Omega)^3
 |\Omega;\sigma\rangle}
 \nonumber\\*
 &\leq\frac1{3S^2}\sum_{n=3}^{\infty}n(n-1)(n-2)
 \sum_{a+b+c=n}|\kappa_{a,b,c}(t)|
 \sup_{\Theta\in\mathbb S^2}
 \norm{(\hat S_{\Theta}-S\Theta\cdot\Omega)^3|\Omega;\sigma\rangle}
 \nonumber\\*
 &\leq\frac{8\sqrt{15}}3\frac{K_3\zeta^{3/2}}{\sqrt S}
 \leq\frac{100K_3\zeta^{3/2}}{\sqrt S}.
 \label{eq:smooth-Taylor-chain}
\end{align}
The last inequality applies
Lemma~\ref{lem:spin-squeezed-estimates},
Eq.~\eqref{eq:smooth-axis-moments}, with $k=3$, together with the
definition of $K_3$ in Eq.~\eqref{eq:smooth-A-K}. 
\end{proof}
\begin{proof}[Proof of Lemma~\ref{lem:quadratic-comparison}]
Let $\mathcal L^{(2)}_{t,\Omega}$ be $\Lspin$ with $\widehat H$
replaced by $\widehat H^{(2)}_\Omega$. Eq.~\eqref{eq:smooth-Taylor-bound}
bounds the error in this replacement. We now compare
$\mathcal L^{(2)}_{t,\Omega}$ with $\Lclass$, keeping the Taylor
center fixed in the following calculation.
We start by subtracting the two generators. A rigid rotation gives
$\partial\rho_{\Omega,\sigma}/\partial\theta_\mu
=-i[\hat S_\mu,\rho_{\Omega,\sigma}]$, so the noise about $\Omega$
cancels exactly. The difference is
\begin{align}
 \mathcal L^{(2)}_{t,\Omega}[\rho_{\Omega,\sigma}]-\Lclass[\rho_{\Omega,\sigma}]
 ={}&-i\bigg[\widehat H^{(2)}_\Omega-\sum_\mu\omega_\mu\hat S_\mu,
 \rho_{\Omega,\sigma}\bigg]
 -\sum_{\mu,\nu}(\Sigma_{\mathrm{shape}})_{\mu\nu}
 \frac{\partial\rho_{\Omega,\sigma}}{\partial\sigma_{\mu\nu}}
 \nonumber\\*
 &+\frac12\sum_{\mu,\nu}
 \left[(\Tr(\Sigma_{\mathrm{diff}})-\gamma)(\Omega_\perp)_{\mu\nu}
 -(\Sigma_{\mathrm{diff}})_{\mu\nu}\right]
 [\hat S_\mu,[\hat S_\nu,\rho_{\Omega,\sigma}]].
 \label{eq:quadratic-generator-difference}
\end{align}

The quadratic Hamiltonian separates into rotation, tangent squeezing, and terms
containing longitudinal fluctuations:
\begin{align}
 \widehat H^{(2)}_\Omega={}&\sum_\mu\omega_\mu\hat S_\mu
 +\frac{A_{uu}-A_{vv}}4(\hat S_u^2-\hat S_v^2)
 +\frac{A_{uv}}2(\hat S_u\hat S_v+\hat S_v\hat S_u)
 +\widehat H_{\mathrm{rem}}\quad\text{(up to a scalar)},
 \label{eq:quadratic-H-expansion}\\*
 \widehat H_{\mathrm{rem}}={}&-\frac12\sum_{a=u,v}A_{a\Omega}
 \bigl[(S-\hat S_\Omega)\hat S_a+\hat S_a(S-\hat S_\Omega)\bigr]
 +\frac34A_{\Omega\Omega}(S-\hat S_\Omega)^2.
 \label{eq:quadratic-H-remainder-definition}
\end{align}

To evaluate the second term in Eq.~\eqref{eq:quadratic-generator-difference}, recall that $\varphi$ is the angle of the
squeezed axis $u$ in a fixed tangent frame at $\Omega$. The two rates
$\dot\zeta$ and $\dot\varphi$ describe only the shape change:
$\Sigma_{\mathrm{shape}}
=(\partial_\zeta\sigma)\dot\zeta+(\partial_\varphi\sigma)\dot\varphi$,
with $\Omega$ fixed. They let us replace the matrix derivative in
Eq.~\eqref{eq:quadratic-generator-difference} by
$\dot\zeta\,\partial_\zeta\rho_{\Omega,\sigma}
+\dot\varphi\,\partial_\varphi\rho_{\Omega,\sigma}$.
In the frame $(u,v)$, Eqs.~\eqref{eq:quadratic-packet} and \eqref{eq:quadratic-split} give

\begin{align}
 (\Sigma_{\mathrm{diff}})_{uu}
 &=\gamma-\frac{K}{S\zeta_{\mathrm{max}}}\frac{1-\zeta^{-2}}{1-\zeta_{\mathrm{max}}^{-2}},
 &
 (\Sigma_{\mathrm{diff}})_{vv}
 &=\gamma+\frac{K}{S\zeta_{\mathrm{max}}}\frac{\zeta^2-1}{1-\zeta_{\mathrm{max}}^{-2}}.
 \label{eq:quadratic-diffusion-frame}
\end{align}
Differentiating $\sigma$ in Eq.~\eqref{eq:quadratic-packet} and using
Eq.~\eqref{eq:quadratic-split} gives the shape rates
\begin{align}
 \frac{\dot\zeta}{2\zeta}
 &=-SA_{uv}-\frac{K}{\zeta_{\mathrm{max}}}\frac{\zeta-\zeta^{-1}}{1-\zeta_{\mathrm{max}}^{-2}},
 &
 \dot\varphi
 &=-\frac S2(A_{vv}-A_{uu})\frac{\zeta^2+1}{\zeta^2-1}.
 \label{eq:quadratic-shape-frame}
\end{align}
For $1\leq\zeta\leq\zeta_{\mathrm{max}}$,
$(\Sigma_{\mathrm{diff}})_{uu}\geq\gamma-K/(S\zeta_{\mathrm{max}})\geq0$
and $(\Sigma_{\mathrm{diff}})_{vv}\geq\gamma$.
At $\zeta=\zeta_{\mathrm{max}}$, $\dot\zeta/(2\zeta)=-SA_{uv}-K\leq0$, so shape
evolution cannot increase the squeezing beyond $\zeta_{\mathrm{max}}$.

Substituting Eqs.~\eqref{eq:quadratic-H-expansion}, \eqref{eq:quadratic-diffusion-frame}, and \eqref{eq:quadratic-shape-frame} into
Eq.~\eqref{eq:quadratic-generator-difference}, using
$\partial_\zeta\rho_{\Omega,\sigma}
=i[\hat S_u\hat S_v+\hat S_v\hat S_u,\rho_{\Omega,\sigma}]/(4S\zeta)$
from Eq.~\eqref{eq:quadratic-packet}, gives

\begin{align}
 \mathcal L^{(2)}_{t,\Omega}[\rho_{\Omega,\sigma}]-\Lclass[\rho_{\Omega,\sigma}]
 ={}&\underbrace{-i[\widehat H_{\mathrm{rem}},\rho_{\Omega,\sigma}]}_{(\star)}
 +\underbrace{\frac{iS(A_{vv}-A_{uu})}{2}
 \left[\frac{\hat S_u^2-\hat S_v^2}{2S}
 -\frac{\zeta^2+1}{\zeta^2-1}\hat S_\Omega,\rho_{\Omega,\sigma}\right]}_{(\star\star)}
 \nonumber\\*
 &+\underbrace{\frac{K(\zeta^2-1)}{2\zeta_{\mathrm{max}}(1-\zeta_{\mathrm{max}}^{-2})}
 \left(\frac{[\hat S_u,[\hat S_u,\rho_{\Omega,\sigma}]]}{S}
 -\frac{[\hat S_v,[\hat S_v,\rho_{\Omega,\sigma}]]}{S\zeta^2}
 +4\frac{\partial\rho_{\Omega,\sigma}}{\partial\zeta}\right)}_{(\star\star\star)}
 \label{eq:quadratic-residual}
\end{align}
We now use Eqs.~\eqref{eq:quadratic-moments} and \eqref{eq:quadratic-annihilator-bound} to bound the three terms in Eq.~\eqref{eq:quadratic-residual}.

\begin{align}
 \norm{(\star)}_1
 &\leq2\norm{\widehat H_{\mathrm{rem}}|\Omega;\sigma\rangle}
 \nonumber\\*
 &\leq\sum_{a=u,v}|A_{a\Omega}|
 \norm{[(S-\hat S_\Omega)\hat S_a+\hat S_a(S-\hat S_\Omega)]
 |\Omega;\sigma\rangle}
 +\frac32|A_{\Omega\Omega}|
 \norm{(S-\hat S_\Omega)^2|\Omega;\sigma\rangle}
 \nonumber\\*
 &\leq\frac{4\sqrt2K}{3\sqrt S}
 \left[\langle\Omega;\sigma|
 \bigl(8(S-\hat S_\Omega)^3+4(S-\hat S_\Omega)^2
 +6(S-\hat S_\Omega)+1\bigr)|\Omega;\sigma\rangle\right]^{1/2}
 +\frac{2K}{S}\norm{(S-\hat S_\Omega)^2|\Omega;\sigma\rangle}
 \nonumber\\*
 &\leq\frac{4\sqrt{278}}3\frac{K\zeta^{3/2}}{\sqrt S}
 +2\sqrt{105}\frac{K\zeta^2}{S}
 \leq\frac{24K\zeta^{3/2}}{\sqrt S}+\frac{21K\zeta^2}{S}.
 \label{eq:quadratic-H-remainder}
\end{align}
The first two inequalities in Eq.~\eqref{eq:quadratic-H-remainder} use the rank-one trace norm and
Eq.~\eqref{eq:quadratic-H-remainder-definition}.
The next uses Eq.~\eqref{eq:smooth-axis-matrix-elements} with $\Theta=u,v$,
$\sum_{a=u,v}|A_{a\Omega}|\leq\sqrt2\norm{A}_\infty$,
and $S\norm{A}_\infty\leq4K/3$, following from Eq.~\eqref{eq:smooth-A-K}.
The final bounds use Eq.~\eqref{eq:quadratic-moments} with $k=1,2,3,4$.

\begin{align}
 \norm{(\star\star)}_1
 &=\frac{S|A_{vv}-A_{uu}|}{2(1-\zeta^{-2})}
 \left\|\left[\hat a_\zeta^\dagger\hat a_\zeta
 +\frac{(1+\zeta^{-2})(S-\hat S_\Omega)^2
 -2\zeta^{-1}(S-\hat S_\Omega)}{2S},
 \rho_{\Omega,\sigma}\right]\right\|_1
 \nonumber\\*
 &\leq\frac{2K}{1-\zeta^{-2}}
 \left(\norm{\hat a_\zeta^\dagger\hat a_\zeta|\Omega;\sigma\rangle}
 +\frac{1+\zeta^{-2}}{2S}\norm{(S-\hat S_\Omega)^2|\Omega;\sigma\rangle}
 +\frac{\zeta^{-1}}S\norm{(S-\hat S_\Omega)|\Omega;\sigma\rangle}\right)
 \nonumber\\*
 &\leq\frac{2K}{1-\zeta^{-2}}
 \left(\sqrt2\norm{(S+1-\hat S_\Omega)^{1/2}
 \hat a_\zeta|\Omega;\sigma\rangle}
 +\frac{1+\zeta^{-2}}{2S}\norm{(S-\hat S_\Omega)^2|\Omega;\sigma\rangle}
 +\frac{\zeta^{-1}}S\norm{(S-\hat S_\Omega)|\Omega;\sigma\rangle}\right)
 \nonumber\\*
 &\leq\frac{2K}{S}
 \left[\frac{\sqrt{105}}2(2\zeta^2+1)+\frac{\sqrt3\,\zeta}{\zeta+1}\right]
 \leq(3\sqrt{105}+\sqrt3)\frac{K\zeta^2}{S}
 \leq\frac{33K\zeta^2}{S}.
 \label{eq:quadratic-angle-bound}
\end{align}
The equality in Eq.~\eqref{eq:quadratic-angle-bound} uses
$\hat S_u^2+\hat S_v^2+\hat S_\Omega^2=S(S+1)$;
scalar terms disappear inside the commutator.
The first inequality uses 
$S|A_{vv}-A_{uu}|\leq2K$, from Eq.~\eqref{eq:smooth-A-K}.
The second uses Eqs.~\eqref{eq:quadratic-annihilator-decomposition} and \eqref{eq:smooth-axis-matrix-elements}, which give
$\norm{\hat a_\zeta^\dagger|\psi\rangle}
\leq\sqrt2\norm{(S+1-\hat S_\Omega)^{1/2}|\psi\rangle}$.
The final bounds use Eq.~\eqref{eq:quadratic-moments} with $k=2,4$ and Eq.~\eqref{eq:quadratic-annihilator-bound} with $k=1$.

\begin{align}
 \norm{(\star\star\star)}_1
 &=\frac{K(\zeta^2-1)}
 {2\zeta_{\mathrm{max}}(1-\zeta_{\mathrm{max}}^{-2})}
 \left\|\hat a_\zeta^2\rho_{\Omega,\sigma}
 +\rho_{\Omega,\sigma}(\hat a_\zeta^\dagger)^2
 -\hat a_\zeta\rho_{\Omega,\sigma}\hat a_\zeta
 -\hat a_\zeta^\dagger\rho_{\Omega,\sigma}\hat a_\zeta^\dagger\right\|_1
 \nonumber\\*
 &\leq K\zeta
 \left(\norm{\hat a_\zeta^2|\Omega;\sigma\rangle}
 +\norm{\hat a_\zeta|\Omega;\sigma\rangle}
 \norm{\hat a_\zeta^\dagger|\Omega;\sigma\rangle}\right)
 \nonumber\\*
 &\leq\sqrt2K\zeta
 \left(\norm{(S+1-\hat S_\Omega)^{1/2}
 \hat a_\zeta|\Omega;\sigma\rangle}
 +\norm{\hat a_\zeta|\Omega;\sigma\rangle}
 \left[1+\norm{(S-\hat S_\Omega)^{1/2}|\Omega;\sigma\rangle}^2
 \right]^{1/2}\right)
 \nonumber\\*
 &\leq\frac{\sqrt{105}+\sqrt{15}}2
 \frac{K\zeta(\zeta^2-1)}S
 \leq\frac{8K\zeta^3}{S}.
 \label{eq:quadratic-squeezing-bound}
\end{align}
The equality in Eq.~\eqref{eq:quadratic-squeezing-bound} expands $\hat a_\zeta$ using Eq.~\eqref{eq:quadratic-annihilator-decomposition} and its adjoint, and uses
$\partial_\zeta\rho_{\Omega,\sigma}
=i[\hat S_u\hat S_v+\hat S_v\hat S_u,\rho_{\Omega,\sigma}]/(4S\zeta)$
from Eq.~\eqref{eq:quadratic-packet}.
The first inequality in Eq.~\eqref{eq:quadratic-squeezing-bound} uses the rank-one trace norm and
$(\zeta-\zeta^{-1})/(\zeta_{\mathrm{max}}-\zeta_{\mathrm{max}}^{-1})\leq1$.
The second uses Eq.~\eqref{eq:quadratic-annihilator-decomposition} and its adjoint, together with Eq.~\eqref{eq:smooth-axis-matrix-elements}, for $\hat a_\zeta$ and
$\hat a_\zeta^\dagger$.
The final bounds use Eq.~\eqref{eq:quadratic-moments} with $k=1$, Eq.~\eqref{eq:quadratic-annihilator-bound} with $k=0,1$, and
$1+(\sqrt\zeta-1)^2\leq\zeta$.

Combining Eqs.~\eqref{eq:quadratic-H-remainder}, \eqref{eq:quadratic-angle-bound}, \eqref{eq:quadratic-squeezing-bound}, and \eqref{eq:smooth-Taylor-bound} gives
\begin{align}
 \norm{\Lspin[\rho_{\Omega,\sigma}]-\Lclass[\rho_{\Omega,\sigma}]}_1
 &\leq
 \norm{-i[\widehat H-\widehat H^{(2)}_\Omega,\rho_{\Omega,\sigma}]}_1
 +\norm{\mathcal L^{(2)}_{t,\Omega}[\rho_{\Omega,\sigma}]
 -\Lclass[\rho_{\Omega,\sigma}]}_1
 \nonumber\\*
 &\leq K\left(\frac{24\zeta^{3/2}}{\sqrt S}
 +\frac{54\zeta^2+8\zeta^3}{S}\right)
 +\frac{100K_3\zeta^{3/2}}{\sqrt S}
 \leq\frac{100(K+K_3)\zeta_{\mathrm{max}}^{3/2}}{\sqrt S}.
 \label{eq:smooth-local-combination}
\end{align}
The last inequality in Eq.~\eqref{eq:smooth-local-combination} uses $\zeta\leq\zeta_{\mathrm{max}}$ and
$\zeta_{\mathrm{max}}^3\leq S$, proving
Eq.~\eqref{eq:quadratic-local-bound}. The calculation is
written for $\zeta>1$; the bound extends to coherent labels by
smoothness in $\sigma$.
\end{proof}

\section{Fidelity of records in the Lindblad regime}
\label{sec:fidelity}
In this section, we estimate the overlap between two branches specified by
sequences of spin-coherent-state projections. 
Using the effective Lindblad description, we obtain a
simple estimate in which the branch overlap is exponentially suppressed by the
time-integrated separation between the two histories. We then test this prediction numerically.

\subsection{Factorization in time from effective Lindblad}
The branch overlap can be expressed using the effective Lindblad description as
\begin{align}
    \abs{\braket{\phi_{\bm \Theta}}{\phi_{\bm \Omega}}}^2&=\frac{1}{\abs{c_{\bm \Omega}}^2\abs{c_{\bm \Theta}}^2}\abs{\tr(\hat \Omega_M U_M\cdots\hat{\Omega}_{2} U_{2} \hat \Omega_1 U_1 \dyad{\psi(0)} U_1^\dagger \hat{\Theta}_1 U_{2}^\dagger \hat{\Theta}_{2}\cdots U_M^\dagger\hat{\Theta}_M)}^2\\
    &\approx \frac{1}{\abs{c_{\bm \Omega}}^2\abs{c_{\bm \Theta}}^2}\abs{
    \tr(\hat \Omega_M \mathcal{N}_M\Big[\cdots\hat{\Omega}_{2} \mathcal{N}_2\Big[ \hat \Omega_1 \mathcal{N}_1\big[ \dyad{\Omega_0}{\Theta_0}\big]  \hat{\Theta}_1 \Big] \hat{\Theta}_{2}\cdots \Big]\hat{\Theta}_M)}^2
    \\&=  \frac{1}{\abs{c_{\bm \Omega}}^2\abs{c_{\bm \Theta}}^2} \abs{\braket{\Theta_M}{\Omega_M}}^2\prod_{m=1}^M \abs{\bra{\Omega_m}\mathcal{N}_m\big[\dyad{\Omega_{m-1}}{\Theta_{m-1}}\big]\ket{\Theta_m}}^2,
\end{align}
where $\mathcal{N}_m=\mathcal{T} \exp\int_{t_{m-1}}^{t_m} \dd t \mathcal{L}_t$. In this approximation, the effective Lindblad evolution applies not only to the initial state, but also after each intermediate projection into the coherent states $\hat{\Omega}_m$ and $\hat{\Theta}_m$.
Under the same Lindbladian approximation, the normalization coefficients can be expanded as
\begin{align}
    \abs{c_{\bm \Omega}}^2\approx \prod_{m=1}^M \bra{\Omega_m}\mathcal{N}_m\big[\hat{\Omega}_{m-1}\big]\ket{\Omega_m}, &&     \abs{c_{\bm \Theta}}^2\approx \prod_{m=1}^M \bra{\Theta_m}\mathcal{N}_m\big[\hat{\Theta}_{m-1}\big]\ket{\Theta_m}.
\end{align}

Thus, the fidelity of records factorizes in time:
\begin{equation}
\label{eq:fidelityofrecords}
      \abs{\braket{R_{\bm \Theta}}{R_{\bm \Omega}}}^2 = \frac{ \abs{\braket{\phi_{\bm \Theta}}{\phi_{\bm \Omega}}}^2}{ \abs{\braket{\Theta_M}{\Omega_M}}^2} \approx \prod_{m=1}^M \frac{\abs{\bra{\Omega_m}\mathcal{N}_m\big[\dyad{\Omega_{m-1}}{\Theta_{m-1}}\big]\ket{\Theta_m}}^2}{\bra{\Theta_m}\mathcal{N}_m\big[{\hat\Theta_{m-1}}\big]\ket{\Theta_m}\bra{\Omega_m}\mathcal{N}_m\big[\hat{\Omega}_{m-1}\big]\ket{\Omega_m}}.
\end{equation}

\subsection{Off-diagonal decay under Lindblad dynamics}
To estimate each factor in Eq.~\eqref{eq:fidelityofrecords}, we first present an elementary result for coherent states in flat phase space undergoing pure isotropic diffusion.

\begin{prop}
\label{prop:cohstatesL}
Let $\alpha,\beta,\alpha',\beta'\in\mathbb{C}$ be the centers of coherent states of a
harmonic oscillator, i.e., eigenstates of the annihilation operator $ \hat a=(\hat x+i\hat p)/{\sqrt{2\hbar}}$, $\hat a\ket{\alpha}=\alpha \ket{\alpha}$. Consider evolution under pure isotropic diffusion $\mathcal{N}=\exp(t\mathcal{L})$, with
$\mathcal{L}[\rho]
    =
    -
    \frac{\gamma}{2\hbar^2}
    \left(
        [\hat p,[\hat p,\rho]]
        +
        [\hat x,[\hat x,\rho]]
    \right)$. Then
\begin{equation}
    \frac{
    \abs{
    \bra{\alpha'}
    \mathcal N[\dyad{\alpha}{\beta}]
    \ket{\beta'}
    }^2
    }{
    \bra{\alpha'}
    \mathcal N[\dyad{\alpha}{\alpha}]
    \ket{\alpha'}
    \bra{\beta'}
    \mathcal N[\dyad{\beta}{\beta}]
    \ket{\beta'}
    }
    =
    \exp[
        -
        \frac{\gamma t/\hbar}{1+\gamma t/\hbar}
        \left(
            \abs{\alpha-\beta}^2
            +
            \abs{\alpha'-\beta'}^2
        \right)
    ].
\end{equation}
\end{prop}

\begin{proof}
The channel \(\mathcal N\) is simply a Gaussian average over displacements,
\begin{align}
\label{eq:gaussianint}
    \mathcal N[\rho]
    =
    \int_{\mathbb{C}}
    \frac{\dd\xi}{\pi\nu}
    \exp[-\frac{\abs{\xi}^2}{\nu}]
    D(\xi)\rho D(\xi)^\dagger,\qquad \text{where} \qquad
    D(\xi)=\exp(\xi\hat a^\dagger-\xi^*\hat a)
\end{align}
is the displacement operator and $\nu=\gamma t/\hbar$. To see this, evaluate the action on the displacement-operator basis $D(\eta)$,
\[
\begin{aligned}
\mathcal L[D(\eta)]
=
-\frac{\gamma}{2\hbar^2}
\left(
[\hat x,[\hat x,D(\eta)]]
+
[\hat p,[\hat p,D(\eta)]]
\right)
=
-\frac{\gamma}{2\hbar^2}
\left(
2\hbar(\mathrm{Re}\,\eta)^2
+
2\hbar(\mathrm{Im}\,\eta)^2
\right)D(\eta)
=
-\frac{\gamma}{\hbar}|\eta|^2D(\eta).
\end{aligned}
\]
Thus \(D(\eta)\) is an eigenoperator of \(\mathcal L\) with eigenvalue $-\nu\abs{\eta}^2/t$. Therefore
\[
\begin{aligned}
e^{t\mathcal L}[D(\eta)]
=
e^{-\nu|\eta|^2}D(\eta)
=
\int_{\mathbb C}
\frac{\dd^2\xi}{\pi\nu}
\exp[-\frac{|\xi|^2}{\nu}]
\exp(\eta^*\xi-\eta\xi^*) D(\eta)
&=
\int_{\mathbb C}
\frac{\dd^2\xi}{\pi\nu}
\exp[-\frac{|\xi|^2}{\nu}]
D(\xi)D(\eta)D(\xi)^\dagger .
\end{aligned}
\]
Inserting the off-diagonal operator \(\dyad{\alpha}{\beta}\) into Eq.~\eqref{eq:gaussianint} yields
\begin{align*}
    \bra{\alpha'}
    \mathcal N[\dyad{\alpha}{\beta}]
    \ket{\beta'}
    &=
    \int_{\mathbb{C}}
    \frac{\dd\xi}{\pi\nu}
    e^{-\abs{\xi}^2/\nu}
    \bra{\alpha'}D(\xi)\ket{\alpha}
    \bra{\beta}D(\xi)^\dagger\ket{\beta'}
\\
    &=
    \int_{\mathbb{C}}
    \frac{\dd^2\xi}{\pi\nu}
    \exp[
        -\frac{\abs{\xi}^2}{\nu}
        -\abs{\xi}^2
        -\frac{
            \abs{\alpha'}^2+\abs{\alpha}^2
            +\abs{\beta}^2+\abs{\beta'}^2
        }{2}
        +\alpha'^*\alpha+\beta^*\beta'
        +\xi(\alpha'^*-\beta^*)
        +\xi^*(\beta'-\alpha)
    ]
    \\
    &=
    \frac{1}{1+\nu}
    \exp[
        -\frac{
            \abs{\alpha'}^2+\abs{\alpha}^2
            +\abs{\beta}^2+\abs{\beta'}^2
        }{2}
        +\alpha'^*\alpha+\beta^*\beta'
        +\frac{\nu}{1+\nu}
        (\alpha'^*-\beta^*)(\beta'-\alpha)
    ],
\end{align*}
where we used the coherent-state identity 

\[
    \bra{z}D(\xi)\ket{w}
    =
    \exp[
        -\frac{\abs{z}^2+\abs{w}^2+\abs{\xi}^2}{2}
        +z^*w
        +\xi z^*
        -\xi^*w
    ],
\]
and performed the Gaussian integral.
Thus
\begin{equation}
    \abs{
    \bra{\alpha'}
    \mathcal N[\dyad{\alpha}{\beta}]
    \ket{\beta'}
    }^2
     =
    \frac{1}{(1+\nu)^2}
    \exp[
        -
        \frac{
            \abs{\alpha'-\alpha}^2
            +
            \abs{\beta'-\beta}^2
        }{1+\nu}
        -
        \frac{\nu}{1+\nu}
        \left(
            \abs{\alpha-\beta}^2
            +
            \abs{\alpha'-\beta'}^2
        \right)
    ].
\end{equation}
The diagonal terms follow by setting \(\beta=\alpha\) and
\(\beta'=\alpha'\):
\[
    \bra{\alpha'}
    \mathcal N[\dyad{\alpha}{\alpha}]
    \ket{\alpha'}
    =
    \frac{1}{1+\nu}
    \exp[
        -\frac{\abs{\alpha'-\alpha}^2}{1+\nu}
    ] \qquad\qquad   \bra{\beta'}
    \mathcal N[\dyad{\beta}{\beta}]
    \ket{\beta'}
    =
    \frac{1}{1+\nu}
    \exp[
        -\frac{\abs{\beta'-\beta}^2}{1+\nu}
    ],
\]
which combine to give the stated expression.
\end{proof}

 An analogous result holds for spin coherent states under pure isotropic diffusion $\mathcal{N}=\exp( t\sum_\mu \frac{\gamma}{2}[\hat{S}_\mu,[\hat{S}_\mu,\,\cdot\,]])$ in the spherical phase space. The isotropic diffusion on the sphere can be approximated by an average over Gaussian rotations, for short times $\sqrt{\gamma t}\ll 1$, and the overlap between coherent states is approximately Gaussian for large $S$, yielding the same result. The resulting formula follows by making the replacements
\[
    \hbar\rightarrow \frac{1}{S}\qquad\qquad\text{and}
    \qquad\qquad
    \abs{\alpha-\beta}^2
    \rightarrow
    \frac{S}{2}d^{(2)}(\Omega,\Theta)
\]
in Proposition~\ref{prop:cohstatesL},
where $d^{(2)}(\Omega,\Theta)=2(1-\cos d(\Omega,\Theta))\approx d(\Omega,\Theta)^2$. We obtain

\begin{align}
\label{eq:decayoffdiags}
    &\frac{
    \abs{
    \bra{\Omega'}
    \mathcal{N}[\dyad{\Omega}{\Theta}]
    \ket{\Theta'}
    }^2
    }{
    \bra{\Omega'}
    \mathcal{N}[\dyad{\Omega}{\Omega}]
    \ket{\Omega'}
    \bra{\Theta'}
    \mathcal{N}[\dyad{\Theta}{\Theta}]
    \ket{\Theta'}
    }
 \approx
    \exp[
        -
        \frac{\gamma S^2 t}{1+\gamma S t}
        \frac{1}{2}
        \left(
            d^{(2)}(\Omega,\Theta)
            +
            d^{(2)}(\Omega',\Theta')
        \right)
    ].
\end{align}

\subsection{Full fidelity of records}
Assuming equal temporal spacings $\Delta t=t_m-t_{m-1}$, Eq.~\eqref{eq:decayoffdiags} reduces Eq.~\eqref{eq:fidelityofrecords} to a simple expression for the fidelity of records in the case where there is no Hamiltonian term in the Lindbladian:
\begin{equation}
\label{eq:fidofreccords}
    \abs{\braket{R_{\bm \Theta}}{R_{\bm \Omega}}}^2  \approx \exp(-\Gamma D^{(2)}_{\bm \Omega,\bm \Theta}),
    \end{equation}where 
    \begin{align}
     \Gamma\coloneq \frac{\gamma S^2 \Delta t}{1+\gamma S\Delta t}&& \text{and}&& D^{(2)}_{\bm \Omega,\bm \Theta}\coloneqq\frac{1}{2}\sum_{m=1}^M \left[d^{(2)}(\Omega_{m-1},\Theta_{m-1})+d^{(2)}(\Omega_{m},\Theta_{m})\right].
\end{align}

Although the derivation above assumed that there was no Hamiltonian term in the effective Lindbladian, and hence no term acting solely on $\mathcal{S}$ in the global Hamiltonian, Eq.~\eqref{eq:fidofreccords} still describes the fidelity of branches in the presence of such a term. We show this numerically by considering the unitary evolution under the disordered kicked-top Hamiltonian, starting in the pure state $\ket{\psi(0)}= \ket{\theta_0, \varphi_0}^{\otimes (N-k)}\otimes [\tfrac{1}{\sqrt{2}}(\ket{01}-\ket{10})]^{\otimes k/2}$. After evolving for one period, $\Delta t=1$, we compute the Husimi distribution
$\
Q_1(\Omega)
=
\bra{\Omega}\tr_{\mathcal P}(\dyad{\psi(\Delta t)})\ket{\Omega},
$
and sample $\Omega_1\sim Q_1$. Note that the Husimi function is normalized with respect to the measure $\dd\Omega=\frac{2S+1}{4\pi}\sin\theta\,\dd\theta\,\dd\varphi$, i.e., $\int \dd\Omega\, Q_1(\Omega)=1$.

We then define the normalized branch conditioned on this outcome as
\[
c_{\Omega_1}\ket{\phi_{\Omega_1}}
=
(\dyad{\Omega_1}\otimes\mathds{1}_{\mathcal P})\ket{\psi(\Delta t)},
\qquad
|c_{\Omega_1}|^2=Q_1(\Omega_1).
\]
We evolve this branch for another period, define
\[
Q_2(\Omega)
=
\bra{\Omega}\tr_{\mathcal P}(\dyad{\phi_{\Omega_1}(\Delta t)})\ket{\Omega},
\]
sample $\Omega_2\sim Q_2$, and project again to obtain $\ket{\phi_{(\Omega_1,\Omega_2)}}$. Iterating this procedure for $M$ steps gives a sampled branch $\ket{\phi_{\bm\Omega}}$ with probability density
\[
|c_{\bm\Omega}|^2
=
Q_1(\Omega_1)Q_2(\Omega_2)\cdots Q_M(\Omega_M).
\]

\begin{figure}
    \centering
    \includegraphics[width=1\linewidth]{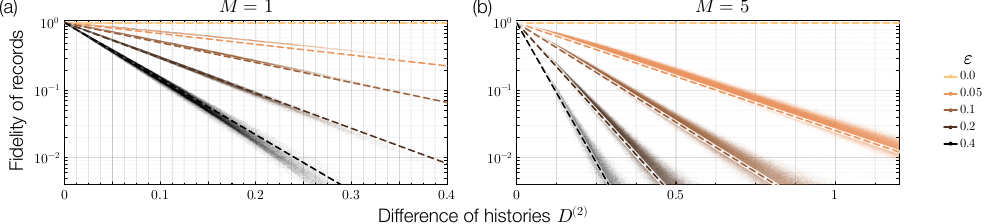}
    \caption{Fidelity of records for couplings sampled from GOE vs cumulative squared distance between histories $D^{(2)}$. Each dot is a pair of independently sampled branches. Dashed lines mark the decay $e^{-\Gamma D^{(2)}}$. (a) $M=1$, (b) $M=5$ ($N=100$,  $k=4$, $\kappa=4.2$, $p=\pi/2$, $\theta_0=0.5$, $\varphi_0=1.1$, $\Delta t=1$) }
    \label{fig:S5}
\end{figure}

Fig.~\ref{fig:S5} shows the fidelity of records for independently sampled pairs of branches with $\hat K_\mu$ drawn from the GOE. The numerical results agree well with Eq.~\eqref{eq:fidofreccords}. The same procedure is shown in Fig.~3 of the main text with $\hat K_\mu$ taken to be random Heisenberg couplings. In that case the scatter is larger, but Eq.~\eqref{eq:fidofreccords} still tracks the lower end of the distribution.

\bibliography{references}